\newenvironment{talign}
 {\align}
 {\endalign}
\def\BibTeX{{\rm B\kern-.05em{\sc i\kern-.025em b}\kern-.08em
    T\kern-.1667em\lower.7ex\hbox{E}\kern-.125emX}}
\newtheoremstyle{slanted}
{0em plus 0em minus 0em}
  {0em plus 0em minus 0em}
  {\em}
  {}
  {\bfseries}
  {.}
  { }
  {}
\theoremstyle{slanted}
\theoremstyle{slanted}
\newtheorem{definition}{Definition}
\theoremstyle{slanted}
\newtheorem{theorem}{Theorem}
\theoremstyle{slanted}
\theoremstyle{slanted}
\theoremstyle{slanted}
\theoremstyle{slanted}
\newtheorem{lemma}{Lemma}
\title{Enhancing Data Processing Efficiency in Blockchain Enabled Metaverse over Wireless Communications}
\author{\IEEEauthorblockN{Liangxin Qian, \emph{Graduate Student Member, IEEE} and Jun Zhao, \emph{Member, IEEE}
}\thanks{The authors are with the College of Computing and Data Science (CCDS) at Nanyang Technological University, Singapore. Email: qian0080@e.ntu.edu.sg, junzhao@ntu.edu.sg.
\newline \indent
A 10-page shorter conference version \cite{qian2024data} is accepted by the ACM MobiHoc 2024. Differences between the two versions are discussed in detail in this journal submission. 
\newline \indent
This research is supported partly by Singapore Ministry of Education Academic Research Fund Tier 1 RT5/23, Tier 1 RG90/22, Nanyang Technological University (NTU)-Wallenberg AI, Autonomous Systems and Software Program (WASP) Joint Project, Imperial-Nanyang Technological University Collaboration Fund INCF-2024-008 and Fund INCF-2025-009, Seatrium New Energy Laboratory Fund Grant 03INS002302C130 (NTU-Seatrium Collaboration). This research is also supported by the National Research Foundation, Prime Minister’s Office, Singapore
under its Campus for Research Excellence and Technological Enterprise (CREATE) programme.
}}
\begin{document}
\maketitle

\begin{abstract}
In the rapidly evolving landscape of the Metaverse, enhanced by blockchain technology, the efficient processing of data has emerged as a critical challenge, especially in wireless communication systems. Addressing this challenge, our paper introduces the innovative concept of data processing efficiency (DPE), aiming to maximize processed bits per unit of resource consumption in blockchain-empowered Metaverse environments. To achieve this, we propose the \underline{D}PE-\underline{A}ware \underline{U}ser Association and \underline{R}esource Allocation (DAUR) algorithm, a tailored optimization framework for blockchain-enabled Metaverse wireless communication systems characterized by joint computing and communication resource constraints. The DAUR algorithm transforms the nonconvex problem of maximizing the sum of DPE ratios into a solvable convex optimization problem. It alternates the optimization of key variables, including user association, work offloading ratios, task-specific computing resource distribution, bandwidth allocation, user power usage ratios, and server computing resource allocation ratios. Our extensive numerical results demonstrate the DAUR algorithm's effectiveness in DPE.
\end{abstract}

\begin{IEEEkeywords}
    Blockchain, data processing efficiency, difference-of-convex programming, fractional programming, Metaverse, resource allocation, semidefinite relaxation.
\end{IEEEkeywords}

\maketitle

\section{Introduction}
The convergence of blockchain technology and the Metaverse is enabling a wide range of decentralized applications, among which non-fungible tokens (NFTs) play a prominent role \cite{wang2021non}. NFTs represent unique digital assets securely recorded on a blockchain and are increasingly used in applications \cite{nadini2021mapping}, e.g., virtual art ownership, in-game items, and digital identity. As their adoption grows, so does the demand for efficient and secure processing of NFT-related tasks, particularly in the Metaverse where real-time interaction, scalability, and transaction integrity are essential.

A central difficulty lies in the computational demands of NFT operations \cite{christodoulou2022nfts}. Verifying digital signatures, executing smart contracts, and processing blockchain blocks require considerable processing power and memory bandwidth, which are often beyond what is available on lightweight mobile or wearable user devices. Task offloading to edge or cloud infrastructure is often necessary, yet it introduces tight coupling between communication and computation processes, complicating system-wide optimization.

In parallel, the latency constraints of the Metaverse impose stringent requirements on end-to-end responsiveness \cite{van2022edge}. NFT transactions embedded in user interactions, e.g., collaborative design, asset trading, or live events, must be processed within seconds to preserve immersion. This latency is jointly influenced by wireless transmission delays, task offloading times, and blockchain verification overheads. Without coordinated resource management, these delays can accumulate, severely degrading the quality of the user experience.

Compounding this is the high energy footprint associated with blockchain-integrated systems. Maintaining distributed ledgers, executing consensus protocols, and handling cryptographic computations consume significant energy \cite{sedlmeir2020energy}, particularly problematic in wireless environments where both client and server nodes may be energy-constrained. Reducing energy consumption without compromising system performance is a potential concern in Metaverse deployments.

Another concern arises from the decision-making of user association, resource allocation, and task scheduling in existing solutions \cite{feng2020joint, dai2018joint}. Treating these aspects in isolation overlooks critical interdependencies. For instance, assigning a user to a server with high computational capacity but poor channel quality may increase delay, while allocating bandwidth without considering server availability can lead to underutilized resources. This fragmented view limits performance and scalability, especially in heterogeneous wireless Metaverse systems.

These challenges collectively motivate the need for a unified optimization framework that can jointly manage computational and communication resources. In this context, we introduce the concept of data processing efficiency (DPE), defined as the amount of data successfully processed per unit of delay and energy consumption. DPE captures the trade-off between responsiveness, energy efficiency, and computational effectiveness, making it a meaningful system-level metric for guiding optimization in blockchain-empowered environments.

This study is also driven by the imperative to improve DPE across a wide range of application settings, particularly those involving NFT-related workloads. As NFT adoption continues to expand, there is a need for frameworks that can dynamically adapt to fluctuating task complexity, user mobility, and network conditions. Supporting this demand calls for intelligent algorithms capable of performing user-server association, task offloading, communication resource allocation, and computing resource allocation in a coordinated and efficient manner.

To this end, we propose a novel optimization framework that transforms the original nonconvex DPE maximization problem into a sequence of tractable convex subproblems. Our approach integrates techniques such as fractional programming, quadratically constrained quadratic programming (QCQP), and semidefinite relaxation (SDR) with rank-1 approximation. Additionally, we design a structured alternating difference-of-convex (DC) method to address nonconvexity stemming from rank constraints, thereby ensuring convergence to high-quality solutions.

\textbf{Studied problem.}
Our research centers on a system with several users and servers, where the users delegate their NFT tasks to the servers. This offloading process is a strategic exercise in optimizing resource allocation and data processing to maximize DPE, a crucial metric in this context. DPE, representing the ratio of processed data bits to the sum of delay and energy consumption ($\frac{\textnormal{processed bits}}{\textnormal{delay} + \textnormal{energy}}$), offers a comprehensive evaluation of system performances.

We are exploring how to optimize DPE within this unique environment. Our goal is to devise a framework that not only boosts the efficiency of NFT task processing but also ensures an engaging and fluid user experience in the Metaverse, which are key features of the proposed framework. This involves addressing the intricacies of user-server connections and smart allocation of computational and communication resources in wireless networks.

\textbf{Main contributions.}
Our contributions are listed as follows:
\begin{itemize}
    \item[$\bullet$] Introduction of data processing efficiency (DPE): One contribution of this paper is the definition and exploration of the concept of DPE in blockchain-empowered Metaverse wireless communication systems. The study aims to achieve the highest possible DPE for each unit of resource consumed. This novel approach to efficiency measurement in the blockchain-Metaverse context sets a new benchmark for evaluating system performance.
    \item[$\bullet$] Development of the DAUR algorithm: We introduce the innovative \underline{D}PE \underline{A}ware \underline{U}ser association and \underline{R}esource allocation (DAUR) algorithm for blockchain-powered Metaverse wireless communications. This algorithm is a significant advancement as it simplifies the complex task of optimizing the sum of DPE ratios into a solvable convex optimization problem. A unique aspect of the DAUR algorithm is its approach to alternately optimize two sets of variables: $\{$user association, work offloading ratio, task-specific computing resource distribution$\}$ and $\{$bandwidth allocation ratio, user transmit power usage ratio, user computing resource usage ratio, server computing resource allocation ratio$\}$. By optimizing these sets together rather than separately, the DAUR algorithm achieves superior optimization results, enhancing the overall system efficiency.
    \item[$\bullet$] In the proposed DAUR algorithm, we make the discrete association variable continuous and relax the optimization of user association and work offloading ratio to a semidefinite programming (SDP) problem. Several rounding techniques and methods that solve the SDP problem with or without dropping the \hbox{rank-1} constraint are compared. Numerical results show that using different-of-convex (DC) programming while retaining the \hbox{rank-1} constraint to solve the SDP problem and using the \hbox{rank-1} approximation method to recover the discrete user association variable is an optimal and efficient way.
    \item[$\bullet$] Effectiveness of the proposed DAUR algorithm is further underscored by numerical results. These results demonstrate the algorithm's success in significantly improving DPE within the studied systems. They show the DAUR algorithm's practical utility and its potential to enhance the performance of blockchain-empowered Metaverse wireless communication systems.
\end{itemize}

\textbf{Differences with the conference version \cite{qian2024data}.} In \cite{qian2024data}, when solving Problem $\mathbb{P}_{10}$ in Section 5.2, we just drop the rank-1 constraint and then use the Hungarian algorithm augmented with zero vectors to recover the discrete variables. This method leads to sub-optimal solutions. Specifically, without the rank-1 constraint, the continuous relaxation can lead to solutions with higher rank, which may not correspond to a feasible or optimal discrete solution. Besides, when the Hungarian algorithm is used to find the optimal solution, the optimality may not translate to the original problem due to the mismatch between the continuous relaxation and the discrete problem. To improve the solution optimality when solving Problem $\mathbb{P}_{10}$, we compare two methods, i.e., dropping the rank-1 constraint and retaining the rank-1 constraint in Section \ref{sec.DAUR_algo}.E, and several rounding techniques, i.e., the Hungarian algorithm, randomized rounding, solve a secondary discrete problem, rank-1 approximation, and greedy rounding in Section \ref{sec.DAUR_algo}.D, to find the optimal and efficient way. Numerical results show that we have obtained an improvement in the solution: the DPE values of the AAUCO (introduced in Section \ref{sec.simulation_results}) and DAUR methods are improved from 81.87 and 86.48 (see Section 7 in the conference version \cite{qian2024data}) to 83.25 and 87.87 \mbox{(M bits/(s $\cdot$ J))} in Section \ref{sec.simulation_results} in this journal version, respectively. We also update the simulation results (see Section \ref{sec.simulation_results}) in this journal version submission. Besides, some important proofs are not contained in the conference version due to the length limit. In this journal version, we present those proofs in the Appendix.

The structure of this paper is outlined as follows: Section \ref{sec.related_work} presents relevant literature. Section \ref{sec.system_model} describes the system model, while the optimization problem is formulated in Section \ref{sec.optimization_prob}. The DAUR algorithm, our proposed solution for the optimization problem, is introduced in \mbox{Section \ref{sec.DAUR_algo}}, with a subsequent analysis of its complexity in Section \ref{sec.complexity_analysis}. Simulation results are presented in Section \ref{sec.simulation_results}. Conclusion and future directions are given in Section \ref{sec.conclusion}.

\section{Related work}\label{sec.related_work}
In this section, we discuss the related work on efficiency metrics and resource allocation in Blockchain and Metaverse.
\subsection{Efficiency metrics}
In wireless communications, there are a few important metrics that help us gauge how well a network performs. Spectral efficiency looks at how effectively a network uses its available bandwidth~\cite{hu2014energy}. This is especially important when the bandwidth is limited, as it tells us how much data can be transmitted within a certain frequency range. Energy efficiency, on the other hand, measures how much data can be sent for a given amount of energy \cite{wang2021lifesaving,zhou2017near,liu2021towards}. This is crucial for devices like smartphones and IoT devices, which often have limited power sources \cite{6672036}. Cost efficiency then comes into play, assessing how much data can be transmitted cost-effectively \cite{li2018cloudshare}. This balance between performance and cost is key for maintaining an efficient yet affordable network. Lastly, throughput efficiency is all about how much data can be managed in a specific area, which is vital in densely populated areas where network traffic is high \cite{ju2013throughput}.

\subsubsection{Differences between data processing efficiency and other efficiency metrics}
DPE we studied is defined as the ratio of processed data bits to the sum of delay and energy consumption. Compared to traditional metrics, DPE provides a more comprehensive evaluation by incorporating both time and energy aspects into data processing. For instance, while spectral efficiency focuses on bandwidth utilization and energy efficiency on energy per bit transmitted, DPE integrates these aspects, emphasizing the balance between delay and energy in processing data. Unlike cost efficiency measuring economic aspects, or throughput efficiency assessing data volume per area, DPE directly ties the efficiency of data processing to tangible network performance factors – delay and energy.

\subsection{Resource allocation in Blockchain systems}
Resource allocation in blockchain environments is a critical area of study, aiming to optimize various aspects of network performance under the unique constraints and opportunities presented by blockchain technology~\cite{liu2021proof}. Guo \textit{et al}. \cite{guo2019adaptive} develop a blockchain-based mobile edge computing framework that enhances throughput and Quality of Service (QoS) by optimizing spectrum allocation, block size, and number of producing blocks, using deep reinforcement learning (DRL). Deng \textit{et al}. \cite{deng2022blockchain} tackle the challenge of decentralized model aggregation in blockchain-assisted federated learning (FL), proposing a novel framework that optimizes long-term average training data size and energy consumption, employing a Lyapunov technique for dynamic resource allocation. Feng \textit{et al}.'s study \cite{feng2020joint} focuses on minimizing energy consumption and delays in a blockchain-based mobile edge computing system, though it lacks explicit detail in bandwidth and transmit power allocation. Li \textit{et al}. \cite{9983804} propose a blockchain-based IoT resource monitoring and scheduling framework that securely manages and shares idle computing resources across the network for edge intelligence tasks, ensuring reliability and fairness. Xu \textit{et al}. \cite{xu2019healthchain} introduce the Healthchain, a blockchain-based scheme for preserving the privacy of large-scale health data in IoT environments, enabling encrypted data access control and secure, tamper-proof storage of IoT data and doctor diagnoses. Finally, Liu \textit{et al}. \cite{liu2019efficient} employ game theory in blockchain-based femtocell networks to maximize the utility of users, addressing power allocation challenges. 

\subsection{Resource allocation in Metaverse systems}
Resource allocation also plays a pivotal role in forging immersive experiences within the Metaverse, a fact underscored by numerous research efforts. Zhao \textit{et al}. \cite{10368052} focus on optimizing the utility-cost ratio for Metaverse applications over wireless networks, employing a novel fractional programming technique to enhance VR video quality through optimized communication and computation resources. Meanwhile, Chu \textit{et al}. \cite{chu2023metaslicing} introduce MetaSlicing, a framework that effectively manages and allocates diverse resources by grouping applications into clusters, using a semi-Markov decision process to maximize resource utilization and Quality-of-Service. This approach dramatically improves efficiency compared to traditional methods. On the other hand, Ng \textit{et al}. \cite{ng2022unified} tackle the unified resource allocation in a virtual education setting within the Metaverse, proposing a stochastic optimal resource allocation scheme to minimize service provider costs while adapting to the users' demand uncertainty.
\section{System Model and Parameter Description}\label{sec.system_model}
The system architecture, illustrated in Fig.~\ref{fig.system_model}, consists of $N$ users and $M$ servers in a blockchain-integrated Metaverse wireless communication network, with indices $n \in \mathcal{N}:= \{1,2,\cdots,N\}$ for users and $m \in\mathcal{M}:=  \{1,2,\cdots,M\}$ for servers. Each user offloads a portion of their computational workload to a selected server to improve overall DPE. This workload is transmitted wirelessly, and servers allocate radio and computing resources to receive and process the incoming tasks.
\begin{figure}[htbp]
\centering
\includegraphics[width=0.47\textwidth]{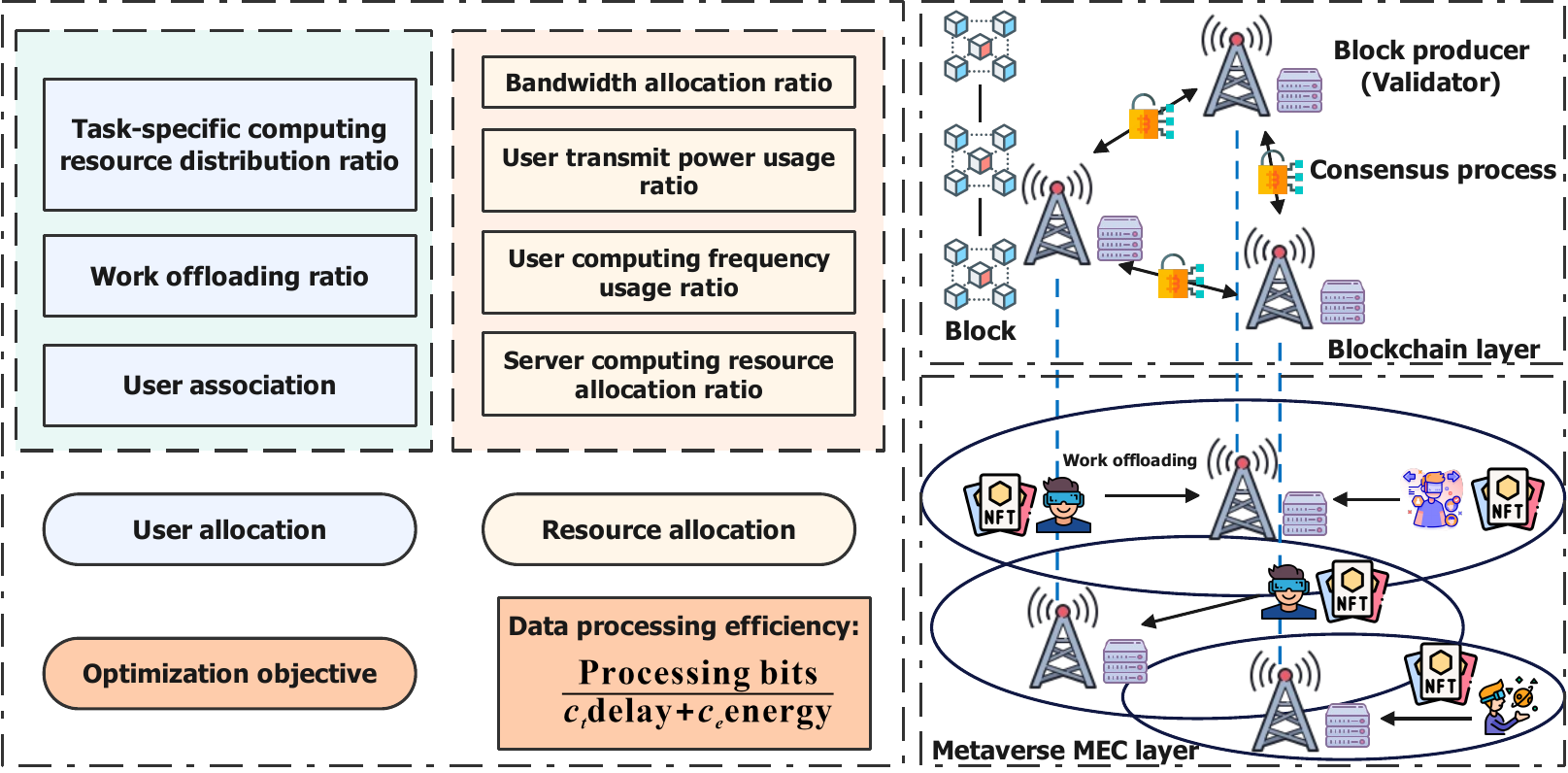}
\caption{Optimizing the DPE of a system consisting of $N$ VR users and $M$ Metaverse servers by the joint optimization of user association, offloading ratio, task-specific computing resource distribution ratio, and communication and computing resource allocation.}
\label{fig.system_model}
\end{figure}

\textbf{Explanation of Fig. \ref{fig.system_model}.} A high-level schematic of the joint optimization framework and system model is provided in \mbox{Fig. \ref{fig.system_model}}. On the left-hand side of it, we depict the optimization components involved in maximizing DPE. These include: 1) \textit{User association:} determine which server each user connects to; 2) \textit{Work offloading ratio:} the portion of each user’s task offloaded to the server; 3) \textit{Task-specific computing resource distribution ratio:} the fraction of a server's computing capacity allocated to a specific user, with internal division between user data processing and blockchain tasks; 4) \textit{Bandwidth allocation ratio:} the variables that determine the allocated bandwidth from the server to one specific user; 5) \textit{transmit power usage ratio:} the usage ratio of one user's transmitting power; 6) \textit{computing frequency usage ratio:} the usage ratio of one user's computing frequency for local data processing; 7) \textit{Server computing allocation ratio:} the variables that determine the allocated computing resources from the server to the connected user. These variables are jointly optimized to maximize DPE. We give some important notations in \mbox{Table \ref{table.notation}}.

The right-hand side represents the two-layer system architecture: 1) \textit{Metaverse MEC layer:} handle user-generated tasks, e.g., immersive VR interactions, and allow users to offload part of their workload to edge servers; 2) \textit{Blockchain layer:} ensure secure data integrity and trust. It includes the consensus process and block production operations, handled by designated validator nodes among the servers.
Tasks offloaded by users are jointly managed across both layers, requiring efficient allocation and decision making of communication and computation resources.

\begin{table}[htbp]
\caption{Important Notations.}\label{table.notation}
\centering
\setlength{\tabcolsep}{0pt}
\small
\renewcommand{\arraystretch}{1.1}
\begin{tabular}{m{7em}  m{6.5cm}}
    \toprule                                   
    \textbf{Notation}  & \textbf{Description} \\
    \midrule
    $\mathcal{N}$ & Set of all users ($n \in\{1,...,N\}$)\\ 
    $\mathcal{M}$ & Set of all servers ($m \in\{1,...,M\}$)\\
    $x_{n,m}$ & Connection between user $n$ and server $m$\\
    $d_n$ & Total work of user $n$\\
    $\varphi_n$ & Work offloading ratio of user $n$\\
    $\psi_n$ & Computation resource usage ratio of user $n$\\
    $f_n$ & Maximum computation resource of user $n$\\
    $\rho_n$ & Transmit power usage ratio of user $n$\\
    $p_n$ & Maximum transmit power of user $n$\\
    $\phi_{n,m}$ & Bandwidth allocation ratio between user $n$ and server $m$\\
    $b_m$ & Maximum allocated bandwidth for each server\\
    $\zeta_{n,m}$ & Computation resource allocation ratio from server $m$ to user $n$\\
    $f_m$ & Maximum computation resource of server $m$\\
    $\gamma_{n,m}$ & Partition ratio of computation resource server $m$ allocating to user $n$ for data processing\\
    $r_{n,m}$ & Transmission rate from user $n$ to server $m$\\
    $\omega_b$ & Data size changing ratio between the MEC task and blockchain task\\
    $\omega_t$ & Weight parameter of delay cost\\
    $\omega_e$ & Weight parameter of energy cost\\
    $S_b$ & Blockchain block size\\
    $R_m$ & Wired link transmission rate among the servers\\
    $\kappa_n$ & Effective switched capacitance of the user $n$'s chip architecture\\
    $\kappa_m$ & Effective switched capacitance of the server $m$'s chip architecture\\
    $c_n$ & DPE preference of user $n$\\
    $c_{n,m}$ & DPE preference between user $n$ and server $m$\\
    $\varpi$ & Penalty weight parameter\\
    \bottomrule
\end{tabular} 
\end{table}
\subsection{Parameters and variables in the system}
In this section, we present the specific descriptions and expressions of system parameters and optimization variables.

\subsubsection{User association}
We use $x_{n,m} \in \{0,1\}$ to denote the association between the user $n$ and the server $m$. Assume that every user is connected to one and only one server, i.e., $\sum_{m\in\mathcal{M}}x_{n,m}=1$.

\subsubsection{Partial work offloading}
Assume that $d_n$ is the total task of the user $n$. The work processed by user $n$ is $(1-\varphi_n)d_n$ and the work offloaded to server $m$ is $\varphi_n d_n$ with $\varphi_n \in [0,1]$. 

\subsubsection{Available computing resources of users}
The maximum computing resource of the user $n$ is $f_n$. We use $\psi_n \in [0,1]$ to denote the computing resource usage of the user $n$. Therefore, the computing resource used by the user $n$ to process \mbox{$(1-\varphi_n)d_n$} data is $\psi_nf_n$.

\subsubsection{Wireless communication model}
Frequency-division multiple access (FDMA) is considered in this paper so that there is no interference in wireless communication between users and servers. Assume that the total available bandwidth of the server $m$ is $b_m$. We use $\phi_{n,m}b_m$ to represent the bandwidth that the server $m$ allocates to the user $n$, where $\phi_{n,m} \in [0,1]$. Similarly, $\rho_np_n$ is used to indicate the used transmit power of the user $n$ with $\rho_n \in [0,1]$ and $p_n$ is the maximum transmit power of the user $n$. According to the Shannon formula~\cite{shannon1948mathematical}, the transmission rate between user $n$ and server $m$ is 
\begin{talign}
    r_{n,m} = \phi_{n,m}b_m\log_2(1 + \frac{g_{n,m}\rho_np_n}{\sigma^2\phi_{n,m}b_m}),
\end{talign}
where $\phi_{n,m}b_m$ is the allocated bandwidth from server $m$ to user $n$, $g_{n,m}$ is the channel attenuation between them, and $\sigma^2$ is the noise power spectral density.

\subsubsection{Task-specific computing resources distribution of servers}\label{section_partiioning_computing_resources_of_servers}
The maximum computing resource of server $m$ is $f_m$. We use $\zeta_{n,m}$ to denote the allocated computing resource part from the chosen server $m$ to the user $n$ with $\zeta_{n,m} \in [0,1]$. We assume that the server allocates part of its computing resources $\gamma_{n,m}\zeta_{n,m}f_m$ to user $n$ for data processing and another portion $(1-\gamma_{n,m})\zeta_{n,m}f_m$ for processing blockchain tasks, where $\gamma_{n,m}\in (0,1)$ and $\zeta_{n,m}f_m$ is the allocated computing resources from the chosen server $m$ to the user $n$. Since the server $m$ needs to process data and also complete the blockchain task, it must reserve a portion of computing resources for these two parts (i.e., $\gamma_{n,m} \neq 0$ or $1$).

\subsection{Cost analysis}
In this section, we will analyze the user and server costs of the uplink. The connection $x_{n,m}$ between the user $n$ and the server $m$ can be decided based on the maximization of data processing efficiency, which will be detailed in Definition \ref{def_dpe}.
\subsubsection{User cost analysis}
The local data bits for processing are $(1-\varphi_n)d_n$ and the local processing delay is  
\begin{talign}
    T^{(up)}_n = \frac{(1-\varphi_n)d_n\eta_n}{\psi_nf_n},\label{eq_T_up}
\end{talign}
where $\eta_n$ (cycles/bit) is the number of CPU cycles per bit of the user $n$ and the superscript ``$up$'' means user processing. Then, we study the energy consumption of the user side. If the available computing resources of user $n$ is $\psi_nf_n$, the local processing energy consumption of the user $n$ is 
\begin{talign}
    E^{(up)}_n = \kappa_n(1-\varphi_n)d_n\eta_n(\psi_nf_n)^2,
\end{talign}
where $\kappa_n$ is the effective switched capacitance of the user $n$. After processing the local data bits, the user $n$ needs to transmit the rest $\varphi_n d_n$ data bits to the connected server $m$ over the wireless channel. The transmission delay of the user $n$ is 
\begin{talign}
    T^{(ut)}_{n,m} = \frac{x_{n,m}\varphi_nd_n}{r_{n,m}},
\end{talign} 
where the superscript ``$ut$'' represents user transmission. Assume that the available transmit power of the user $n$ is $\rho_np_n$, the energy consumption during the transmission of the user $n$ is 
\begin{talign}
    E^{(ut)}_{n,m} = \rho_np_nT^{(ut)}_{n,m} = \frac{x_{n,m}\rho_np_n\varphi_nd_n}{r_{n,m}}.
\end{talign}

\subsubsection{Server cost analysis}
After $\varphi_n d_n$ data bits are transmitted to the server $m$ from the user $n$, the server $m$ would first process those data bits. According to the discussion in section \ref{section_partiioning_computing_resources_of_servers}, the processing delay of the server $m$ for the offloaded work from the user $n$ is 
\begin{talign}
    T^{(sp)}_{n,m} = \frac{x_{n,m}\varphi_nd_n\eta_m}{\gamma_{n,m}\zeta_{n,m}f_m},
\end{talign}
and its related data processing energy consumption is 
\begin{talign}
    E^{(sp)}_{n,m} = \kappa_mx_{n,m}\varphi_nd_n\eta_m(\gamma_{n,m}\zeta_{n,m}f_m)^2,
\end{talign}
where the superscript ``$sp$'' stands for server processing and $\eta_m$ is the number of CPU cycles per bit of the server $m$. When $\varphi_n d_n$ data bits are processed by the server $m$, it will generate a blockchain block for them. Assume that in this phase, the server $m$ would process $\varphi_nd_n\omega_b$ bits, where $\omega_b$ is the data size changing ratio of servers mapping the offloaded data to the format that will be processed by the blockchain. Since $\gamma_{n,m}$ fraction of computing resources $\zeta_{n,m}f_m$ that allocated to the user $n$ is used in the data processing phase, there is still $1-\gamma_{n,m}$ fraction of that for block generation phase. Therefore, the delay of the server $m$ generating blockchain block for the user $n$ is 
\begin{talign}
    T^{(sg)}_{n,m} = \frac{x_{n,m}\varphi_nd_n\omega_b\eta_m}{(1-\gamma_{n,m})\zeta_{n,m}f_m},
\end{talign}
and its corresponding energy consumption is 
\begin{talign}
    E^{(sg)}_{n,m} = \kappa_mx_{n,m}\varphi_nd_n\eta_m \omega_b [(1-\gamma_{n,m})\zeta_{n,m}f_m]^2,
\end{talign}
where ``$sg$'' indicates server generation. Next, we discuss the block propagation and validation cost. Assume that there is only one hop among servers. We consider only the block propagation delay and the validation delay, as in \cite{feng2020joint}. According to the analysis in \cite{feng2020joint}, we can obtain the total block propagation delay in the data transactions during the consensus as
\begin{talign}
    T^{(bp)}_{n,m} = \frac{S_b}{R_m},
\end{talign}
where $R_m:=\text{min}_{m^\prime\in\mathcal{M}\setminus\{m\}}R_{m,m^\prime}$ and $R_{m,m^\prime}$ is the wired link transmission rate between the servers $m$ and $m^\prime$ and the superscript ``\textit{bp}'' means block propagation. Besides, based on \cite{feng2020joint}, the validation time during the block propagation is 
\begin{align}
    T^{(sv)}_{n,m} = \text{max}_{m^\prime\in\mathcal{M}\setminus\{m\}}\frac{\eta_v}{(1-\gamma_{n,m^\prime})\zeta_{n,m^\prime}f_m^\prime},
\end{align}
where $\eta_v$ is the number of CPU cycles required by server $m^\prime$ to verify the block and the superscript ``$sv$'' is server validation.

\section{Definition of Data Processing Efficiency and Studied Optimization Problem}\label{sec.optimization_prob}
Here we first introduce the concept of data processing efficiency before formulating the optimization problem.
\begin{definition}[Data Processing Efficiency]\label{def_dpe}
    Data processing efficiency (DPE) refers to the costs including delay and energy required to process data bits, i.e., DPE $:= \frac{\text{processing data bits}}{\text{delay + energy}}$. This formulation reflects a trade-off in mobile and edge computing systems: low delay-energy consumption and high processing utility are both desirable, and their combination more accurately captures the system’s resource burden and performance. By placing the sum of delay and energy in the denominator, we effectively measure how efficiently a system processes data with respect to its total resource expenditure.
    
    In the two-level system (i.e., user and server), the cost of processing data at the user level includes the local data processing costs. The cost of processing data at the server level consists of both the data processing consumption and the wireless transmission consumption from the user level. For instance, if $\varphi_n d_n$ parameters are processed at the server $m$, the cost for processing these bits encompasses the delay and energy consumed in processing, as well as those involved in data transmission from the \mbox{user $n$}.
\end{definition}
Based on the concept of DPE, we obtain the cost required to process $(1-\varphi_n)d_n$ at the user $n$:
\begin{talign}
    cost^{(u)}_n = \omega_t T^{(up)}_n + \omega_e E^{(up)}_n, \label{eq.cost_u}
\end{talign}
where $\omega_t$ and $\omega_e$ denote the weight parameters of delay and energy consumption. Similarly, we get the cost needed to process $\varphi_n d_n$ at the server $m$:
\begin{talign}
    cost^{(s)}_{n,m} &= \omega_t (T^{(ut)}_{n,m} + T^{(sp)}_{n,m} + T^{(sg)}_{n,m} + T^{(bp)}_{n,m} + T^{(sv)}_{n,m}) \nonumber \\ 
    &+ \omega_e (E^{(ut)}_{n,m} + E^{(sp)}_{n,m} + E^{(sg)}_{n,m}). \label{eq.cost_s}
\end{talign}
Then, our optimization objective is to maximize the total DPE of all servers and users. Let $\bm{x}:=[x_{n,m}]|_{n\in\mathcal{N},m\in\mathcal{M}}$, $\bm{\varphi}:=[\varphi_n]|_{n\in\mathcal{N}}$, $\bm{\gamma}:=[\gamma_{n,m}]|_{n\in\mathcal{N},m\in\mathcal{M}}$, $\bm{\phi}:=[\phi_{n,m}]|_{n\in\mathcal{N},m\in\mathcal{M}}$, $\bm{\rho}:=[\rho_n]|_{n\in\mathcal{N}}$, $\bm{\zeta}:=[\zeta_{n,m}]|_{n\in\mathcal{N},m\in\mathcal{M}}$, and $\bm{\psi}:=[\psi_n]|_{n\in\mathcal{N}}$. We give the optimization problem $\mathbb{P}_{1}$ as follows:
\begin{subequations}\label{prob1}
\begin{align}
\mathbb{P}_{1}:&\!\!\!\!\max\limits_{\bm{x},\bm{\varphi},\bm{\gamma},\bm{\phi},\bm{\rho},\bm{\zeta},\bm{\psi}}  \!\!\sum\limits_{n \in \mathcal{N}} \!\!\!\frac{c_n(1-\varphi_n)d_n}{cost^{(u)}_n}\!\! +\!\!\! \sum\limits_{n \in \mathcal{N}}\sum\limits_{m \in \mathcal{M}}\!\!\frac{c_{n,m}x_{n,m}\varphi_nd_n}{cost^{(s)}_{n,m}}
\tag{\ref{prob1}}\\
\text{s.t.} \quad 
& x_{n,m} \in \{0,1\}, ~\forall n \in \mathcal{N}, ~\forall m \in \mathcal{M},\label{x_constr1} \\
& \sum\limits_{m \in \mathcal{M}} x_{n,m} = 1, ~\forall n \in \mathcal{N},\label{x_constr2} \\
& \varphi_n \in [0,1], ~\forall n \in \mathcal{N},\label{varphi_constr}\\
&\gamma_{n,m} \in (0,1), ~\forall n \in \mathcal{N}, ~\forall m \in \mathcal{M},\label{gamma_constr}\\
&\phi_{n,m} \in [0,1], ~\forall n \in \mathcal{N}, ~\forall m \in \mathcal{M},\label{phi_constr1}\\
&\sum\limits_{n\in \mathcal{N}} x_{n,m} \phi_{n,m} \leq 1, ~\forall m \in \mathcal{M},\label{phi_constr2}\\
&\zeta_{n,m} \in [0,1], ~\forall n \in \mathcal{N}, ~\forall m \in \mathcal{M},\label{zeta_constr1}\\
&\sum\limits_{n\in \mathcal{N}} x_{n,m} \zeta_{n,m} \leq 1, ~\forall m \in \mathcal{M},\label{zeta_constr2}\\
&\rho_n \in [0,1], ~\forall n \in \mathcal{N},\label{rho_constr}\\
&\psi_n \in [0,1], ~\forall n \in \mathcal{N},\label{psi_constr}
\end{align}
\end{subequations}
where $c_n$ is the DPE preference of the user $n$, $c_{n,m}$ is that between the server $m$ and the user $n$, and ``s.t.'' is short for ``subject to''. Constraints (\text{\ref{x_constr1}}) and (\text{\ref{x_constr2}}) mean that user association variable $x_{n,m}$ is 0, 1 discrete variable, and every user is connected to one and only one server. Constraints (\text{\ref{varphi_constr}}), (\text{\ref{gamma_constr}}), and (\text{\ref{phi_constr1}}) are the feasible value ranges of partial work offloading ratio $\varphi_n$, task-specific computing resources distribution ratio of servers $\gamma_{n,m}$, and available bandwidth ratio $\phi_{n,m}$. Constraint (\text{\ref{phi_constr2}}) means that the total allocated bandwidth can't be greater than the maximum available bandwidth. Constraint (\text{\ref{zeta_constr1}}) is the feasible value range of the allocated computing resource ratio $\zeta_{n,m}$ from the server. Constraint (\text{\ref{zeta_constr2}}) means that the total allocated computing resource can't be greater than the maximum available computing resource from the server. Constraints (\text{\ref{rho_constr}}) and (\text{\ref{psi_constr}}) are the feasible value ranges of transmit power usage ratio $\rho_n$ of the user and computing resource usage ratio $\psi_n$ of the user.

\section{Proposed DAUR Algorithm to Solve the Optimization Problem}\label{sec.DAUR_algo}
Problem $\mathbb{P}_{1}$ is a sum of ratios problem coupled by many complicated non-convex terms, which is generally NP-complete and difficult to solve directly. We will make Problem $\mathbb{P}_{1}$ a solvable convex problem within a series of transformations by our proposed \underline{D}PE-\underline{A}ware \underline{U}ser association and \underline{R}esource allocation (DAUR) algorithm. 
\begin{theorem}\label{theorem_solvep1}
    Problem $\mathbb{P}_{1}$ can be relaxed into a solvable problem if we alternatively optimize $[\bm{x},\bm{\varphi},\bm{\gamma}]$ and $[\bm{\phi},\bm{\rho},\bm{\zeta},\bm{\psi}]$.
\end{theorem}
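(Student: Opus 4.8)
The plan is to prove the claim constructively, by exhibiting for each of the two alternating blocks a chain of transformations that converts the corresponding per-block subproblem into a convex (hence solvable) program. The overarching scheme is block-coordinate maximization: hold one variable group fixed at its current value, maximize the objective of $\mathbb{P}_1$ over the other group, then swap and iterate. Because each per-block update does not decrease the objective and the objective is bounded on the compact feasible set defined by \eqref{x_constr1}--\eqref{psi_constr}, the alternating procedure is well posed; the real content of the theorem lies in showing that each subproblem admits a tractable convex reformulation after relaxation. I would therefore organize the proof as two lemmatic steps, one per block, followed by the observation that their composition yields a solvable relaxation of $\mathbb{P}_1$.

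First I would treat the block $[\bm{\phi},\bm{\rho},\bm{\zeta},\bm{\psi}]$ with $[\bm{x},\bm{\varphi},\bm{\gamma}]$ held fixed. Freezing $\bm{x}$ and $\bm{\varphi}$ turns every numerator in $\mathbb{P}_1$ into a nonnegative constant, so the objective collapses to a sum of reciprocals of the cost denominators, and I would attack it with fractional programming: applying the quadratic transform, each ratio $a/g(\cdot)$ is replaced by $2y\sqrt{a}-y^2 g(\cdot)$, which is affine in the auxiliary variable $y$ for fixed block variables and concave in the block variables whenever the denominator $g$ is convex. The key convexity input is that the rate $r_{n,m}=\phi_{n,m}b_m\log_2(1+g_{n,m}\rho_np_n/(\sigma^2\phi_{n,m}b_m))$ is jointly concave in $(\phi_{n,m},\rho_n)$, being the perspective of the concave map $u\mapsto\log_2(1+u)$; the processing costs then contribute the convex terms $1/\zeta_{n,m}$, $\zeta_{n,m}^2$, $1/\psi_n$, and $\psi_n^2$ (with $\gamma_{n,m}$ fixed), while the transmission costs carrying $1/r_{n,m}$ and $\rho_n/r_{n,m}$ I would convexify by introducing $r_{n,m}$ as an auxiliary variable bounded below by its concave expression and, where a residual ratio survives, absorbing it into the same fractional-programming transform. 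This renders the block subproblem convex.

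Next I would treat the block $[\bm{x},\bm{\varphi},\bm{\gamma}]$ with $[\bm{\phi},\bm{\rho},\bm{\zeta},\bm{\psi}]$ fixed. Here two obstructions remain: the integrality of $x_{n,m}$ and the bilinear coupling $x_{n,m}\varphi_n$ appearing in both the numerators $c_{n,m}x_{n,m}\varphi_nd_n$ and the denominator $cost^{(s)}_{n,m}$. I would relax $x_{n,m}\in\{0,1\}$ to the box $[0,1]$, then lift the coupled pair by stacking $[\bm{x},\bm{\varphi}]$ into a single vector and introducing its outer-product matrix as a new variable, so that the bilinear monomials become linear entries of that matrix; the requirement that the matrix be a genuine outer product is precisely a rank-one condition. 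Dropping this condition yields a semidefinite relaxation (SDR), and after once more invoking the sum-of-ratios transform to linearize the lifted objective, the block subproblem becomes an SDP carrying a single nonconvex rank-one constraint, which I would enforce through the alternating difference-of-convex (DC) penalty and then round to a feasible $\bm{x}$ by rank-one approximation.

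I expect the principal obstacle to be the $[\bm{x},\bm{\varphi},\bm{\gamma}]$ block, exactly because the discrete association and the bilinear offloading term interact. Discarding the rank-one constraint after the SDR is convex but loosens the relaxation so that the recovered $\bm{x}$ may be infeasible or suboptimal, whereas retaining it preserves fidelity at the cost of nonconvexity; reconciling these, i.e.\ convexifying the rank-one constraint by DC programming without destroying the feasibility the rounding step needs, is the delicate part and is where the improvement over the conference version \cite{qian2024data} is concentrated. By contrast, the $[\bm{\phi},\bm{\rho},\bm{\zeta},\bm{\psi}]$ block is comparatively routine once the perspective-concavity of $r_{n,m}$ and the convexity of the reciprocal cost terms are in hand. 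Assembling the two convex reformulations, and noting that the alternating updates are monotone on a bounded objective, completes the argument that $\mathbb{P}_1$ is relaxed into a solvable problem.
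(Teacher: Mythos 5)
Your overall route coincides with the paper's: block-coordinate alternation; a quadratic-transform fractional-programming reformulation for the $[\bm{\phi},\bm{\rho},\bm{\zeta},\bm{\psi}]$ block resting on the joint concavity of $r_{n,m}$ in $(\phi_{n,m},\rho_n)$ (the paper's Lemma \ref{lemma_p4top5}); and, for the association block, box relaxation of $x_{n,m}$, lifting the stacked vector to an outer-product matrix, SDR with a retained rank-one constraint handled by a DC penalty, and rank-one rounding (the paper's Lemmas \ref{lemma_p8top9}, \ref{lemma_p9top10}, and \ref{lemma_DC}). A structural difference worth noting, though not itself an error: the paper applies the sum-of-ratios transform once, globally, introducing $\bm{\vartheta},\bm{\alpha}$ via a KKT analysis (Lemmas \ref{lemma_p1top2} and \ref{lemma_p2top3}) and moving all delays into epigraph constraints via $\bm{T}$, so that the association-block objective is already polynomial in $(\bm{x},\bm{\varphi})$ before any lifting; you instead re-invoke FP inside each block, which is workable but leaves you with ratios of functions of the lifted matrix that the paper never has to confront.

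The genuine gap is your non-treatment of $\bm{\gamma}$ in the $[\bm{x},\bm{\varphi},\bm{\gamma}]$ block. You name only two obstructions there (integrality of $x_{n,m}$ and the bilinear $x_{n,m}\varphi_n$), but $\gamma_{n,m}$ is a decision variable of that block and enters $cost^{(s)}_{n,m}$ through $\kappa_m x_{n,m}\varphi_n d_n\eta_m(\gamma_{n,m}\zeta_{n,m}f_m)^2$ and $\kappa_m x_{n,m}\varphi_n d_n\eta_m\omega_b[(1-\gamma_{n,m})\zeta_{n,m}f_m]^2$, and the delays $T^{(sp)}_{n,m}$, $T^{(sg)}_{n,m}$, $T^{(sv)}_{n,m}$ through factors of $1/\gamma_{n,m}$ and $1/(1-\gamma_{n,m})$ multiplying $x_{n,m}\varphi_n$. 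Monomials such as $x_{n,m}\varphi_n\gamma_{n,m}^2$ and $x_{n,m}\varphi_n/\gamma_{n,m}$ are cubic (or worse) in the stacked variables, so lifting only $[\bm{x},\bm{\varphi}]$ to a rank-one matrix does not linearize them, and your claimed ``SDP carrying a single nonconvex rank-one constraint'' never materializes. The paper closes exactly this hole with Lemma \ref{lemma_gamma}: the energy term $\gamma^2+\omega_b(1-\gamma)^2$ is minimized at $\gamma=\tfrac{\omega_b}{1+\omega_b}$, an AM--GM argument gives $T^{(sp)}_{n,m}+T^{(sg)}_{n,m}$ minimized at $\gamma=\tfrac{1}{1+\omega_b}$, and setting $\omega_b=1$ makes the two coincide at $\gamma^\star_{n,m}=\tfrac{1}{2}$, which is substituted in closed form before constructing the QCQP $\mathbb{P}_9$. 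Your proof needs this elimination step (or some substitute, e.g., an additional inner alternation over $\bm{\gamma}$ with the corresponding convexity argument) before the lift; as written, the second-block reformulation fails.
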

\begin{proof}
    \textbf{Theorem \ref{theorem_solvep1}} is proven by the following \textbf{Lemma \ref{lemma_p1top2}}, \textbf{Lemma \ref{lemma_p2top3}}, \textbf{Theorem \ref{theorem_p4toconcave}} in Section \ref{sec.AO_step1}, and \textbf{Theorem \ref{theorem_p7toconvex}} in Section \ref{sec.AO_step2}.
\end{proof}
\begin{lemma}\label{lemma_p1top2}
Define new auxiliary variables $\vartheta_n^{(u)}$, $\vartheta_{n,m}^{(s)}$, $T_n^{(u)}$, and $T_{n,m}^{(s)}$. Let $\bm{\vartheta^{(u)}}:=[\vartheta_n^{(u)}]|_{n\in\mathcal{N}}$, $\bm{\vartheta^{(s)}}:=[\vartheta_{n,m}^{(s)}]|_{n\in\mathcal{N},m\in\mathcal{M}}$, $\bm{T^{(u)}}:=[T_n^{(u)}]|_{n\in\mathcal{N}}$, $\bm{T^{(s)}}:=[T_{n,m}^{(s)}]|_{n\in\mathcal{N},m\in\mathcal{M}}$, $\bm{T}:=\{\bm{T^{(u)}},\bm{T^{(s)}}\}$, and $\bm{\vartheta}:=\{\bm{\vartheta^{(u)}},\bm{\vartheta^{(s)}}\}$. The sum of ratios Problem $\mathbb{P}_{1}$ can be relaxed into a summation Problem $\mathbb{P}_{2}$:
\begin{subequations}\label{prob2}
\begin{align}
&\mathbb{P}_{2}:\max\limits_{\bm{x},\bm{\varphi},\bm{\gamma},\bm{\phi},\bm{\rho},\bm{\zeta},\bm{\psi}, \bm{\vartheta}, \bm{T}}  \sum\limits_{n \in \mathcal{N}} \vartheta_n^{(u)} + \sum\limits_{n \in \mathcal{N}}\sum\limits_{m \in \mathcal{M}}\vartheta_{n,m}^{(s)} \tag{\ref{prob2}}\\
&\text{s.t.} \quad (\text{\ref{x_constr1}})\text{-}(\text{\ref{psi_constr}}), \nonumber \\
& \omega_t T^{(u)}_n + \omega_e \kappa_n(1-\varphi_n)d_n\eta_n(\psi_nf_n)^2 - \frac{c_n(1-\varphi_n)d_n}{\vartheta_n^{(u)}}\leq 0, \label{vartheta_u_constr}\\
& \omega_t T^{(s)}_{n,m} - \frac{c_{n,m}x_{n,m}\varphi_nd_n}{\vartheta_{n,m}^{(s)}} + \omega_e \{\frac{x_{n,m}\rho_np_n\varphi_nd_n}{r_{n,m}} \nonumber \\ 
    &+ \kappa_mx_{n,m}\varphi_nd_n\eta_m(\gamma_{n,m}\zeta_{n,m}f_m)^2 \nonumber \\ 
    &+ \kappa_mx_{n,m}\varphi_nd_n\eta_m \omega_b [(1-\gamma_{n,m})\zeta_{n,m}f_m]^2\}  \leq 0, \label{vartheta_s_constr}\\
&T^{(up)}_n \leq T^{(u)}_n, \label{Tu_constr}\\
&T^{(ut)}_{n,m} + T^{(sp)}_{n,m} + T^{(sg)}_{n,m} + T^{(bp)}_{n,m} + T^{(sv)}_{n,m} \leq T^{(s)}_{n,m}. \label{Ts_constr}
\end{align}
\end{subequations}
\end{lemma}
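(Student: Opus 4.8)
The plan is to treat $\mathbb{P}_1$ as a sum-of-ratios program and linearize its objective through the standard epigraph/auxiliary-variable device, then to absorb the fractional delay terms sitting inside each cost denominator into separate bounding variables. First I would introduce one auxiliary variable per ratio: $\vartheta_n^{(u)}$ for the user term $\frac{c_n(1-\varphi_n)d_n}{cost_n^{(u)}}$ and $\vartheta_{n,m}^{(s)}$ for the server term $\frac{c_{n,m}x_{n,m}\varphi_n d_n}{cost_{n,m}^{(s)}}$, replacing the fractional objective by the linear objective $\sum_n \vartheta_n^{(u)} + \sum_n\sum_m \vartheta_{n,m}^{(s)}$ subject to the lower-bound couplings $\vartheta_n^{(u)} \le \frac{c_n(1-\varphi_n)d_n}{cost_n^{(u)}}$ and $\vartheta_{n,m}^{(s)} \le \frac{c_{n,m}x_{n,m}\varphi_n d_n}{cost_{n,m}^{(s)}}$. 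Because each $\vartheta$ is positive and enters a maximization with a positive coefficient, every coupling can be rearranged as $cost \le \text{num}/\vartheta$, which is precisely the shape of constraints \eqref{vartheta_u_constr} and \eqref{vartheta_s_constr} once $cost_n^{(u)}$ and $cost_{n,m}^{(s)}$ are expanded via \eqref{eq.cost_u}--\eqref{eq.cost_s}.

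Second, I would break the fractional delay contributions out of the denominators, since they are themselves ratios in the decision variables. In $cost_n^{(u)}$ the term $\omega_t T_n^{(up)}$ is a ratio in $(\varphi_n,\psi_n)$; I introduce $T_n^{(u)}$ as an upper bound and impose \eqref{Tu_constr}, so $cost_n^{(u)}$ is replaced by $\omega_t T_n^{(u)} + \omega_e E_n^{(up)}$. Likewise, the entire server delay sum $T^{(ut)}_{n,m}+T^{(sp)}_{n,m}+T^{(sg)}_{n,m}+T^{(bp)}_{n,m}+T^{(sv)}_{n,m}$ is collapsed into a single variable $T_{n,m}^{(s)}$ with the bounding constraint \eqref{Ts_constr}. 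Substituting these bounds into the rearranged couplings yields exactly \eqref{vartheta_u_constr} and \eqref{vartheta_s_constr}, producing $\mathbb{P}_2$ with the original feasibility constraints \eqref{x_constr1}--\eqref{psi_constr} carried over unchanged.

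The substance of the argument is then the claim that this enlargement of the variable set is value-preserving (the sense in which $\mathbb{P}_2$ is the advertised relaxation). Here I would give a monotonicity/tightness argument in two directions. For any feasible point of $\mathbb{P}_1$, setting $T_n^{(u)}$ equal to the actual delay $T_n^{(up)}$, $T_{n,m}^{(s)}$ equal to the corresponding server delay sum, and each $\vartheta$ equal to its ratio yields a feasible point of $\mathbb{P}_2$ with the same objective, so $\mathbb{P}_2 \ge \mathbb{P}_1$. Conversely, at any optimizer of $\mathbb{P}_2$ the bounding constraints must be active: if $T_n^{(u)} > T_n^{(up)}$ one could decrease $T_n^{(u)}$, loosening \eqref{vartheta_u_constr} and admitting a strictly larger $\vartheta_n^{(u)}$, contradicting optimality; the same reasoning forces \eqref{Ts_constr} and then the couplings \eqref{vartheta_u_constr}, \eqref{vartheta_s_constr} to hold with equality. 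With all bounds tight, the $\mathbb{P}_2$-objective collapses back to the $\mathbb{P}_1$-objective, giving $\mathbb{P}_2 \le \mathbb{P}_1$ and hence equality.

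I expect the delicate step to be this tightness argument rather than the algebra. One must confirm $\vartheta_n^{(u)}, \vartheta_{n,m}^{(s)} > 0$ so that dividing by them in \eqref{vartheta_u_constr}--\eqref{vartheta_s_constr} is legitimate and that the couplings are genuinely equivalent to the lower-bound form, which needs the numerators $c_n(1-\varphi_n)d_n$ and $c_{n,m}x_{n,m}\varphi_n d_n$ to be nonnegative, and requires care at the degenerate corners $\varphi_n \in \{0,1\}$ or $x_{n,m}=0$ where a term vanishes. Ensuring the denominators stay well-defined and that a zero numerator simply drives the corresponding $\vartheta$ to zero without breaking feasibility is the only real obstacle; the remainder is bookkeeping.
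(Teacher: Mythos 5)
Your proposal follows essentially the same route as the paper's proof in Appendix~A: introduce $\vartheta_n^{(u)},\vartheta_{n,m}^{(s)}$ as lower bounds on the two families of ratios, rearrange each coupling into the form $cost \leq \text{num}/\vartheta$ to obtain \eqref{vartheta_u_constr}--\eqref{vartheta_s_constr}, and then introduce $T_n^{(u)},T_{n,m}^{(s)}$ to pull the delay terms out of the denominators via \eqref{Tu_constr}--\eqref{Ts_constr}. Your added two-direction tightness argument (feasible-point lifting for $\mathbb{P}_2\geq\mathbb{P}_1$, constraint-activity at an optimizer for $\mathbb{P}_2\leq\mathbb{P}_1$) and your attention to the degenerate corners $\varphi_n\in\{0,1\}$, $x_{n,m}=0$ go beyond the paper's proof, which stops after performing the substitutions and does not verify value preservation, so this is extra rigor within the same approach rather than a different one.
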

\begin{proof}
    Refer to Appendix \ref{append_lemma_p1top2}.
\end{proof}
According to \textbf{Lemma \ref{lemma_p1top2}}, we can relax the sum of ratios Problem $\mathbb{P}_{1}$ to a summation Problem $\mathbb{P}_{2}$ by adding the extra auxiliary variables $\vartheta_n^{(u)}$, $\vartheta_{n,m}^{(s)}$, $T_n^{(u)}$, and $T_{n,m}^{(s)}$. Thanks to $\vartheta_n^{(u)}$ and $\vartheta_{n,m}^{(s)}$, we convert the sum of ratios of the objective function in \mbox{Problem $\mathbb{P}_{1}$} to the sum of two variables. Besides, we can transfer the troublesome terms about the delay of the objective function in \mbox{Problem $\mathbb{P}_{1}$} into the constraints (\ref{Tu_constr}) and (\ref{Ts_constr}) by introducing the variables $T_n^{(u)}$ and $T_{n,m}^{(s)}$. However, the constraints (\ref{vartheta_u_constr}) and (\ref{vartheta_s_constr}) are not convex and \mbox{Problem $\mathbb{P}_{2}$} is still hard to solve.

\begin{lemma}\label{lemma_p2top3}
Define new auxiliary variables $\alpha^{(u)}_n$ and $\alpha^{(s)}_{n,m}$. Let $\bm{\alpha^{(u)}}:=[\alpha^{(u)}_n]|_{n\in\mathcal{N}}$, $\bm{\alpha^{(s)}}:=[\alpha^{(s)}_{n,m}]|_{n\in\mathcal{N},m\in\mathcal{M}}$, and $\bm{\alpha}:=\{\bm{\alpha^{(u)}}$, $\bm{\alpha^{(s)}}\}$. After that, Problem $\mathbb{P}_{2}$ can be transformed into Problem $\mathbb{P}_{3}$:
\begin{subequations}\label{prob3}
\begin{align}
&\mathbb{P}_{3}:\max\limits_{\bm{x},\bm{\varphi},\bm{\gamma},\bm{\phi},\bm{\rho},\bm{\zeta},\bm{\psi}, \bm{\vartheta}, \bm{\alpha}, \bm{T}}  \nonumber \\
&\sum\limits_{n \in \mathcal{N}} \alpha^{(u)}_n [c_n(1 - \varphi_n)d_n - \vartheta_n^{(u)}cost^{(u)}_n] \nonumber \\
&+ \sum\limits_{n \in \mathcal{N}}\sum\limits_{m \in \mathcal{M}}\alpha^{(s)}_{n,m} (c_{n,m}x_{n,m}\varphi_n d_n - \vartheta_{n,m}^{(s)}cost^{(s)}_{n,m})\tag{\ref{prob3}}\\
&\text{s.t.} \quad (\text{\ref{x_constr1}})\text{-}(\text{\ref{psi_constr}}), (\text{\ref{Tu_constr}})\text{-}(\text{\ref{Ts_constr}}). \nonumber
\end{align}
\end{subequations}
At Karush-–Kuhn–-Tucker (KKT) points of Problem $\mathbb{P}_{3}$, we can obtain that 
\begin{talign}
    &\alpha^{(u)}_n = \frac{1}{cost^{(u)}_n},\label{eq_alpha_u}\\
    &\alpha^{(s)}_{n,m} = \frac{1}{cost^{(s)}_{n,m}},\label{eq_alpha_s}\\
    &\vartheta_n^{(u)} = \frac{c_n(1 - \varphi_n)d_n}{cost^{(u)}_n},\label{eq_vartheta_u}\\
 \text{and }   &\vartheta_{n,m}^{(s)} = \frac{c_{n,m}x_{n,m}\varphi_n d_n}{cost^{(s)}_{n,m}}.\label{eq_vartheta_s}
\end{talign}
\end{lemma}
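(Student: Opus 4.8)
The plan is to realize \textbf{Problem} $\mathbb{P}_3$ as the stationarity (KKT) system of \textbf{Problem} $\mathbb{P}_2$ after clearing the fractional terms in its two coupling constraints, with $\bm{\alpha}$ playing the role of the associated Lagrange multipliers. First I would rewrite the non-convex constraints \eqref{vartheta_u_constr} and \eqref{vartheta_s_constr}: since $\vartheta_n^{(u)}>0$ and $\vartheta_{n,m}^{(s)}>0$ (each equals a strictly positive DPE ratio at any feasible point with positive cost), multiplying through by the denominator turns \eqref{vartheta_u_constr} into the bilinear inequality $c_n(1-\varphi_n)d_n - \vartheta_n^{(u)}\,cost_n^{(u)} \ge 0$, and likewise \eqref{vartheta_s_constr} into $c_{n,m}x_{n,m}\varphi_n d_n - \vartheta_{n,m}^{(s)}\,cost_{n,m}^{(s)} \ge 0$. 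Attaching multipliers $\alpha_n^{(u)}\ge 0$ and $\alpha_{n,m}^{(s)}\ge 0$ to these two constraint families and collecting the resulting terms produces exactly the objective of $\mathbb{P}_3$, while the remaining constraints \eqref{x_constr1}--\eqref{psi_constr} and \eqref{Tu_constr}--\eqref{Ts_constr} are carried over unchanged.

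Next I would write the partial Lagrangian of the reformulated $\mathbb{P}_2$,
\begin{align*}
L ={}& \sum_{n\in\mathcal{N}}\vartheta_n^{(u)} + \sum_{n\in\mathcal{N}}\sum_{m\in\mathcal{M}}\vartheta_{n,m}^{(s)} \\
&+ \sum_{n\in\mathcal{N}}\alpha_n^{(u)}\big(c_n(1-\varphi_n)d_n - \vartheta_n^{(u)}cost_n^{(u)}\big) \\
&+ \sum_{n\in\mathcal{N}}\sum_{m\in\mathcal{M}}\alpha_{n,m}^{(s)}\big(c_{n,m}x_{n,m}\varphi_n d_n - \vartheta_{n,m}^{(s)}cost_{n,m}^{(s)}\big),
\end{align*}
and impose the KKT conditions at a stationary point. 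Since $cost_n^{(u)}$ and $cost_{n,m}^{(s)}$ do not depend on the $\bm{\vartheta}$ variables, stationarity with respect to $\vartheta_n^{(u)}$ reduces to $1 - \alpha_n^{(u)}cost_n^{(u)} = 0$, which is \eqref{eq_alpha_u}, and stationarity with respect to $\vartheta_{n,m}^{(s)}$ gives \eqref{eq_alpha_s} in the same way. Because $cost_n^{(u)}>0$, these identities force $\alpha_n^{(u)}>0$ (and similarly $\alpha_{n,m}^{(s)}>0$), so complementary slackness makes the two reformulated constraints active; solving $c_n(1-\varphi_n)d_n - \vartheta_n^{(u)}cost_n^{(u)} = 0$ and its server counterpart then yields precisely \eqref{eq_vartheta_u} and \eqref{eq_vartheta_s}, recovering all four stated closed forms.

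The hard part will be justifying that the transform is \emph{exact} rather than a mere relaxation, which rests entirely on sign and positivity bookkeeping: I must argue that $cost_n^{(u)},cost_{n,m}^{(s)}>0$ and $\vartheta_n^{(u)},\vartheta_{n,m}^{(s)}>0$ on the relevant feasible set, so that clearing denominators is reversible and the induced multipliers are strictly positive (which is what guarantees the constraints bind). A secondary subtlety is that $\mathbb{P}_2$, and hence $\mathbb{P}_3$, is non-convex, so the above are necessary stationarity conditions characterizing KKT points rather than certificates of global optimality; I would reinforce this by noting that the same closed forms follow from a gradient-matching viewpoint, since with $\alpha_n^{(u)}=1/cost_n^{(u)}$ and $\vartheta_n^{(u)}=c_n(1-\varphi_n)d_n/cost_n^{(u)}$ the gradient of the $\mathbb{P}_3$ objective in the primal variables coincides with that of the corresponding ratio in $\mathbb{P}_1$, which is what anchors the transformed problem back to the original DPE objective.
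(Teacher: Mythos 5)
Your proposal follows essentially the same route as the paper's own proof: you form the (partial) Lagrangian of $\mathbb{P}_{2}$ with the denominator-cleared constraints $\vartheta_n^{(u)}cost_n^{(u)} - c_n(1-\varphi_n)d_n \le 0$ and its server counterpart, obtain $\alpha_n^{(u)} = 1/cost_n^{(u)}$ and $\alpha_{n,m}^{(s)} = 1/cost_{n,m}^{(s)}$ from stationarity in $\bm{\vartheta}$, and then use strict positivity of the multipliers plus complementary slackness to force the constraints active and recover \eqref{eq_vartheta_u}--\eqref{eq_vartheta_s}, exactly as in Appendix~\ref{append_lemma_p2top3}. Your added bookkeeping on why clearing denominators is reversible and your closing caveat that these are stationarity conditions (not global-optimality certificates) are sound refinements of, rather than departures from, the paper's argument.
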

\begin{proof}
    Refer to Appendix \ref{append_lemma_p2top3}.
\end{proof}

Based on \textbf{Lemma \ref{lemma_p2top3}}, we can split the ratio form of the objective function in Problem $\mathbb{P}_{1}$ and transform the non-convex constraints (\ref{vartheta_u_constr}) and (\ref{vartheta_s_constr}) into the objective function in Problem $\mathbb{P}_{3}$ by introducing new auxiliary variables $\alpha^{(u)}_n$ and $\alpha^{(s)}_{n,m}$. Besides, based on the analysis of the KKT conditions of Problem $\mathbb{P}_{3}$, we can obtain the relationships between auxiliary variables $[\alpha^{(u)}_n$, $\alpha^{(s)}_{n,m}$, $\vartheta_n^{(u)}$, $\vartheta_{n,m}^{(s)}]$ and original variables $[x_{n,m}$, $\varphi_n$, $\gamma_{n,m}$, $\phi_{n,m}$, $\rho_n$, $\zeta_{n,m}$, $\psi_n$, $T^{(u)}_n$, $T^{(s)}_{n,m}]$ as Equations (\ref{eq_alpha_u}), (\ref{eq_alpha_s}), (\ref{eq_vartheta_u}), (\ref{eq_vartheta_s}). Next, we consider alternative optimization to solve this complex problem. At the \mbox{$i$-th} iteration, we first fix $\bm{\alpha}^{(i-1)}$ and $\bm{\vartheta}^{(i-1)}$, and then optimize $\bm{x}^{(i)},\bm{\varphi}^{(i)},\bm{\gamma}^{(i)},\bm{\phi}^{(i)},\bm{\rho}^{(i)},\bm{\zeta}^{(i)},\bm{\psi}^{(i)}, \bm{T}^{(i)}$. We then update $\bm{\alpha}^{(i)}$ and $\bm{\vartheta}^{(i)}$ according to their results. Repeat the above optimization steps until the relative difference between the objective function values of Problem $\mathbb{P}_{3}$ in the $i$-th and \mbox{$(i-1)$-th} iterations is less than an acceptable threshold, and we get a stationary point for Problem $\mathbb{P}_{3}$. Next, we analyze how to optimize $\bm{x},\bm{\varphi},\bm{\gamma},\bm{\phi},\bm{\rho},\bm{\zeta},\bm{\psi},\bm{T}$ with the given $\bm{\vartheta}, \bm{\alpha}$.

\subsection{Solve \texorpdfstring{$\bm{\phi},\bm{\rho},\bm{\zeta},\bm{\psi}$}{}, and \texorpdfstring{$\bm{T}$}{} with fixed \texorpdfstring{$\bm{x}$}{}, \texorpdfstring{$\bm{\gamma}$}{}, and \texorpdfstring{$\bm{\varphi}$}{}}\label{sec.AO_step1}
If we first fix $\bm{x}$, $\bm{\gamma}$ and $\bm{\varphi}$, Problem $\mathbb{P}_{3}$ would be the following new Problem $\mathbb{P}_{4}$:
\begin{subequations}\label{prob4}
\begin{align}
\mathbb{P}_{4}:&\max\limits_{\bm{\phi},\bm{\rho},\bm{\zeta},\bm{\psi}, \bm{T}}  \sum\limits_{n \in \mathcal{N}} \alpha^{(u)}_n [c_n(1 - \varphi_n)d_n - \vartheta_n^{(u)}cost^{(u)}_n] \nonumber \\
&+ \!\!\sum\limits_{n \in \mathcal{N}}\sum\limits_{m \in \mathcal{M}}\alpha^{(s)}_{n,m} (c_{n,m}x_{n,m}\varphi_n d_n - \vartheta_{n,m}^{(s)}cost^{(s)}_{n,m}) \tag{\ref{prob4}}\\
&\text{s.t.} \quad (\text{\ref{phi_constr1}})\text{-}(\text{\ref{psi_constr}}), (\text{\ref{Tu_constr}})\text{-}(\text{\ref{Ts_constr}}). \nonumber
\end{align}
\end{subequations}
Note that $cost^{(s)}_{n,m}$ is still not convex. 
\begin{theorem}\label{theorem_p4toconcave}
    Problem $\mathbb{P}_{4}$ can be transformed into a solvable concave optimization problem by a fractional programming (FP) technique.
\end{theorem}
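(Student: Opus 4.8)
The plan is to freeze the block $\bm{x},\bm{\gamma},\bm{\varphi}$ (and the fractional‑programming auxiliaries $\bm{\alpha},\bm{\vartheta}$ supplied by \textbf{Lemma \ref{lemma_p2top3}}) and show that the objective of $\mathbb{P}_4$, viewed as a function of the remaining block $(\bm{\phi},\bm{\rho},\bm{\zeta},\bm{\psi},\bm{T})$, is a sum of concave terms over a convex feasible set, with exactly one exception that the FP step removes. First I would audit every term. The reward pieces $c_n(1-\varphi_n)d_n$ and $c_{n,m}x_{n,m}\varphi_nd_n$ are constants; the delays are represented by the auxiliary variables $T_n^{(u)},T_{n,m}^{(s)}$ through constraints \eqref{Tu_constr}--\eqref{Ts_constr}, so they enter the objective linearly; and the energies $E_n^{(up)}=\kappa_n(1-\varphi_n)d_n\eta_n(\psi_nf_n)^2$, $E^{(sp)}_{n,m}$, $E^{(sg)}_{n,m}$ are convex in $\psi_n$ resp.\ $\zeta_{n,m}$, hence, appearing with the nonnegative coefficient $-\alpha^{(\cdot)}\vartheta^{(\cdot)}\omega_e$, contribute concavely.

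Next I would dispatch the constraints. The key structural fact is that $r_{n,m}$ is jointly concave in $(\phi_{n,m},\rho_n)$: it is precisely the perspective (with respect to $\phi_{n,m}b_m$) of the concave map $\rho\mapsto\log_2\!\big(1+\tfrac{g_{n,m}p_n}{\sigma^2}\rho\big)$. Concavity and positivity of $r_{n,m}$ make $T^{(ut)}_{n,m}=\tfrac{x_{n,m}\varphi_nd_n}{r_{n,m}}$ convex; together with $T^{(up)}_n\propto 1/\psi_n$ and $T^{(sp)}_{n,m},T^{(sg)}_{n,m}\propto 1/\zeta_{n,m}$ convex, and $T^{(sv)}_{n,m}$ a maximum of such convex reciprocals, constraints \eqref{Tu_constr} and \eqref{Ts_constr} are convex. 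Thus the whole feasible region of $\mathbb{P}_4$ is convex.

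This isolates the sole obstruction: the transmission energy $E^{(ut)}_{n,m}=x_{n,m}p_n\varphi_nd_n\cdot\tfrac{\rho_n}{r_{n,m}}$, a ratio whose numerator is affine but whose denominator $r_{n,m}$ is \emph{concave}. This convex‑over‑concave orientation is exactly the wrong one for a direct argument (and for the plain quadratic transform, which wants a convex denominator), which is where the FP technique does the work. I would introduce, for each $(n,m)$, an auxiliary variable $\xi_{n,m}>0$ and replace the factor $\rho_n/r_{n,m}$ by the tight convex majorant from the elementary decoupling $\rho_n\cdot\tfrac{1}{r_{n,m}}\le\tfrac{\xi_{n,m}}{2}\rho_n^2+\tfrac{1}{2\xi_{n,m}}\tfrac{1}{r_{n,m}^2}$, where $\rho_n^2$ is convex and $1/r_{n,m}^2=(1/r_{n,m})^2$ is convex as the square of a nonnegative convex function. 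Equality holds at $\xi_{n,m}=1/(\rho_n r_{n,m})$, so alternating (i) maximization over $(\bm{\phi},\bm{\rho},\bm{\zeta},\bm{\psi},\bm{T})$ with (ii) the closed‑form update of each $\xi_{n,m}$ preserves the objective value and drives it monotonically upward. For fixed $\bm{\xi}$ the surrogate objective is a sum of concave terms on a convex set, i.e.\ a concave (hence solvable) program, which is the assertion of \textbf{Theorem \ref{theorem_p4toconcave}}.

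The main obstacle is handling this single ratio $\rho_n/r_{n,m}$: its orientation blocks both the naive ``reciprocal of a concave function'' argument and the direct quadratic transform, so the crux is exhibiting a convex surrogate that is simultaneously tight and amenable to a closed‑form auxiliary update. Once the perspective‑concavity of $r_{n,m}$ is established, the remaining term‑by‑term concavity/convexity bookkeeping is routine.
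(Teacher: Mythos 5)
Your proposal is correct and takes essentially the same approach as the paper's proof (\textbf{Lemma \ref{lemma_p4top5}}): your decoupling $\rho_n/r_{n,m}\le\tfrac{\xi_{n,m}}{2}\rho_n^2+\tfrac{1}{2\xi_{n,m}}\tfrac{1}{r_{n,m}^2}$ is, after absorbing the constant $x_{n,m}p_n\varphi_n d_n$ into the auxiliary variable, exactly the paper's quadratic transform with $\upsilon^{(s)}_{n,m}=\frac{1}{2x_{n,m}\rho_np_n\varphi_nd_nr_{n,m}}$, and your term-by-term audit (perspective-concavity of $r_{n,m}$, convexity of the reciprocal delay terms and quadratic energy terms) mirrors the paper's bookkeeping. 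The only cosmetic difference is that you certify tightness via the AM--GM equality condition and a monotone minorize--maximize argument, whereas the paper's appendix verifies equality of function values and all partial derivatives of the original and surrogate costs at the auxiliary update, which are equivalent justifications.
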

\begin{proof}
   \textbf{Theorem \ref{theorem_p4toconcave}} is proven by the following \mbox{\textbf{Lemma \ref{lemma_p4top5}}}.
\end{proof}

\begin{lemma}\label{lemma_p4top5}
Define a new auxiliary variable $\upsilon^{(s)}_{n,m}$, where $\upsilon^{(s)}_{n,m} = \frac{1}{2x_{n,m}\rho_np_n\varphi_nd_nr_{n,m}}$. Rewrite $cost^{(s)}_{n,m}$ as a new variable $\widetilde{cost}^{(s)}_{n,m}$ with $\upsilon^{(s)}_{n,m}$. Let $\bm{\upsilon^{(s)}} := [\upsilon^{(s)}_{n,m}|_{\forall n \in \mathcal{N},\forall m \in \mathcal{M}}]$. The Problem $\mathbb{P}_{4}$ can be transformed into the following Problem $\mathbb{P}_{5}$:
\begin{subequations}\label{prob5}
\begin{align}
\mathbb{P}_{5}:&\max\limits_{\bm{\phi},\bm{\rho},\bm{\zeta},\bm{\psi}, \bm{\upsilon^{(s)}},\bm{T}}  \sum\limits_{n \in \mathcal{N}} \alpha^{(u)}_n [c_n(1 - \varphi_n)d_n - \vartheta_n^{(u)}cost^{(u)}_n] \nonumber \\
&+ \!\!\sum\limits_{n \in \mathcal{N}}\sum\limits_{m \in \mathcal{M}}\alpha^{(s)}_{n,m} (c_{n,m}x_{n,m}\varphi_n d_n - \vartheta_{n,m}^{(s)}\widetilde{cost}^{(s)}_{n,m}) \tag{\ref{prob5}}\\
&\text{s.t.} \quad (\text{\ref{phi_constr1}})\text{-}(\text{\ref{psi_constr}}),(\text{\ref{Tu_constr}})\text{-}(\text{\ref{Ts_constr}}). \nonumber
\end{align}
\end{subequations}
If we alternatively optimize $\bm{\upsilon^{(s)}}$ and $\bm{\phi},\bm{\rho},\bm{\zeta},\bm{\psi},\bm{T}$, Problem $\mathbb{P}_{5}$ would be a concave problem. Besides, with the local optimum $\bm{\upsilon}^{\bm{(s)}(\star)}$, we can find $\bm{\phi}^{(\star)},\bm{\rho}^{(\star)},\bm{\zeta}^{(\star)},\bm{\psi}^{(\star)}, \bm{T}^{(\star)}$, which is a stationary point of Problem $\mathbb{P}_{5}$.
\end{lemma}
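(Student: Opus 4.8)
The plan is to show that, once $\bm x,\bm\gamma,\bm\varphi$ are frozen, the \emph{only} remaining source of nonconvexity in Problem $\mathbb{P}_4$ is the transmission-energy term $E^{(ut)}_{n,m}=\frac{x_{n,m}\rho_n p_n\varphi_n d_n}{r_{n,m}}$ appearing inside $cost^{(s)}_{n,m}$, and then to linearize this single ratio with a quadratic fractional-programming transform. First I would record that every other piece is already convex-compatible: the user cost reduces to $\omega_t T^{(u)}_n+\omega_e\kappa_n(1-\varphi_n)d_n\eta_n f_n^2\psi_n^2$, convex in $\psi_n$ and linear in $T^{(u)}_n$; the server energies $E^{(sp)}_{n,m}$ and $E^{(sg)}_{n,m}$ are quadratic in $\zeta_{n,m}$ (recall $\gamma_{n,m}$ is fixed) and hence convex; and each delay entering (\ref{Tu_constr})--(\ref{Ts_constr}) is convex, since $T^{(up)}_n\propto 1/\psi_n$ and $T^{(sp)}_{n,m},T^{(sg)}_{n,m},T^{(sv)}_{n,m}\propto 1/\zeta$ are reciprocals of positive linear terms (a pointwise maximum for $T^{(sv)}_{n,m}$ preserving convexity), while $T^{(ut)}_{n,m}\propto 1/r_{n,m}$ has a constant numerator and is the reciprocal of a positive concave rate. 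Because each such cost enters the maximized objective with a nonpositive coefficient ($-\alpha^{(u)}_n\vartheta^{(u)}_n\le 0$, $-\alpha^{(s)}_{n,m}\vartheta^{(s)}_{n,m}\le 0$), convexity of the cost is exactly what is needed for concavity of the objective.

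Next I would introduce the auxiliary variable and apply the transform. Writing $A_{n,m}:=x_{n,m}\rho_n p_n\varphi_n d_n$ and $B_{n,m}:=r_{n,m}$, the AM--GM inequality gives, for every $\upsilon^{(s)}_{n,m}>0$, the bound $\frac{A_{n,m}}{B_{n,m}}\le \upsilon^{(s)}_{n,m}A_{n,m}^2+\frac{1}{4\upsilon^{(s)}_{n,m}B_{n,m}^2}$, with equality precisely at $\upsilon^{(s)}_{n,m}=\frac{1}{2A_{n,m}B_{n,m}}=\frac{1}{2x_{n,m}\rho_n p_n\varphi_n d_n r_{n,m}}$, which is the stated value. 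Defining $\widetilde{cost}^{(s)}_{n,m}$ by replacing $E^{(ut)}_{n,m}$ in $cost^{(s)}_{n,m}$ with this upper bound produces Problem $\mathbb{P}_5$. Since $E^{(ut)}_{n,m}$ carries a nonpositive coefficient in the objective and the bound is tight at the indicated minimizer, jointly maximizing the $\mathbb{P}_5$ objective over $\bm{\upsilon^{(s)}}$ and the remaining variables drives each $\upsilon^{(s)}_{n,m}$ to $\frac{1}{2A_{n,m}B_{n,m}}$ and reproduces the $\mathbb{P}_4$ objective exactly (the per-$(n,m)$ minimizations decouple), establishing equivalence.

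I would then verify the claimed concavity for fixed $\bm{\upsilon^{(s)}}$. With $\bm x,\bm\varphi,\bm{\upsilon^{(s)}}$ fixed, $\upsilon^{(s)}_{n,m}A_{n,m}^2$ is a nonnegatively weighted square of $\rho_n$ and hence convex, while $\frac{1}{4\upsilon^{(s)}_{n,m}r_{n,m}^2}$ is convex in $(\rho_n,\phi_{n,m})$: the rate $r_{n,m}=\phi_{n,m}b_m\log_2(1+\frac{g_{n,m}\rho_n p_n}{\sigma^2\phi_{n,m}b_m})$ is the perspective of the concave map $\rho_n\mapsto\log_2(1+\mathrm{const}\cdot\rho_n)$, hence jointly concave and positive, so its reciprocal is convex and the square of that nonnegative convex reciprocal is again convex. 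Adding the already-convex terms from the first paragraph shows $\widetilde{cost}^{(s)}_{n,m}$ is convex, so the $\mathbb{P}_5$ objective is concave and constraints (\ref{phi_constr1})--(\ref{psi_constr}), (\ref{Tu_constr})--(\ref{Ts_constr}) are convex; this is the concave problem claimed. Finally, alternating the closed-form update $\upsilon^{(s)}_{n,m}=\frac{1}{2x_{n,m}\rho_n p_n\varphi_n d_n r_{n,m}}$ with the concave maximization over $\bm\phi,\bm\rho,\bm\zeta,\bm\psi,\bm T$ is a minorize--maximize/block-ascent scheme; because each update is exact and the transform is tight at the update point, the objective is monotone nondecreasing and bounded above, so the iterates converge to a point satisfying the stationarity conditions of $\mathbb{P}_5$, yielding the stationary $\bm\phi^{(\star)},\bm\rho^{(\star)},\bm\zeta^{(\star)},\bm\psi^{(\star)},\bm T^{(\star)}$ associated with $\bm{\upsilon}^{\bm{(s)}(\star)}$.

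The main obstacle is the convexity bookkeeping after the transform --- in particular confirming that $1/r_{n,m}^2$ is convex, which relies on the perspective argument that $r_{n,m}$ is jointly concave in $(\rho_n,\phi_{n,m})$ together with the composition rule that the square of a nonnegative convex function is convex. The remaining equivalence and convergence claims then follow from the standard tight-transform and alternating-maximization reasoning.
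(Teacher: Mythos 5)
Your proof is correct, and it reaches the lemma by a genuinely different equivalence argument than the paper, even though the transform itself (the quadratic fractional-programming substitution with $\upsilon^{(s)}_{n,m}=\frac{1}{2x_{n,m}\rho_np_n\varphi_nd_nr_{n,m}}$) is identical. The paper's Appendix C defines the original cost $\mathcal{F}$ and the transformed cost $\mathcal{G}$ and verifies \emph{first-order matching}: it computes the partial derivatives of both with respect to $T^{(s)}_{n,m}$, $\zeta_{n,m}$, $\rho_n$, and $\phi_{n,m}$, shows they coincide when $\upsilon^{(s)}_{n,m}$ takes the stated value, and notes the function values coincide there too --- a stationary-point correspondence in the style of the standard FP (quadratic transform) machinery. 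You instead prove the AM--GM majorization $\frac{A_{n,m}}{B_{n,m}}\le \upsilon^{(s)}_{n,m}A_{n,m}^2+\frac{1}{4\upsilon^{(s)}_{n,m}B_{n,m}^2}$ with tightness exactly at the stated $\upsilon^{(s)}_{n,m}$, observe the sign of the coefficient $-\alpha^{(s)}_{n,m}\vartheta^{(s)}_{n,m}\le 0$, and conclude that jointly maximizing $\mathbb{P}_5$ over $\bm{\upsilon^{(s)}}$ exactly recovers $\mathbb{P}_4$ (the per-$(n,m)$ inner minimizations decouple). This buys a slightly stronger, more elementary statement --- global equivalence of the two problems after optimizing out $\bm{\upsilon^{(s)}}$, rather than only agreement of values and gradients at the update point --- and it makes the monotone-ascent property of the alternation immediate, since each $\upsilon$-update is an exact tight-bound (MM) step. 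You also supply the convexity bookkeeping that the paper asserts without proof: that $r_{n,m}$ is the perspective of a concave log and hence jointly concave in $(\rho_n,\phi_{n,m})$, that $1/r_{n,m}$ and then $1/r_{n,m}^2$ are convex by the composition rules, and that the delays in (\ref{Tu_constr})--(\ref{Ts_constr}) are reciprocals (or a pointwise maximum of reciprocals, for $T^{(sv)}_{n,m}$) of positive linear or concave terms. One caveat you share with the paper rather than introduce: when $x_{n,m}=0$ or $\varphi_n=0$ the update $\upsilon^{(s)}_{n,m}=\frac{1}{2A_{n,m}B_{n,m}}$ is undefined, and both arguments implicitly restrict to the active pairs $(n,m)$; likewise, your final convergence claim (monotone bounded ascent implies a point satisfying stationarity) is stated at the same level of rigor as the paper's, which asserts the same without proving iterate convergence.
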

\begin{proof}
    Refer to Appendix \ref{append_lemma_p4top5}.
\end{proof}
Thanks to \textbf{Lemma \ref{lemma_p4top5}}, we can transform Problem $\mathbb{P}_{4}$ into a solvable concave problem Problem $\mathbb{P}_{5}$ with the help of $\upsilon^{(s)}_{n,m}$. This transformation enables the use of an FP-based method, where the original nonconvex ratio structure is iteratively approximated through the following procedures.

During the $i$-th iteration, we initially hold $\bm{\upsilon}^{\bm{(s)}(i-1)}$ constant and focus on optimizing $\bm{\phi}^{(i)}, \bm{\rho}^{(i)}, \bm{\zeta}^{(i)}, \bm{\psi}^{(i)}, \bm{T}^{(i)}$. Once these values are determined, we update $\bm{\upsilon}^{\bm{(s)}(i)}$ based on the obtained results. This optimization cycle is repeated until the difference in the objective function value of Problem $\mathbb{P}_{5}$ between the $i$-th and $(i-1)$-th iterations falls in a predefined threshold. Reaching this point signifies a solution for Problem $\mathbb{P}_{5}$, and consequently, for Problem $\mathbb{P}_{4}$. Next, we analyze how to optimize $\bm{x}$, $\bm{\gamma}$, $\bm{\varphi}$, and $\bm{T}$ with fixed $\bm{\phi},\bm{\rho},\bm{\zeta}$, and $\bm{\psi}$.

\subsection{Solve \texorpdfstring{$\bm{x}$}{}, \texorpdfstring{$\bm{\gamma}$}{}, \texorpdfstring{$\bm{\varphi}$}{}, and  \texorpdfstring{$\bm{T}$}{} with fixed \texorpdfstring{$\bm{\phi},\bm{\rho},\bm{\zeta}$}{}, and \texorpdfstring{$\bm{\psi}$}{}}\label{sec.AO_step2}
If $\bm{\phi},\bm{\rho},\bm{\zeta}$, and $\bm{\psi}$ are given, Problem $\mathbb{P}_{3}$ would be the following Problem:
\begin{subequations}\label{prob6}
\begin{align}
\mathbb{P}_{6}:&\max\limits_{\bm{x},\bm{\varphi},\bm{\gamma},\bm{T}}  \sum\limits_{n \in \mathcal{N}} \alpha^{(u)}_n [c_n(1 - \varphi_n)d_n - \vartheta_n^{(u)}cost^{(u)}_n] \nonumber \\
&+ \!\!\sum\limits_{n \in \mathcal{N}}\sum\limits_{m \in \mathcal{M}}\alpha^{(s)}_{n,m} (c_{n,m}x_{n,m}\varphi_n d_n - \vartheta_{n,m}^{(s)}cost^{(s)}_{n,m}) \tag{\ref{prob6}}\\
&\text{s.t.} \quad (\text{\ref{x_constr1}})\text{-}(\text{\ref{gamma_constr}}),~(\text{\ref{phi_constr2}}),~(\text{\ref{zeta_constr2}}),~(\text{\ref{Tu_constr}})\text{-}(\text{\ref{Ts_constr}}). \nonumber
\end{align}
\end{subequations}
Since $x_{n,m}$ is a binary discrete variable and other variables are continuous, Problem $\mathbb{P}_{6}$ is a mixed-integer nonlinear programming problem. We transform the constraint (\text{\ref{x_constr1}}) to $x_{n,m}(x_{n,m}-1)=0$. Therefore, Problem $\mathbb{P}_{6}$ can be further rewritten as:
\begin{subequations}\label{prob7}
\begin{align}
\mathbb{P}_{7}:&\max\limits_{\bm{x},\bm{\varphi},\bm{\gamma},\bm{T}}  \sum\limits_{n \in \mathcal{N}} \alpha^{(u)}_n [c_n(1 - \varphi_n)d_n - \vartheta_n^{(u)}cost^{(u)}_n] \nonumber \\
&+ \!\!\sum\limits_{n \in \mathcal{N}}\sum\limits_{m \in \mathcal{M}}\alpha^{(s)}_{n,m} (c_{n,m}x_{n,m}\varphi_n d_n - \vartheta_{n,m}^{(s)}cost^{(s)}_{n,m}) \tag{\ref{prob7}}\\
&\text{s.t.} \quad x_{n,m}(x_{n,m}-1)=0, \forall n \in \mathcal{N}, \forall m \in \mathcal{M}, \label{x_constr1_new}\\
&\quad\quad(\text{\ref{x_constr2}})\text{-}(\text{\ref{gamma_constr}}),~(\text{\ref{phi_constr2}}),~(\text{\ref{zeta_constr2}}),~(\text{\ref{Tu_constr}})\text{-}(\text{\ref{Ts_constr}}). \nonumber
\end{align}
\end{subequations}
\begin{theorem}\label{theorem_p7toconvex}
    Problem $\mathbb{P}_{7}$ can be transformed into a solvable convex optimization problem.
\end{theorem}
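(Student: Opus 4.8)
The plan is to isolate and remove the nonconvexities of $\mathbb{P}_{7}$ one layer at a time, pushing all of them into a single rank constraint that can then be dropped. With $\bm{\phi},\bm{\rho},\bm{\zeta},\bm{\psi}$ held fixed, the rate $r_{n,m}$, the propagation delay $T^{(bp)}_{n,m}$ and the validation term $T^{(sv)}_{n,m}$ are all constants, and the user part of the objective together with constraint~(\ref{Tu_constr}) is already affine in $\varphi_n$ and $T^{(u)}_n$. The genuinely hard terms are therefore: the bilinear products $x_{n,m}\varphi_n$ in $c_{n,m}x_{n,m}\varphi_n d_n$ and in $E^{(ut)}_{n,m}$; the coupled terms $x_{n,m}\varphi_n\gamma_{n,m}^{2}$ and $x_{n,m}\varphi_n(1-\gamma_{n,m})^{2}$ in $E^{(sp)}_{n,m},E^{(sg)}_{n,m}$; the fractional terms $x_{n,m}\varphi_n/\gamma_{n,m}$ and $x_{n,m}\varphi_n/(1-\gamma_{n,m})$ in the delay constraint~(\ref{Ts_constr}); and the binary equality~(\ref{x_constr1_new}).

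First I would reduce $\mathbb{P}_{7}$ to an equivalent quadratically constrained quadratic program (QCQP) by introducing auxiliary variables that break the high-degree couplings into quadratic monomials. Let $y_{n,m}:=x_{n,m}\varphi_n$ and $g_{n,m}:=\gamma_{n,m}^{2}$, and let $\tau_{n,m},\tau'_{n,m}$ be reciprocal variables tied by the quadratic equalities $\tau_{n,m}\gamma_{n,m}=1$ and $\tau'_{n,m}(1-\gamma_{n,m})=1$. Under these substitutions the two fractional delay terms become the bilinear forms $y_{n,m}\tau_{n,m}$ and $y_{n,m}\tau'_{n,m}$, the two server-energy terms become $y_{n,m}g_{n,m}$ and $y_{n,m}(1-2\gamma_{n,m}+g_{n,m})$, and the remaining offloading reward and transmission energy become linear in $y_{n,m}$. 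Every objective term and every constraint is then quadratic, so $\mathbb{P}_{7}$ is equivalent to a (nonconvex) QCQP whose only nonconvexity resides in the quadratic equalities $x_{n,m}(x_{n,m}-1)=0$, $y_{n,m}=x_{n,m}\varphi_n$, $g_{n,m}=\gamma_{n,m}^{2}$, $\tau_{n,m}\gamma_{n,m}=1$, $\tau'_{n,m}(1-\gamma_{n,m})=1$, and in the pairwise products among $y_{n,m},\gamma_{n,m},g_{n,m},\tau_{n,m},\tau'_{n,m}$ that appear in the objective and in~(\ref{Ts_constr}).

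Next I would homogenize and lift. Stacking the scalar variables (and a unit homogenizing entry) into a vector $\bm{v}$ and setting $\bm{V}=\bm{v}\bm{v}^{\top}$, every quadratic monomial above becomes a single entry of $\bm{V}$; in particular the binary equality turns into an affine relation between a diagonal entry of $\bm{V}$ and the homogenizing entry, and each auxiliary-variable definition becomes an affine equality in $\bm{V}$. The objective and all constraints are thus affine in $\bm{V}$, so $\mathbb{P}_{7}$ is equivalent to optimizing an affine functional of $\bm{V}$ over an affine slice of the positive semidefinite cone intersected with $\{\,\mathrm{rank}(\bm{V})=1\,\}$. Relaxing the lone rank constraint (SDR) yields a semidefinite program, which is convex and efficiently solvable; this is precisely the claimed transformation. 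I would then remark that, rather than discarding it, one may retain $\mathrm{rank}(\bm{V})=1$ and enforce it by the difference-of-convex penalty $\mathrm{Tr}(\bm{V})-\lambda_{\max}(\bm{V})\ge 0$, linearized about the current iterate, so that each DC subproblem is again a convex SDP --- the route the paper ultimately favors.

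The main obstacle is two-fold. The first difficulty is faithfulness of the linearization: the substitutions for $y_{n,m},g_{n,m},\tau_{n,m},\tau'_{n,m}$ reproduce the original terms only if the defining equalities hold with equality at optimality, which must be argued from the sign of the multipliers $\alpha^{(s)}_{n,m},\vartheta^{(s)}_{n,m}>0$ and the monotonicity of the cost in these quantities. The second, and deeper, difficulty is tightness: after SDR the optimal $\bm{V}$ need not be rank one, so the convex relaxation can be loose and the recovered $\bm{x}$ need not be binary-feasible. Establishing that the SDP is a useful surrogate --- or that the DC iteration actually drives $\bm{V}$ toward a rank-one, hence binary-feasible, solution --- is the delicate point, and it is exactly what the rank-one approximation and rounding comparisons of Section~\ref{sec.DAUR_algo} are designed to address.
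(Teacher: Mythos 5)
Your proposal is sound and lands on the same final object as the paper---a lifted semidefinite program whose lone nonconvexity is a rank-one constraint, handled either by dropping it or by the DC penalty $\mathrm{Tr}(\bm{V})-\lVert \bm{V}\rVert_2$---but it gets there by a genuinely different route on one decisive point: the treatment of $\gamma_{n,m}$. The paper does not quadratize the $\gamma$-couplings at all. Its Lemma \ref{lemma_gamma} first derives the closed-form optimizer $\gamma_{n,m}^\star=\tfrac12$: the energy factor $\gamma^2+\omega_b(1-\gamma)^2$ is minimized at $\gamma=\omega_b/(1+\omega_b)$, while $T^{(sp)}_{n,m}+T^{(sg)}_{n,m}$ is minimized (by the basic AM--GM inequality) at $\gamma=1/(1+\omega_b)$, and the standing assumption $\omega_b=1$ makes the two coincide. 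Substituting $\gamma^\star$ eliminates $\gamma$ entirely, so the residual problem $\mathbb{P}_{8}$ is a QCQP in $(\bm{x},\bm{\varphi})$ alone, whose only quadratic terms are $x_{n,m}\varphi_n$ and $x_{n,m}(x_{n,m}-1)$; Lemmas \ref{lemma_p8top9} and \ref{lemma_p9top10} then lift the $(NM+N+1)$-dimensional vector $(\bm{Q}^\intercal,1)$ to $\bm{S}$. Your construction instead keeps $\gamma$ as a variable and introduces $y_{n,m},g_{n,m},\tau_{n,m},\tau'_{n,m}$ with five families of quadratic equalities. This is more general---it would work for $\omega_b\neq 1$, where the delay-optimal and energy-optimal $\gamma$ differ and no single closed form is available---but it buys that generality at a real cost: the lifted matrix is several times larger, and every defining equality ($g=\gamma^2$, $\tau\gamma=1$, $y=x\varphi$, etc.) becomes an additional nonconvex coupling that the relaxation silently discards (the lifted entries satisfy the affine constraints without being genuine products unless $\mathrm{rank}(\bm{V})=1$), so your SDR is strictly weaker and the tightness issue you flag at the end is correspondingly harder. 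The paper's closed-form step is precisely what confines the relaxation gap to binariness and the single bilinear family $x_{n,m}\varphi_n$.

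One concrete slip: with $\bm{\zeta}$ fixed but $\gamma$ variable, $T^{(sv)}_{n,m}=\max_{m'\in\mathcal{M}\setminus\{m\}}\eta_v/\bigl((1-\gamma_{n,m'})\zeta_{n,m'}f_{m'}\bigr)$ is \emph{not} constant, contrary to your opening inventory---it depends on the $\gamma_{n,m'}$. Within your framework this is repairable, since each term equals $\eta_v\tau'_{n,m'}/(\zeta_{n,m'}f_{m'})$, linear in $\tau'$, so the max admits an epigraph of linear constraints; but it must be stated. In the paper the term becomes constant only \emph{after} Lemma \ref{lemma_gamma} pins $\gamma=\tfrac12$, and even there the proof justifies ignoring the $\gamma$-dependence of $T^{(sv)}_{n,m}$ only by arguing it is dominated by $T^{(sp)}_{n,m}+T^{(sg)}_{n,m}$.
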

\begin{proof}
    \textbf{Theorem \ref{theorem_p7toconvex}} is proven by Lemmas \ref{lemma_gamma},
    \ref{lemma_p8top9}, and  \ref{lemma_p9top10} below.
\end{proof}

\begin{lemma}\label{lemma_gamma}
In Problem $\mathbb{P}_{7}$, if focusing on $cost_{n,m}^{(s)}$, $T^{(sp)}_{n,m}$, $T^{(sg)}_{n,m}$ and setting $\omega_b = 1$, we can obtain the optimum solution of $\gamma_{n,m}$ as $\gamma_{n,m}^\star = \frac{1}{2}$.
\end{lemma}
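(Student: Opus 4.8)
The plan is to treat the choice of $\gamma_{n,m}$ as a scalar optimization and exploit the symmetry that $\omega_b=1$ creates. First I would locate every appearance of $\gamma_{n,m}$ in Problem $\mathbb{P}_{7}$. Scanning the cost, $\gamma_{n,m}$ enters only through the server processing and block-generation quantities: the delays $T^{(sp)}_{n,m}$ and $T^{(sg)}_{n,m}$ (which feed the delay budget constraint (\ref{Ts_constr})) and the energies $E^{(sp)}_{n,m}$ and $E^{(sg)}_{n,m}$ (which sit inside $cost^{(s)}_{n,m}$ in the objective). The propagation delay $T^{(bp)}_{n,m}$ is constant, the transmission terms depend on $\bm{\phi},\bm{\rho}$ only, and the validation delay $T^{(sv)}_{n,m}$ couples to $\gamma_{n,m'}$ for other servers $m'\neq m$; this is precisely why the statement restricts attention to $cost^{(s)}_{n,m}$, $T^{(sp)}_{n,m}$, and $T^{(sg)}_{n,m}$ of the connected pair.

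Next I would reduce the objective maximization to a cost minimization. In the objective of $\mathbb{P}_{7}$ the variable $\gamma_{n,m}$ sits only inside $-\vartheta^{(s)}_{n,m}\,cost^{(s)}_{n,m}$, and for a connected pair ($x_{n,m}=1$, $\varphi_n>0$) the multipliers satisfy $\alpha^{(s)}_{n,m}=1/cost^{(s)}_{n,m}>0$ and $\vartheta^{(s)}_{n,m}>0$ by (\ref{eq_alpha_s}) and (\ref{eq_vartheta_s}); hence maximizing over $\gamma_{n,m}$ is the same as minimizing the $\gamma_{n,m}$-dependent part of $cost^{(s)}_{n,m}$. To bring the delay terms into play I would argue that constraint (\ref{Ts_constr}) is active at the optimum (otherwise $T^{(s)}_{n,m}$, and therefore the cost, could be lowered), so that the effective $\gamma_{n,m}$-dependent objective is $h(\gamma_{n,m}):=\omega_t\big(T^{(sp)}_{n,m}+T^{(sg)}_{n,m}\big)+\omega_e\big(E^{(sp)}_{n,m}+E^{(sg)}_{n,m}\big)$ up to terms free of $\gamma_{n,m}$.

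Then I would substitute the closed forms and set $\omega_b=1$. Writing $K:=\frac{x_{n,m}\varphi_n d_n\eta_m}{\zeta_{n,m}f_m}$ and $L:=\kappa_m x_{n,m}\varphi_n d_n\eta_m(\zeta_{n,m}f_m)^2$, this yields $h(\gamma)=\omega_t K\big(\tfrac{1}{\gamma}+\tfrac{1}{1-\gamma}\big)+\omega_e L\big(\gamma^2+(1-\gamma)^2\big)$ on $\gamma\in(0,1)$. The key observation is that with $\omega_b=1$ the reflection $\gamma\mapsto 1-\gamma$ fixes $h$, i.e.\ $h(\gamma)=h(1-\gamma)$; moreover each summand ($1/\gamma$, $1/(1-\gamma)$, $\gamma^2$, $(1-\gamma)^2$) is convex on $(0,1)$, so $h$ is convex with positive weights. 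A convex function symmetric about $\tfrac12$ attains its minimum there: by Jensen, $h(\tfrac12)\le\tfrac12\big(h(\gamma)+h(1-\gamma)\big)=h(\gamma)$ for every $\gamma$. Equivalently one checks $h'(\tfrac12)=0$ and $h''>0$ on $(0,1)$. Since $\tfrac12$ is interior to $(0,1)$, no boundary case arises, and I conclude $\gamma^\star_{n,m}=\tfrac12$.

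The routine part is the one-dimensional convex minimization; the step that needs care is the reduction in the second paragraph, namely justifying that the delays $T^{(sp)}_{n,m}$ and $T^{(sg)}_{n,m}$, which enter the objective only indirectly through the auxiliary variable $T^{(s)}_{n,m}$ and constraint (\ref{Ts_constr}), may be folded into a single $\gamma_{n,m}$-dependent cost via the active-constraint argument, and that the cross-server validation coupling through $T^{(sv)}_{n,m}$ can legitimately be set aside as the statement directs. Once that reduction is secured, symmetry together with convexity delivers $\gamma^\star_{n,m}=\tfrac12$ at once.
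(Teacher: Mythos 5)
Your proof is correct, and it reaches the conclusion by a genuinely different route than the paper. The paper's proof splits the $\gamma_{n,m}$-dependent cost into two pieces and minimizes each separately: the energy part $\gamma_{n,m}^2+\omega_b(1-\gamma_{n,m})^2$, minimized at $\gamma_{n,m}=\frac{\omega_b}{1+\omega_b}$, and the delay part $T^{(sp)}_{n,m}+T^{(sg)}_{n,m}$, handled by the AM--GM bound $T^{(sp)}_{n,m}+T^{(sg)}_{n,m}\geq 2\sqrt{T^{(sp)}_{n,m}T^{(sg)}_{n,m}}$ with equality at $T^{(sp)}_{n,m}=T^{(sg)}_{n,m}$, i.e.\ $\gamma_{n,m}=\frac{1}{1+\omega_b}$; setting $\omega_b=1$ makes the two minimizers coincide at $\tfrac12$. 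You instead substitute $\omega_b=1$ first and minimize the joint function $h(\gamma)=\omega_t K\bigl(\tfrac{1}{\gamma}+\tfrac{1}{1-\gamma}\bigr)+\omega_e L\bigl(\gamma^2+(1-\gamma)^2\bigr)$ by symmetry about $\tfrac12$ plus convexity (Jensen). Your route buys two things. First, it is more robust: the AM--GM equality point is the minimizer of a sum only when the product is constant, and here $T^{(sp)}_{n,m}T^{(sg)}_{n,m}\propto\frac{1}{\gamma_{n,m}(1-\gamma_{n,m})}$ varies with $\gamma_{n,m}$, so for general $\omega_b$ the paper's delay minimizer should be $\frac{1}{1+\sqrt{\omega_b}}$ rather than $\frac{1}{1+\omega_b}$; at $\omega_b=1$ both collapse to $\tfrac12$ precisely because of the symmetry you exploit, so your argument is the airtight version of the paper's for the case the lemma actually claims. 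Second, you make explicit two reductions the paper leaves implicit: that $\vartheta^{(s)}_{n,m}>0$ converts the maximization into minimizing the $\gamma_{n,m}$-dependent cost, and that constraint (\ref{Ts_constr}) is active at the optimum so the delays $T^{(sp)}_{n,m}$ and $T^{(sg)}_{n,m}$, which enter the objective only through the auxiliary variable $T^{(s)}_{n,m}$, may be folded into that cost (the paper simply writes the expansion as if $T^{(s)}_{n,m}$ equaled the delay sum). What the paper's separate-minimization buys in exchange is visibility of how the two candidate minimizers depend on $\omega_b$, which motivates the hypothesis $\omega_b=1$ in the lemma statement, whereas your symmetry argument is tied to that case; both treatments set aside the cross-pair coupling through $T^{(sv)}_{n,m}$, as the lemma's ``focusing on'' clause directs.
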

\begin{proof}
    Refer to Appendix \ref{append_lemma_gamma}.
\end{proof}
With \textbf{Lemma \ref{lemma_gamma}}, we obtain the optimum solution of $\gamma_{n,m}$ as $\gamma_{n,m}^\star = \frac{1}{2}$. By substituting the value of $\gamma_{n,m}^\star$, the objective function of Problem $\mathbb{P}_{7}$ becomes:
\begin{talign}
    &\sum_{n \in \mathcal{N}} \alpha^{(u)}_n [c_n(1 - \varphi_n)d_n - \vartheta_n^{(u)}cost^{(u)}_n] \nonumber \\
    &+ \sum_{n \in \mathcal{N}}\sum_{m \in \mathcal{M}}\alpha^{(s)}_{n,m} (c_{n,m}x_{n,m}\varphi_n d_n - \vartheta_{n,m}^{(s)}cost^{(s)}_{n,m})\nonumber \\
    &=\!\!\sum_{n \in \mathcal{N}}\{\!\alpha^{(u)}_n\! c_n d_n \!\!-\!\! \alpha^{(u)}_n \!\vartheta_n^{(u)} \!\omega_t T_n^{(u)}\! \!\!-\!\!\alpha^{(u)}_n \!\vartheta_n^{(u)} \!\omega_e \kappa_n d_n \eta_n f_n^2 \psi_n^2 \nonumber \\
    &- (\alpha^{(u)}_n c_n d_n - \alpha^{(u)}_n \vartheta_n^{(u)} \omega_e \kappa_n d_n \eta_n f_n^2 \psi_n^2) \varphi_n\} \nonumber \\
    &+ \sum_{n \in \mathcal{N}}\sum_{m \in \mathcal{M}}\{\alpha^{(s)}_{n,m}c_{n,m}d_n x_{n,m}\varphi_n - \alpha^{(s)}_{n,m} \vartheta_{n,m}^{(s)} \omega_t T^{(s)}_{n,m} \nonumber \\
    &- \alpha^{(s)}_{n,m} \vartheta_{n,m}^{(s)} \omega_e \frac{\rho_n p_n d_n}{r_{n,m}}x_{n,m}\varphi_n \nonumber \\
    &- \frac{1}{2}\alpha^{(s)}_{n,m} \vartheta_{n,m}^{(s)} \omega_e \kappa_m d_n \eta_m\zeta_{n,m}^2 f_m^2 x_{n,m}\varphi_n\}.
\end{talign}
For brevity, let 
\begin{talign}
    &A_n:=\alpha^{(u)}_n c_n d_n - \alpha^{(u)}_n \vartheta_n^{(u)} \omega_e \kappa_n d_n \eta_n f_n^2 \psi_n^2,\\
    &B_{n,m}:= \frac{1}{2}\alpha^{(s)}_{n,m} \vartheta_{n,m}^{(s)} \omega_e \kappa_m d_n \eta_m\zeta_{n,m}^2 f_m^2 \nonumber \\
    &\quad \quad \quad \quad +\alpha^{(s)}_{n,m} \vartheta_{n,m}^{(s)} \omega_e \frac{\rho_n p_n d_n}{r_{n,m}} - \alpha^{(s)}_{n,m}c_{n,m}d_n,\\
    &C:=\sum_{n \in \mathcal{N}}(-\alpha^{(u)}_n c_n d_n + \alpha^{(u)}_n \vartheta_n^{(u)} \omega_e \kappa_n d_n \eta_n f_n^2 \psi_n^2).
\end{talign}
Let $\bm{A}:=[A_n]|_{n \in \mathcal{N}}$ and $\bm{B}:=[B_{n,m}]|_{n \in \mathcal{N},m \in \mathcal{M}}$. Besides, the ``max'' problem in Problem $\mathbb{P}_{7}$ can also be rewritten as a ``min'' problem:
\begin{subequations}\label{prob8}
\begin{align}
\mathbb{P}_{8}:&\min\limits_{\bm{x},\bm{\varphi},\bm{T}}  C+\sum\limits_{n \in \mathcal{N}} 
\alpha^{(u)}_n \vartheta_n^{(u)} \omega_t T_n^{(u)} + A_n \varphi_n \nonumber \\
&+ \sum_{n \in \mathcal{N}}\sum_{m \in \mathcal{M}}\alpha^{(s)}_{n,m} \vartheta_{n,m}^{(s)} \omega_t T^{(s)}_{n,m} + B_{n,m} x_{n,m} \varphi_n \tag{\ref{prob8}}\\
&\text{s.t.} \quad (\text{\ref{x_constr1_new}}), (\text{\ref{x_constr2}})\text{-}(\text{\ref{varphi_constr}}),
(\text{\ref{phi_constr2}}), (\text{\ref{zeta_constr2}}),
(\text{\ref{Tu_constr}})\text{-}(\text{\ref{Ts_constr}}). \nonumber
\end{align}
\end{subequations}
Since there is the quadratic term $x_{n,m} \varphi_n$ and the quadratic constraint (\text{\ref{x_constr1_new}}), Problem $\mathbb{P}_{8}$ is a quadratically constrained quadratic programming (QCQP) problem. Problem $\mathbb{P}_{8}$ can be addressed via a QCQP-based method, where the quadratic structure is preserved through convex relaxation and subsequently handled with semidefinite programming.
Next, we will explain how to transform Problem $\mathbb{P}_{8}$ into a standard QCQP form problem. We first use a new matrix $\bm{Q}:=(\bm{\varphi}^\intercal,\bm{x_1}^\intercal,\cdots,\bm{x_M}^\intercal)^\intercal$ to combine $\bm{x}$ and $\bm{\varphi}$, where $\bm{\varphi} = (\varphi_1,\cdots,\varphi_N)^\intercal$ and $\bm{x_m} = (x_{1,m},\cdots,x_{N,m})$. Here, we define some auxiliary matrices and vectors to facilitate our transformation. Let 
\begin{talign}
    &e_i:=(0,\cdots,1_{i\text{-th}},\cdots,0)_{NM+N\times1}^\intercal,\\
    &\bm{e}_{i,j}:=(e_i,\cdots,e_j)^\intercal,\\
    &e_{\bar{i}}\!:=\!(0,\cdots,1_{i\text{-th}},\cdots,1_{(i+N)\text{-th}},\cdots,1_{[i+N(M-1)]\text{-th}},\cdots,0)^\intercal,\\
    &\bm{e}_{\bar{i},\bar{j}}:=(e_{\bar{i}}, \cdots, e_{\bar{j}})^\intercal,\\
    &e_{i\rightarrow j}:=(0,\cdots,1_{i\text{-th}}, 1, \cdots, 1_{j\text{-th}}, 0, \cdots,0)^\intercal, i < j,\\
    &\bm{I}_{NM+N\times N}:=(\bm{I}_N,\bm{0}_{N\times NM})^\intercal,\\
    &\bm{I}_{N\rightarrow NM}:=(\bm{I}_N,\cdots,\bm{I}_N)_{N\times NM}.
\end{talign}
To make terms 
\begin{talign}
    &\sum_{n \in \mathcal{N}} \alpha^{(u)}_n \vartheta_n^{(u)} \omega_t T_n^{(u)},\nonumber \\
    &\sum_{n \in \mathcal{N}}\sum_{m \in \mathcal{M}}\alpha^{(s)}_{n,m} \vartheta_{n,m}^{(s)} \omega_t T^{(s)}_{n,m} \nonumber
\end{talign}
succincter, we introduce two variables $T^{(u)}$ and $T^{(s)}$. Then, the constraints (\text{\ref{Tu_constr}}) and (\text{\ref{Ts_constr}}) will be
\begin{talign}
&\sum_{n \in \mathcal{N}} 
\alpha^{(u)}_n \vartheta_n^{(u)} \omega_t T_n^{(u)} \leq T^{(u)}, \label{Tu_constr1}\\
&\sum_{n \in \mathcal{N}}\sum_{m \in \mathcal{M}}\alpha^{(s)}_{n,m} \vartheta_{n,m}^{(s)} \omega_t T^{(s)}_{n,m} \leq T^{(s)},\label{Ts_constr1}
\end{talign}
respectively.
\begin{lemma}\label{lemma_p8top9}
There're matrices $\bm{P}_0$, $\bm{W}_0$, $\bm{P}_2^{(T_u)}$, $\bm{P}_0^{(T_s)}$, and terms $P_1^{(T_u)}$ and $P_1^{(T_s)}$ that can transform Problem $\mathbb{P}_{8}$ into the standard QCQP Problem $\mathbb{P}_{9}$:
\begin{subequations}\label{prob9}
\begin{align}
\mathbb{P}_{9}:&\min\limits_{\bm{Q},T^{(u)},T^{(s)}} \bm{Q}^\intercal \bm{P}_0 \bm{Q} + \bm{W}_0^\intercal \bm{Q} + T^{(u)} + T^{(s)} + C \tag{\ref{prob9}}\\
\text{s.t.} \quad 
&\text{diag}(\boldsymbol{e}_{N+1,NM+N}^\intercal\boldsymbol{Q})(\text{diag}(\boldsymbol{e}_{N+1,NM+N}^\intercal\boldsymbol{Q})-\bm{I})=\bm{0}, \label{x_constr1_qcqp}\\       
& \text{diag}(\boldsymbol{e}_{\overline{1},\overline{M}}^\intercal\boldsymbol{e}_{N+1,NM+N}^\intercal\boldsymbol{Q})=\bm{I}, \label{x_constr2_qcqp}\\
& \text{diag}(\boldsymbol{e}_{1,N}^\intercal\boldsymbol{Q})\preceq\bm{I}, \label{varphi_constr1_qcqp}\\
& \text{diag}(\boldsymbol{e}_{1,N}^\intercal\boldsymbol{Q})\succeq\bm{0}, \label{varphi_constr2_qcqp}\\
&\boldsymbol{\phi}^\intercal\boldsymbol{e}_{N+1,NM+N}^\intercal\boldsymbol{Q}-1 \leq 0, \label{x_phi_constr_qcqp}\\
&\boldsymbol{\zeta}^\intercal\boldsymbol{e}_{N+1,NM+N}^\intercal\boldsymbol{Q}-1 \leq 0, \label{x_zeta_constr_qcqp}\\
& {\bm{P}_2^{(T_u)}}^\intercal \bm{Q} + P_1^{(T_u)} \leq T^{(u)}, \label{Tu_constr_qcqp}\\
& \bm{Q}^\intercal \bm{P}_0^{(T_s)} \bm{Q} + P_1^{(T_s)} \leq T^{(s)}. \label{Ts_constr_qcqp}
\end{align}
\end{subequations}
\end{lemma}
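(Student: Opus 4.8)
The plan is to prove Lemma~\ref{lemma_p8top9} by explicit construction, exhibiting each of the claimed matrices and scalars so that Problem $\mathbb{P}_{9}$ is term-by-term identical to Problem $\mathbb{P}_{8}$. The starting observation is that stacking $\bm{\varphi}$ and the columns $\bm{x}_m$ into the single vector $\bm{Q}=(\bm{\varphi}^\intercal,\bm{x}_1^\intercal,\dots,\bm{x}_M^\intercal)^\intercal\in\mathbb{R}^{N+NM}$ places $\varphi_n$ at coordinate $n$ and $x_{n,m}$ at coordinate $N+(m-1)N+n$. The selector vectors $e_i$, $e_{\bar i}$, $e_{i\rightarrow j}$ and their stacked versions are designed precisely to read off the blocks of $\bm{Q}$: $\bm{e}_{1,N}^\intercal\bm{Q}$ returns $\bm{\varphi}$, while $\bm{e}_{N+1,NM+N}^\intercal\bm{Q}$ returns the whole $\bm{x}$ block. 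First I would verify the constraint rewrites, which are the most mechanical: the binary condition $x_{n,m}(x_{n,m}-1)=0$ becomes the diagonal matrix equation~(\ref{x_constr1_qcqp}) because $\mathrm{diag}(\bm{e}_{N+1,NM+N}^\intercal\bm{Q})$ carries every $x_{n,m}$ on its diagonal; the coupling vector $e_{\bar i}$ sums $x_{n,m}$ over $m$ for each user, so that~(\ref{x_constr2_qcqp}) reproduces $\sum_m x_{n,m}=1$; and the box constraint $\varphi_n\in[0,1]$ splits into~(\ref{varphi_constr1_qcqp})--(\ref{varphi_constr2_qcqp}). The per-server bandwidth and computing bounds~(\ref{phi_constr2}) and~(\ref{zeta_constr2}) become the affine inequalities~(\ref{x_phi_constr_qcqp})--(\ref{x_zeta_constr_qcqp}), since $\bm{\phi}$ and $\bm{\zeta}$ are fixed constants in this subproblem and each bound is therefore linear in the $x$ block of $\bm{Q}$.

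Next I would handle the objective. After substituting $\gamma_{n,m}^\star=\frac{1}{2}$ from Lemma~\ref{lemma_gamma}, the only genuinely quadratic part of the cost is the bilinear sum $\sum_{n,m}B_{n,m}x_{n,m}\varphi_n$; since $x_{n,m}\varphi_n$ is the product of coordinate $N+(m-1)N+n$ and coordinate $n$ of $\bm{Q}$, I would place the coefficients $B_{n,m}$ (symmetrised as $B_{n,m}/2$ in the two mirrored off-diagonal positions) into $\bm{P}_0$ so that $\bm{Q}^\intercal\bm{P}_0\bm{Q}=\sum_{n,m}B_{n,m}x_{n,m}\varphi_n$ and $\bm{P}_0$ stays symmetric. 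The remaining linear term $\sum_n A_n\varphi_n$ is encoded by $\bm{W}_0=\bm{I}_{NM+N\times N}\bm{A}$, which embeds $\bm{A}$ into the $\varphi$ block, and the additive constant $C$ carries over unchanged. The delay terms are removed from the objective by introducing the aggregate variables $T^{(u)},T^{(s)}$ through~(\ref{Tu_constr1}) and~(\ref{Ts_constr1}); this is valid because the coefficients $\alpha^{(u)}_n\vartheta^{(u)}_n\omega_t$ and $\alpha^{(s)}_{n,m}\vartheta^{(s)}_{n,m}\omega_t$ are positive, so minimisation drives both inequalities tight at the optimum.

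The construction of the two delay constraints is where the linear/quadratic split must be checked carefully. With $\bm{\psi}$ fixed, $T^{(up)}_n=\frac{(1-\varphi_n)d_n\eta_n}{\psi_n f_n}$ is affine in $\varphi_n$, so substituting into~(\ref{Tu_constr1}) yields the affine inequality~(\ref{Tu_constr_qcqp}) with a coefficient vector $\bm{P}_2^{(T_u)}$ supported on the $\varphi$ block and the scalar $P_1^{(T_u)}$ collecting the constant part. By contrast, with $\bm{\rho},\bm{\zeta}$ fixed and $\gamma_{n,m}=\frac{1}{2}$, each of $T^{(ut)}_{n,m},T^{(sp)}_{n,m},T^{(sg)}_{n,m}$ is proportional to $x_{n,m}\varphi_n$ while $T^{(bp)}_{n,m}$ and $T^{(sv)}_{n,m}$ are constants in $\bm{x},\bm{\varphi}$; hence~(\ref{Ts_constr1}) becomes a genuine quadratic inequality, giving the quadratic matrix $\bm{P}_0^{(T_s)}$ (again populated on the $\varphi$--$x$ cross block) and the constant $P_1^{(T_s)}$. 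Collecting all of the above reproduces the objective and every constraint of $\mathbb{P}_{9}$ exactly, establishing the equivalence.

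I expect the main obstacle to be purely the bookkeeping of indices: ensuring that each bilinear product $x_{n,m}\varphi_n$ is assigned to the correct entry pair of $\bm{Q}$, that the symmetrisation of $\bm{P}_0$ and $\bm{P}_0^{(T_s)}$ neither double-counts nor omits a term, and that the dimensions of the stacked selector matrices $\bm{e}_{1,N}$, $\bm{e}_{N+1,NM+N}$, and $\bm{e}_{\bar 1,\bar M}$ line up when composed. No new inequality or source of nonconvexity is introduced beyond those already present in $\mathbb{P}_{8}$, and every constraint of $\mathbb{P}_{9}$ is either affine or a single quadratic form, so the result is a bona fide QCQP; the verification is therefore entirely algebraic with no analytical difficulty.
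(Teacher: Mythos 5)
Your proposal is correct and follows essentially the same route as the paper's proof: the identical selector-based construction with $\bm{P}_0$ encoding the bilinear sum $\sum_{n,m}B_{n,m}x_{n,m}\varphi_n$ on the $\varphi$--$x$ cross block, $\bm{W}_0$ encoding $\sum_n A_n\varphi_n$, an affine user-delay constraint $({\bm{P}_2^{(T_u)}},P_1^{(T_u)})$ because $T^{(up)}_n$ is affine in $\varphi_n$ with $\bm{\psi}$ fixed, and a quadratic server-delay constraint $(\bm{P}_0^{(T_s)},P_1^{(T_s)})$ because $T^{(ut)}_{n,m},T^{(sp)}_{n,m},T^{(sg)}_{n,m}$ are proportional to $x_{n,m}\varphi_n$ while $T^{(bp)}_{n,m}$ and $T^{(sv)}_{n,m}$ are constants in $(\bm{x},\bm{\varphi})$. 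Your two added touches --- explicitly symmetrising $\bm{P}_0$ (the paper's asymmetric product $\bm{I}_{NM+N\times N}\bm{I}_{N\rightarrow NM}\mathrm{diag}(\bm{B})\bm{e}_{N+1,NM+N}$ induces the same quadratic form) and the epigraph-tightness argument justifying the aggregate variables $T^{(u)},T^{(s)}$ --- are correct elaborations of steps the paper leaves implicit.
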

\begin{proof}
    Refer to Appendix \ref{append_lemma_p8top9}.
\end{proof}

Using \textbf{Lemma \ref{lemma_p8top9}}, we can transform Problem $\mathbb{P}_{8}$ into the standard QCQP form Problem $\mathbb{P}_{9}$. However, Problem $\mathbb{P}_{9}$ is still non-convex. Then, we will use the semidefinite programming (SDP) method to transform this QCQP problem into an SDR problem. We introduce a new variable $\bm{S}:=(\bm{Q}^\intercal,1)^\intercal(\bm{Q}^\intercal,1)$. 

\begin{lemma}\label{lemma_p9top10}
There exist matrices $\bm{P}_1$, $\bm{P}_2$, $\bm{P}_3$, $\bm{P}_4$, $\bm{P}_5$, $\bm{P}_6$, $\bm{P}_7$, and $\bm{P}_8$ that can convert the QCQP Problem $\mathbb{P}_{9}$ into the SDR Problem $\mathbb{P}_{10}$:
\begin{subequations}\label{prob10}
\begin{align}
\mathbb{P}_{10}: &\min\limits_{\bm{S},T^{(u)},T^{(s)}}\quad  \text{Tr}(\bm{P}_1 \bm{S})\tag{\ref{prob10}}\\
\text{s.t.} \quad & \text{Tr}(\bm{P}_2 \bm{S})=0, \label{x_constr1_sdr}\\       & \text{Tr}(\bm{P}_3 \bm{S})=0, \label{x_constr2_sdr}\\
         & \text{Tr}(\bm{P}_4 \bm{S})\leq0, \label{varphi_constr_sdr}\\
         & \text{Tr}(\bm{P}_5 \bm{S})\leq0, \label{x_phi_constr_sdr}\\
         & \text{Tr}(\bm{P}_6 \bm{S})\leq0, \label{x_zeta_constr_sdr}\\
         & \text{Tr}(\bm{P}_7 \bm{S})\leq T^{(u)}, \label{Tu_constr_sdr}\\
         & \text{Tr}(\bm{P}_8 \bm{S})\leq T^{(s)}, \label{Ts_constr_sdr}\\
         & \bm{S}\succeq0, \text{rank}(\bm{S})=1, \label{S_constr_sdr}
\end{align}
\end{subequations}
where $Tr(\cdot)$ means the trace of a matrix.
\end{lemma}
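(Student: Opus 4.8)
The plan is to prove Lemma~\ref{lemma_p9top10} by an explicit lifting construction, so that the argument is essentially bookkeeping once the right identity is in place. Writing $\bar{\bm{Q}}:=(\bm{Q}^\intercal,1)^\intercal$ so that $\bm{S}=\bar{\bm{Q}}\bar{\bm{Q}}^\intercal$, the single tool I would use is the homogenization identity
\[
\mathrm{Tr}(\bm{M}\bm{S})=\bar{\bm{Q}}^\intercal\bm{M}\bar{\bm{Q}}=\bm{Q}^\intercal\bm{A}\bm{Q}+\bm{b}^\intercal\bm{Q}+c,\qquad \bm{M}=\begin{bmatrix}\bm{A}&\tfrac12\bm{b}\\[2pt]\tfrac12\bm{b}^\intercal&c\end{bmatrix}.
\]
Thus every quadratic-plus-affine expression in $\bm{Q}$ is reproduced exactly by one symmetric matrix acting on $\bm{S}$: a pure quadratic form has $\bm{b}=\bm{0},c=0$, a pure affine form has $\bm{A}=\bm{0}$, and an additive constant sits in the bottom-right corner. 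This immediately gives the blueprint for every $\bm{P}_i$.

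I would then read off the matrices one constraint at a time, reusing the selection objects $e_i,\bm{e}_{i,j},e_{\bar i},\bm{e}_{\bar i,\bar j}$ introduced for Lemma~\ref{lemma_p8top9}. For the objective I set $\bm{P}_1$ to the block matrix built from $\bm{P}_0$ (quadratic part), $\bm{W}_0$ (linear part), and $C$ (corner), while the scalars $T^{(u)},T^{(s)}$ are carried through unchanged; the quadratic constraint \eqref{Ts_constr_qcqp} becomes $\bm{P}_8$ (block form of $\bm{P}_0^{(T_s)}$ and $P_1^{(T_s)}$) and the affine constraint \eqref{Tu_constr_qcqp} becomes $\bm{P}_7$ (purely affine, so only its off-diagonal and corner blocks are nonzero). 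The bilinear resource constraints \eqref{x_phi_constr_qcqp} and \eqref{x_zeta_constr_qcqp} are affine in $\bm{Q}$ (since $\bm{\phi},\bm{\zeta}$ are fixed at this stage), so they map to affine matrices $\bm{P}_5,\bm{P}_6$ with the $-1$ absorbed into the corner. The binary family \eqref{x_constr1_qcqp}, the per-user simplex \eqref{x_constr2_qcqp}, and the box \eqref{varphi_constr1_qcqp}--\eqref{varphi_constr2_qcqp} supply $\bm{P}_2,\bm{P}_3,\bm{P}_4$. Finally, $\bm{S}\succeq\bm{0}$ with $\mathrm{rank}(\bm{S})=1$ forces $\bm{S}=\bm{v}\bm{v}^\intercal$ for a real vector $\bm{v}$, and pinning the homogenizing entry $\bm{S}_{NM+N+1,\,NM+N+1}=1$ (captured together with the equality block encoded in $\bm{P}_3$) fixes $\bm{v}=\bar{\bm{Q}}$ and recovers $\bm{Q}$ exactly, so nothing is gained or lost in the lift.

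The step I expect to be the real obstacle is faithfully compressing the matrix-valued (diagonal) relations of Problem~$\mathbb{P}_9$ into the single scalar trace relations of Problem~$\mathbb{P}_{10}$. Constraint \eqref{x_constr1_qcqp} is in fact a family of $NM$ scalar equalities $x_{n,m}(x_{n,m}-1)=0$, whereas \eqref{x_constr1_sdr} is one equation $\mathrm{Tr}(\bm{P}_2\bm{S})=0$; likewise for \eqref{x_constr2_qcqp}/\eqref{x_constr2_sdr} and \eqref{varphi_constr1_qcqp}--\eqref{varphi_constr2_qcqp}/\eqref{varphi_constr_sdr}. The clean route, which I would adopt, is to treat each $\bm{P}_i$ as an indexed family of selection matrices---one symmetric matrix per scalar relation---so that the relation $\mathrm{Tr}(\bm{P}_i\bm{S})=0$ or $\mathrm{Tr}(\bm{P}_i\bm{S})\le 0$ is read for every index and the reformulation is an exact equivalence. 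If one instead genuinely aggregates (e.g.\ $\mathrm{Tr}(\bm{P}_2\bm{S})=\sum_{n,m}(x_{n,m}^2-x_{n,m})$), then I must prove the aggregated relation is equivalent to the family on the feasible region: here a sign-definiteness argument is required, since $\sum_{n,m}x_{n,m}(x_{n,m}-1)=0$ forces each summand to vanish only once $x_{n,m}\ge 0$ and $x_{n,m}\le 1$ are separately guaranteed (the quadratic box $\varphi_n(\varphi_n-1)\le 0\Leftrightarrow\varphi_n\in[0,1]$ plays the same role for $\bm{P}_4$). Verifying that these sign conditions truly hold on the feasible set is the crux; and because \eqref{S_constr_sdr} retains $\mathrm{rank}(\bm{S})=1$, Problem~$\mathbb{P}_{10}$ is at this point an exact reformulation of Problem~$\mathbb{P}_9$ rather than a relaxation, so the construction must preserve the feasible set with no gap---which is exactly why the compression step cannot be taken for granted.
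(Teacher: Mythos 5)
Your proposal matches the paper's proof: Appendix F proves the lemma by exactly the explicit homogenization you describe, writing each $\bm{P}_i$ as a block matrix with the quadratic part ($\bm{P}_0$ or $\bm{P}_0^{(T_s)}$) in the top-left block, half the linear coefficient vector in the off-diagonal blocks, and the constant (e.g.\ $T^{(u)}+T^{(s)}+C$, $P_1^{(T_u)}$, $P_1^{(T_s)}$, or $-1$) in the corner. In particular, the paper adopts precisely your ``clean route'' for the compression issue you flag---its $\bm{P}_2$, $\bm{P}_3$, $\bm{P}_4$ are stated as indexed families (``$\forall i \in \{1,\cdots,NM\}$'' resp.\ ``$\forall i \in \{1,\cdots,N\}$''), one symmetric matrix per scalar relation, so no aggregation or sign-definiteness argument is needed.
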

\begin{proof}
    Refer to Appendix \ref{append_lemma_p9top10}.
\end{proof}

Based on \textbf{Lemma \ref{lemma_p9top10}}, if we ignore the constraint $\text{rank}(\bm{S})=1$, Problem $\mathbb{P}_{10}$ is finally transformed into a solvable SDR Problem $\mathbb{P}_{10}$. Standard convex solvers can efficiently solve the SDR Problem $\mathbb{P}_{10}$ in polynomial time, providing a continuous version of $\bm{Q}$. However, this version often only serves as the lower bound for the ideal solution and may not satisfy the $\text{rank}(\mathbf{S})=1$ constraint. To rectify this, we apply rounding techniques. The final $NM$ components of $\bm{Q}$, represented by $x_{n,m}$ for every $n \in \mathcal{N}$ and $m \in \mathcal{M}$, reflect the partial connection of users to servers. We label the rounding result of $\bm{x}$ as $\bm{x}^\star$, and the results of $\bm{\varphi}$ in $\bm{Q}$ as $\bm{\varphi}^\star$.

\subsection{Alternating DC technique to solve Problem (\ref{prob10})}
In this section, we introduce an alternative approach, termed the alternating DC technique, to solve Problem $\mathbb{P}_{10}$. Rather than simply dropping the nonconvex rank-one constraint $\text{rank}(\bm{S}) = 1$ and recovering $\bm{S}$ by the Hungarian algorithm in our conference paper \cite{qian2024data}, the alternating DC method provides a more structured way to directly address the nonconvexity.

\textbf{Intuitive explanations to use the alternating DC technique:} Compared to traditional SDR methods, which relax the rank constraint to obtain a convex SDP problem, the alternating DC approach offers several conceptual advantages. In high-dimensional problems, the probability of recovering a rank-one solution from SDR decreases significantly, often requiring additional randomization techniques that lead to performance degradation \cite{jiang2019over}. By contrast, the DC method avoids this relaxation gap by iteratively approximating the original nonconvex feasible set, progressively refining the solution at each step. This enables the alternating DC framework to achieve higher-quality solutions and better scalability without relying on heuristic post-processing.

To illustrate the alternating DC method and prepare for its application to Problem $\mathbb{P}_{10}$, we first present \textbf{Lemma \ref{lemma_DC}}, which provides the theoretical foundation for the proposed approach.
\begin{lemma}\label{lemma_DC}
    If one positive semidefinite matrix $\bm{S} \in \mathbb{C}^{N\times N}$ and $Tr(\bm{S})\leq 1$, we obtain the following equivalent transformation:
    \begin{talign}
        \text{rank}(\bm{S}) = 1 \iff \text{Tr}(\bm{S}) - \lVert \bm{S} \rVert_2 = 0.
    \end{talign}
\end{lemma}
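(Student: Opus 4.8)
The plan is to reduce the entire statement to the eigenvalues of $\bm{S}$ and to exploit the fact that, for a positive semidefinite matrix, both the trace and the spectral norm have clean spectral descriptions. First I would invoke the spectral theorem: since $\bm{S}$ is Hermitian and positive semidefinite, it admits an eigendecomposition with real, non-negative eigenvalues $\lambda_1 \geq \lambda_2 \geq \cdots \geq \lambda_N \geq 0$. Then $\text{Tr}(\bm{S}) = \sum_{i=1}^N \lambda_i$, and because all eigenvalues are non-negative the singular values coincide with the eigenvalues, so the spectral norm satisfies $\lVert \bm{S} \rVert_2 = \lambda_1$. Subtracting gives the central identity \[ \text{Tr}(\bm{S}) - \lVert \bm{S} \rVert_2 = \sum_{i=2}^N \lambda_i, \] which is a sum of non-negative terms and hence always $\geq 0$.

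With this identity in hand the equivalence is almost immediate, and I would prove the two directions separately. For the forward direction, if $\text{rank}(\bm{S}) = 1$ then exactly one eigenvalue is strictly positive and $\lambda_2 = \cdots = \lambda_N = 0$, so the tail sum vanishes and $\text{Tr}(\bm{S}) - \lVert \bm{S} \rVert_2 = 0$. For the reverse direction, if $\text{Tr}(\bm{S}) - \lVert \bm{S} \rVert_2 = 0$, then $\sum_{i=2}^N \lambda_i = 0$; since each summand is non-negative, this forces $\lambda_2 = \cdots = \lambda_N = 0$, leaving at most one nonzero eigenvalue and therefore $\text{rank}(\bm{S}) \leq 1$.

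The main obstacle, and the only genuinely delicate point, is closing the gap between $\text{rank}(\bm{S}) \leq 1$ and the claimed $\text{rank}(\bm{S}) = 1$ in the reverse direction, i.e. excluding the trivial case $\bm{S} = \bm{0}$, for which the difference is also zero but the rank is $0$. The hypothesis $\text{Tr}(\bm{S}) \leq 1$ by itself does not rule this out. I would resolve it by appealing to how $\bm{S}$ is used in Problem $\mathbb{P}_{10}$: there $\bm{S} = (\bm{Q}^\intercal, 1)^\intercal (\bm{Q}^\intercal, 1)$, whose bottom-right entry equals $1$, so $\text{Tr}(\bm{S}) \geq 1$; combined with the hypothesis $\text{Tr}(\bm{S}) \leq 1$ this pins $\text{Tr}(\bm{S}) = 1$, guaranteeing $\bm{S} \neq \bm{0}$ and hence $\lambda_1 > 0$. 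This upgrades $\text{rank}(\bm{S}) \leq 1$ to $\text{rank}(\bm{S}) = 1$ and completes the equivalence. Everything else is routine spectral bookkeeping, so I expect no computational difficulty beyond this edge-case argument.
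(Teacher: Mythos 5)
Your proof is correct, and there is in fact nothing in the paper to compare it against: Lemma \ref{lemma_DC} is stated bare, with no appendix entry, and is implicitly imported from the DC-programming literature cited nearby (\cite{jiang2019over}). Your spectral argument is the standard one and supplies the missing justification: by the spectral theorem $\text{Tr}(\bm{S})=\sum_{i}\lambda_i$ and, since $\bm{S}\succeq 0$, the singular values coincide with the eigenvalues so $\lVert \bm{S}\rVert_2=\lambda_1$ (this reading of $\lVert\cdot\rVert_2$ as the spectral norm is confirmed by the paper's use of the leading-eigenvector outer product $\bm{s}_e\bm{s}_e^\intercal$ as the subgradient in Problem $\mathbb{P}_{12}$), whence $\text{Tr}(\bm{S})-\lVert\bm{S}\rVert_2=\sum_{i\geq 2}\lambda_i\geq 0$, and both directions of the equivalence fall out of the non-negativity of the tail sum.

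The one genuinely delicate point is exactly the one you isolated: equality of trace and spectral norm only yields $\text{rank}(\bm{S})\leq 1$, and the paper's hypothesis $\text{Tr}(\bm{S})\leq 1$ does nothing to exclude $\bm{S}=\bm{0}$. In the source literature the condition is $\text{Tr}(\bm{S})\geq 1$, which immediately gives $\bm{S}\neq\bm{0}$, hence $\lambda_1>0$ and $\text{rank}(\bm{S})=1$; the ``$\leq$'' here is almost certainly a transcription slip for ``$\geq$''. Your contextual patch via the homogenization $\bm{S}=(\bm{Q}^\intercal,1)^\intercal(\bm{Q}^\intercal,1)$ (bottom-right diagonal entry $1$, all diagonal entries of a PSD matrix non-negative, so $\text{Tr}(\bm{S})\geq 1$) is logically sound, but note what it implies: combined with the literal hypothesis it pins $\text{Tr}(\bm{S})=1$, i.e., $\bm{Q}=\bm{0}$, which is incompatible with any feasible assignment satisfying $\sum_{m\in\mathcal{M}}x_{n,m}=1$. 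So for the matrices actually arising in Problem $\mathbb{P}_{10}$ the hypothesis as printed never holds, and the clean repair is to restate it as $\text{Tr}(\bm{S})\geq 1$, under which your argument closes the equivalence with no further case analysis.
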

By \textbf{Lemma \ref{lemma_DC}}, we can transform the constraint $\text{rank}(\bm{S}) = 1$ into a penalty term $\varpi \cdot (\text{Tr}(S) - \lVert \bm{S} \rVert_2)$, and put this penalty term into the objective function of Problem $\mathbb{P}_{10}$. $\varpi>0$ is a penalty parameter. Thus, Problem $\mathbb{P}_{10}$ can be transformed into the following new problem:
\begin{subequations}\label{prob11}
\begin{align}
\mathbb{P}_{11}: &\min\limits_{\bm{S},T^{(u)},T^{(s)}}\quad  \text{Tr}(\bm{P}_1 \bm{S}) + \varpi \cdot (\text{Tr}(\bm{S}) - \lVert \bm{S} \rVert_2)\tag{\ref{prob11}}\\
\text{s.t.} \quad & (\text{\ref{x_constr1_sdr}}) - (\text{\ref{Ts_constr_sdr}})\nonumber \\
&\bm{S}\succeq0, \label{S_constr_dc}
\end{align}
\end{subequations}
In the optimization of Problem $\mathbb{P}_{11}$, the penalty term $\varpi \cdot (\text{Tr}(\bm{S}) - \lVert \bm{S} \rVert_2)$ will be forced to approach zero, and the optimal solution $\bm{S}^\star$ obtained satisfies the rank-one constraint. However, because of the existence of the concave term $- \varpi \cdot \lVert \bm{S} \rVert_2$, Problem $\mathbb{P}_{11}$ is still non-convex. To solve it, we need to linearize this non-convex term. Based on the idea of the DC algorithm \cite{jiang2019over}, we present the following transformative optimization:
\begin{subequations}\label{prob12}
\begin{align}
\mathbb{P}_{12}: &\min\limits_{\bm{S},T^{(u)},T^{(s)}}\quad  \text{Tr}(\bm{P}_1 \bm{S}) + \varpi \cdot \langle \bm{I}- \partial\lVert \bm{S}^{(c)} \rVert_2 , \bm{S}\rangle\tag{\ref{prob12}}\\
\text{s.t.} \quad & (\text{\ref{x_constr1_sdr}}) - (\text{\ref{Ts_constr_sdr}}), (\text{\ref{S_constr_dc}}),\nonumber
\end{align}
\end{subequations}
where $\bm{S}^{(c)}$ is the value of $\bm{S}$ at the last iteration, and $\langle\,,\rangle$ denotes the inner product operation. The subgradient $\partial\lVert \bm{S}^{(c)} \rVert_2$ can be obtained as $\bm{s}_e \bm{s}_e^\intercal$, where $\bm{s}_e$ is the leading eigenvector of $\bm{S}^{(c)}$. Now, Problem $\mathbb{P}_{12}$ is a convex optimization, which common convex tools can solve. A critical point solution is guaranteed \cite{jiang2019over}.

\subsection{Performance comparison of different rounding techniques}
In this section, we study what is the most efficient rounding techniques to recover the continuous $\bm{x}$ to a binary discrete value. 
\subsubsection{Rounding techniques}
Here, we present the following rounding techniques:
\paragraph{Hungarian algorithm}
The Hungarian (Kuhn-Munkres) algorithm is a classic combinatorial optimization method that solves the assignment problem in polynomial time. By turning the continuous solution into a cost matrix, the Hungarian algorithm can be applied to obtain a binary assignment that aligns closely with the original continuous solution. Here, we use the Hungarian algorithm \cite{kuhn1955hungarian}, augmented with zero vectors, is used to identify the optimal matching, denoted as $\mathcal{X}$. Within this matching, we set $x_{n,m}$ to 1 if nodes $n$ and $m$ are paired and 0 otherwise.
\paragraph{Randomized rounding}
Randomized rounding converts the continuous solution to a binary solution probabilistically. For each element $x_{n,m} \in [0,1]$, interpret it as a probability:
\begin{talign}
    x_{n,m} = \begin{cases}
        1, & \text{with probability $x_{n,m}$}  \\
        0, & \text{with probability $1 - x_{n,m}$}.
    \end{cases}
\end{talign}
At the same time, we ensure that $\sum_{m\in\mathcal{M}}x_{n,m}=1$.
\paragraph{Solve a secondary discrete problem}
Once the SDP solution $\bm{x}$ is obtained, we can formulate a new optimization problem that approximates the binary solution. The new optimization problem is shown as follows:
\begin{subequations}\label{prob13}
\begin{align}
\mathbb{P}_{13}: &\min\limits_{\bm{x}}\quad \lVert \bm{x} - \bm{x}^{(SDP)} \rVert_2^2 \tag{\ref{prob13}}\\
\text{s.t.} \quad & x_{n,m} \in \{0, 1\},\nonumber
\end{align}
\end{subequations}
where $\bm{x}^{(SDP)}$ is the obtained continuous SDP solution. This optimization problem is a simple mixed integer programming, which can be solved by Mosek in MATLAB.
\paragraph{Rank-1 approximation}
In the rank-1 approximation method, the continuous matrix solution is projected to a \hbox{rank-1} binary matrix solution. We first perform the eigenvalue decomposition of the obtained continuous $\bm{x}$. Take the largest eigenvector, which is associated with the dominant eigenvalue, and then round the entries of the eigenvector to binary values.
\paragraph{Greedy rounding}
In the greedy rounding, we set $x_{n,m} = 1$ with the largest continuous value at each step.
\subsubsection{Numerical results}
In this section, we present numerical results of different rounding techniques in Fig. \ref{fig.rounding_compare}. The default setting can be seen in Section \ref{sec.simulation_results}. The continuous solution $\bm{x}$ is obtained by solving Problem $\mathbb{P}_{10}$. We conduct 500 repeat simulations, and the initial setting of $\bm{x}$ and $\bm{\phi}$ are randomly chosen within their constraints.

Among the five rounding methods, Rank-1 approximation emerges as the most suitable due to its superior $\mathbb{P}_{1}$ objective function value, which is the highest among all, coupled with a low and consistent running time. While the Hungarian method and Greedy rounding are computationally efficient, they provide suboptimal $\mathbb{P}_{1}$ objective function values because they primarily treat the relaxed solution as a static cost matrix, without fully leveraging the structural information embedded in the continuous relaxation.

The secondary discrete problem method achieves decent objective function values but suffers from significantly higher computational time, making it impractical for time-sensitive applications. Randomized rounding is fast and performs well but is less consistent and introduces randomness, which can lead to variability in performance.

In contrast, the Rank-1 approximation method not only shows empirical superiority but is also theoretically well-aligned with the problem's structure. Since the original optimization problem inherently exhibits a low-rank nature, where the ideal assignment matrix should be rank-one, the principal eigenvector extracted during rank-1 approximation effectively captures the dominant decision direction in the relaxed solution \cite{luo2010semidefinite}. By rounding based on this principal component, the method preserves the essential coupling among variables and maintains closer adherence to the original nonconvex feasible set. Overall, Rank-1 approximation strikes the best balance between solution quality and computational efficiency, making it the preferred choice. Thus, we choose the Rank-1 approximation method as the rounding technique in this paper.
\begin{figure}[t] 
\subfigure[Performance comparison.]{\includegraphics[width=.24\textwidth]{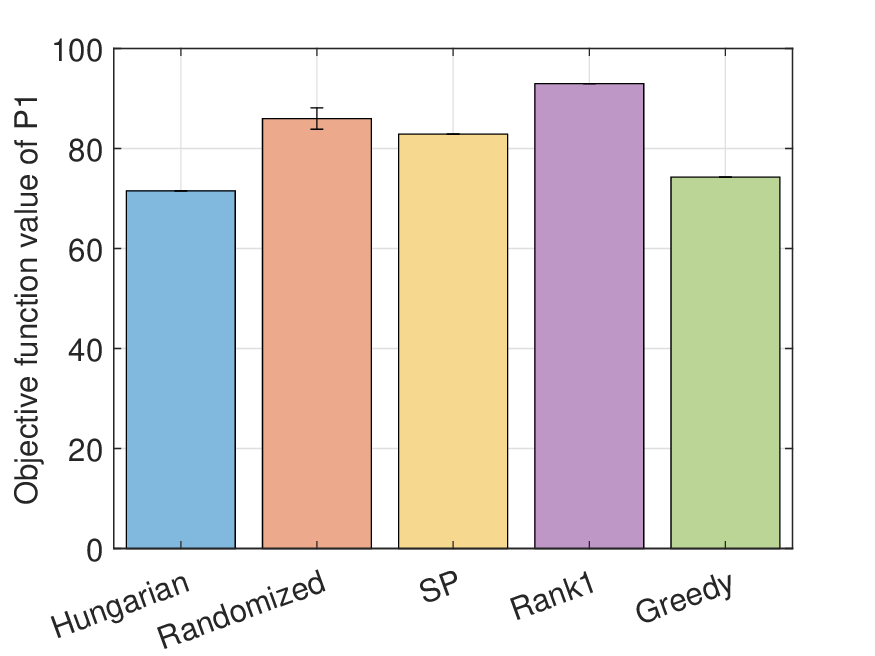}\label{fig.rounding_value}}
\subfigure[Running time comparison.]{\includegraphics[width=.24\textwidth]{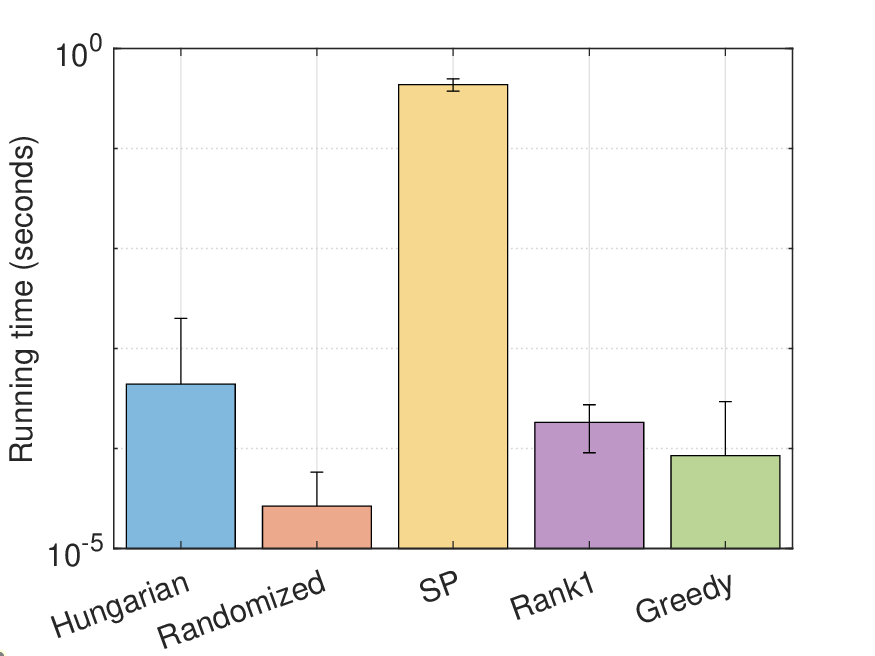}\label{fig.rounding_time}} 
\caption{Performance comparison of different rounding techniques. ``Hungarian'', ``randomized'', ``SP'', ``Rank1'', and ``Greedy'' denote the Hungarian algorithm, randomized rounding, secondary discrete problem, rank-1 approximation, and greedy rounding, respectively.} 
\label{fig.rounding_compare}
\end{figure}
\begin{figure}[t] 
\subfigure[Performance comparison of different penalty weights.]{\includegraphics[width=.24\textwidth]{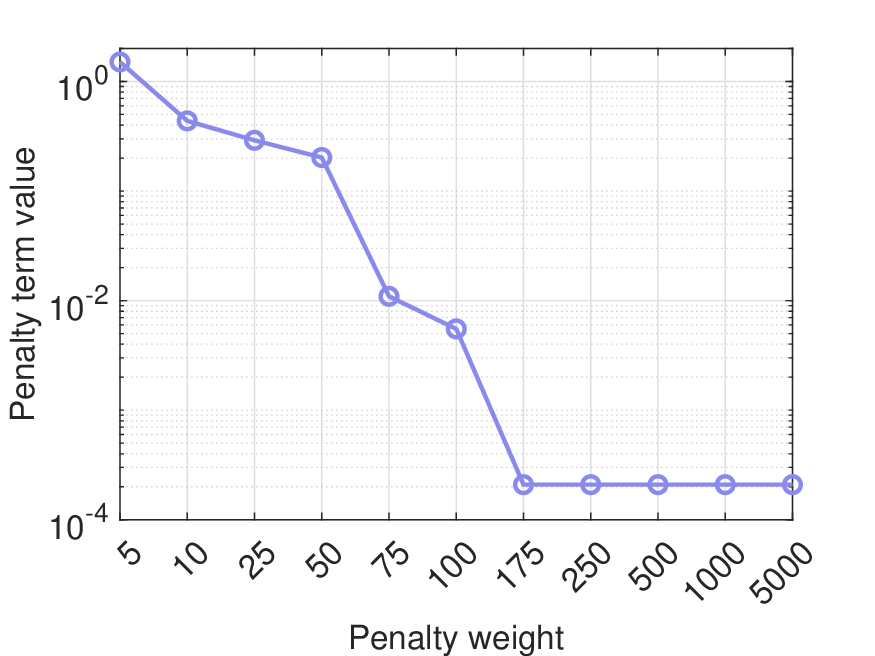}\label{fig.comparison_varpi}}
\subfigure[Performance with and without dropping the rank-one constraint.]{\includegraphics[width=.24\textwidth]{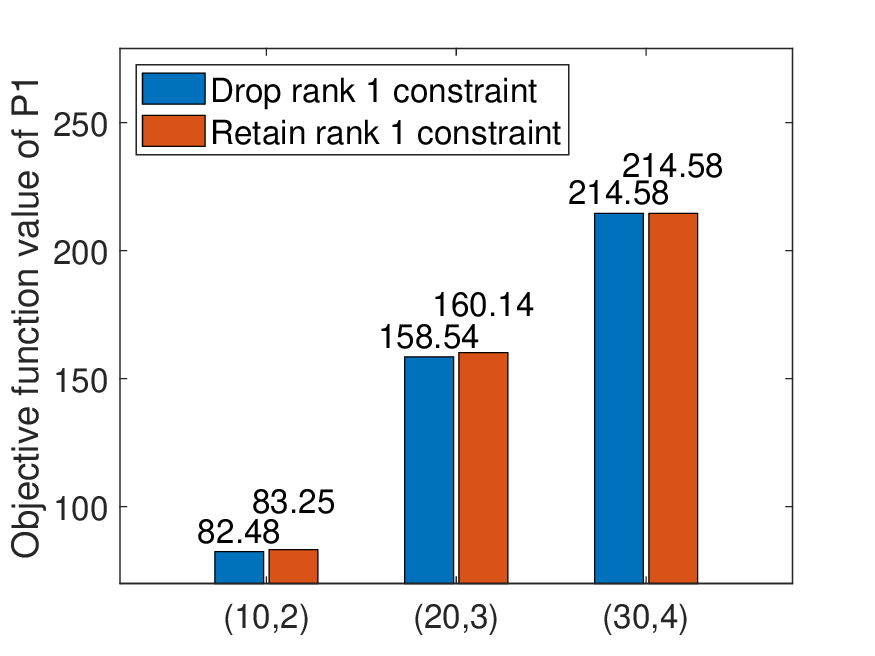}\label{fig.comparison_sdp}} 
\caption{Performance comparison of different penalty weights and SDP methods.} 
\label{fig.sdp_compare}
\end{figure}
\begin{algorithm}[htbp]
\caption{Proposed DAUR algorithm.}
\label{algorithm}

Initialize $i \leftarrow -1$ and for all $n \in \mathcal{N}, m \in \mathcal{M}$: $\bm{x}^{(0)} = (\bm{e_1},\cdots,\bm{e_M})^\intercal$, $\varphi_n^{(0)}=0.5$, $\phi_{n,m}^{(0)}=\frac{1}{N}$, $\rho_n^{(0)}=1$, $\zeta_{n,m}^{(0)}=\frac{1}{N}$, $\psi_n^{(0)}=1$, $\gamma_{n,m} = 0.5$;

Calculate $\bm{\alpha}^{(0)}, \bm{\vartheta}^{(0)}$ with $\bm{x}^{(0)} = (\bm{e_1},\cdots,\bm{e_M})^\intercal$, $\varphi_n^{(0)}=0.5$, $\phi_{n,m}^{(0)}=\frac{1}{N}$, $\rho_n^{(0)}=1$, $\zeta_{n,m}^{(0)}=\frac{1}{N}$, $\psi_n^{(0)}=1$, $\gamma_{n,m} = 0.5$;

\Repeat{$\frac{V_{\mathbb{P}_{3}}(\bm{x}^{(i+1)}, \bm{\varphi}^{(i+1)}, \bm{\phi}^{(i+1)}, \bm{\rho}^{(i+1)}, \bm{\zeta}^{(i+1)}, \bm{\psi}^{(i+1)})}{V_{\mathbb{P}_{3}}(\bm{x}^{(i)}, \bm{\varphi}^{(i)}, \bm{\phi}^{(i)}, \bm{\rho}^{(i)}, \bm{\zeta}^{(i)}, \bm{\psi}^{(i)})}- 1 \leq \epsilon_3$, where $\epsilon_3$ is a small positive number}{
Let $i\leftarrow i+1$;

Initialize $j = -1$; 

Calculate $\bm{\upsilon}^{(i,0)}$ with $\bm{x}^{(i)}, \bm{\varphi}^{(i)}, \bm{\phi}^{(i)}, \bm{\rho}^{(i)}, \bm{\zeta}^{(i)}, \bm{\psi}^{(i)}$;

Set $[\bm{\phi}^{(i,0)}, \bm{\rho}^{(i,0)}, \bm{\zeta}^{(i,0)}, \bm{\psi}^{(i,0)}, \bm{T}^{(i,0)}]\leftarrow$ $[\bm{\phi}^{(i)}, \bm{\rho}^{(i)}, \bm{\zeta}^{(i)}, \bm{\psi}^{(i)}, \bm{T}^{(i)}]$;

\Repeat{$\frac{V_{\mathbb{P}_{5}}(\bm{\phi}^{(i,j+1)}, \bm{\rho}^{(i,j+1)}, \bm{\zeta}^{(i,j+1)}, \bm{\psi}^{(i,j+1)})}{V_{\mathbb{P}_{5}}(\bm{\phi}^{(i,j)}, \bm{\rho}^{(i,j)}, \bm{\zeta}^{(i,j)}, \bm{\psi}^{(i,j)})}- 1 \leq \epsilon_1$, where $\epsilon_1$ is a small positive number}{
Let $j\leftarrow j+1$;

Obtain $[\bm{\phi}^{(i,j+1)}, \bm{\rho}^{(i,j+1)}, \bm{\zeta}^{(i,j+1)}, \bm{\psi}^{(i,j+1)}, \bm{T}^{(i,j+1)}]$ by solving Problem $\mathbb{P}_{5}$ with $\bm{\upsilon}^{(i,j)}$;

Update $\bm{\upsilon}^{(i,j+1)}$ with $[\bm{\phi}^{(i,j+1)}, \bm{\rho}^{(i,j+1)}, \bm{\zeta}^{(i,j+1)}, \bm{\psi}^{(i,j+1)}, \bm{T}^{(i,j+1)}]$;
}
Return $[\bm{\phi}^{(i,j+1)}, \bm{\rho}^{(i,j+1)}, \bm{\zeta}^{(i,j+1)}, \bm{\psi}^{(i,j+1)}]$ as a solution to Problem $\mathbb{P}_{5}$;

Set $[\bm{\phi}^{(i+1)}, \bm{\rho}^{(i+1)}, \bm{\zeta}^{(i+1)}, \bm{\psi}^{(i+1)}]$ $\leftarrow$ $[\bm{\phi}^{(i,j+1)}, \bm{\rho}^{(i,j+1)}, \bm{\zeta}^{(i,j+1)}, \bm{\psi}^{(i,j+1)}]$;

Initialize $j = -1$;

Set $[\bm{x}^{(i,0)}, \bm{\varphi}^{(i,0)}] \leftarrow [\bm{x}^{(i)}, \bm{\varphi}^{(i)}]$;

Initialize $[\bm{T}^{(i,0)}, \bm{A}^{(i,0)}, \bm{B}^{(i,0)}, C^{(i,0)}, \bm{P}_k^{(i,0)}]\leftarrow$ $[\bm{T}^{(i)}, \bm{A}^{(i)}, \bm{B}^{(i)}, C^{(i)}, \bm{P}_k^{(i)}]$, where $k \in \{1,2,\cdots,8\}$;

\Repeat{$\frac{V_{\mathbb{P}_{12}}(\bm{x}^{(i,j+1)}, \bm{\varphi}^{(i,j+1)}}{V_{\mathbb{P}_{12}}(\bm{x}^{(i,j)}, \bm{\varphi}^{(i,j)})}- 1 \leq \epsilon_2$, where $\epsilon_2$ is a small positive number}{
Let $j\leftarrow j+1$;

Obtain $[\bm{x}^{(i,j+1)}, \bm{\varphi}^{(i,j+1)}]$ of continuous values by solving Problem $\mathbb{P}_{12}$;

Update $[\bm{T}^{(i,j+1)}, \bm{A}^{(i,j+1)}, \bm{B}^{(i,j+1)}, C^{(i,j+1)}, \bm{P}_k^{(i,j+1)}$ with $[\bm{x}^{(i,j+1)}, \bm{\varphi}^{(i,j+1)}]$;
}

Return $[\bm{x}^{(i,j+1)}, \bm{\varphi}^{(i,j+1)}]$ as a solution to the SDR Problem $\mathbb{P}_{12}$;

Use the rank-1 approximation method to obtain the discrete association results as $\bm{x}_\star^{(i,j+1)}$.

Set $[\bm{x}^{(i+1)}, \bm{\varphi}^{(i+1)}]\leftarrow$ $[\bm{x}_\star^{(i,j+1)}, \bm{\varphi}^{(i,j+1)}]$;

Update $[\bm{\alpha}^{(i+1)}, \bm{\vartheta}^{(i+1)}]$ with $[\bm{x}^{(i+1)}, \bm{\varphi}^{(i+1)}, \bm{\phi}^{(i+1)}, \bm{\rho}^{(i+1)}, \bm{\zeta}^{(i+1)}, \bm{\psi}^{(i+1)}]$;
}
Return $[\bm{x}^{(i+1)}, \bm{\varphi}^{(i+1)}, \bm{\phi}^{(i+1)}, \bm{\rho}^{(i+1)}, \bm{\zeta}^{(i+1)}, \bm{\psi}^{(i+1)}]$ as a solution $[\bm{x}^\star, \bm{\varphi}^\star, \bm{\phi}^\star, \bm{\rho}^\star, \bm{\zeta}^\star, \bm{\psi}^\star]$ to Problem $\mathbb{P}_{3}$.
\end{algorithm}
\subsection{Performance comparison between optimization with and without dropping the rank-1 constraint}
In this section, we study which method (drop the rank-1 constraint or not) is better to solve Problem $\mathbb{P}_{10}$. 

First, we try to find a suitable penalty weight parameter $\varpi$ to perform optimization of Problem $\mathbb{P}_{12}$. In Fig. \ref{fig.comparison_varpi}, the final penalty term values after solving Problem $\mathbb{P}_{12}$ of different penalty weights, i.e., $\varpi$ are given. The value of the penalty term on the Y-axis of Fig. \ref{fig.comparison_varpi} is the value of $(\text{Tr}(\bm{S}) - \lVert \bm{S} \rVert_2)$. From this figure, it is known that a larger $\varpi$ would be better to force the penalty term to approach zero. However, when $\varpi$ is greater than 175, the order of the penalty term value will not change much. The larger numbers of penalty weights do not significantly affect the value of the penalty term. Therefore, we set the penalty weight $\varpi$ as 175.

In Fig. \ref{fig.comparison_sdp}, we compare the objective function value of Problem $\mathbb{P}_{1}$ with the solutions to Problem $\mathbb{P}_{10}$ by dropping or retaining the rank-1 constraint (i.e., $\text{rank} (\bm{S}) = 1$). Three use-server configurations are chosen as $(N, M) \in \{(10, 2), (20, 3), (30, 4)\}$. The results in Fig. \ref{fig.comparison_sdp} indicate that retaining the rank-1 constraint provides a slight improvement over dropping it. The reason for those results is that dropping the rank-1 constraint may lead to higher-rank or less structured results, which also introduces instability or numerical issues. Thus, in this paper, retaining the rank-1 constraint method is chosen to solve the SDP problem.

\subsection{The whole algorithm procedure}
\begin{figure}[t]
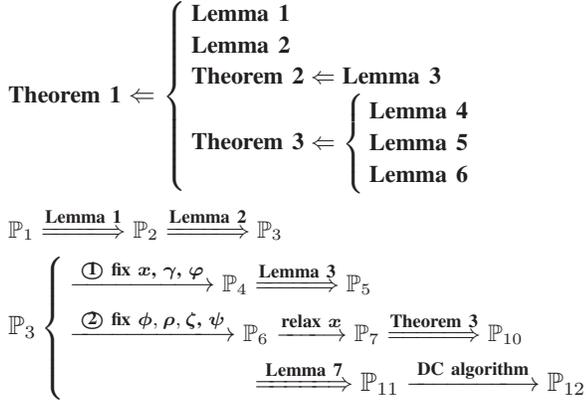

\begin{talign}
    &\small\textnormal{\textbf{Theorem \ref{theorem_solvep1}}} \Leftarrow 
    \begin{cases}
        \textnormal{\textbf{Lemma \ref{lemma_p1top2}}}\\
        \textnormal{\textbf{Lemma \ref{lemma_p2top3}}}\\
        \textnormal{\textbf{Theorem \ref{theorem_p4toconcave}}} \Leftarrow \textnormal{\textbf{Lemma \ref{lemma_p4top5}}}\\
        \textnormal{\textbf{Theorem \ref{theorem_p7toconvex}}} \Leftarrow 
        \begin{cases}
            \textnormal{\textbf{Lemma \ref{lemma_gamma}}}\\
            \textnormal{\textbf{Lemma \ref{lemma_p8top9}}}\\
            \textnormal{\textbf{Lemma \ref{lemma_p9top10}}} \nonumber
        \end{cases}
    \end{cases}\\
    &\small\textnormal{\text{$\mathbb{P}_{1}$}} \xLongrightarrow{\small\stackrel~{\textnormal{\textbf{Lemma \ref{lemma_p1top2}}}}} \textnormal{\text{$\mathbb{P}_{2}$}} \xLongrightarrow{\small\stackrel~{\textnormal{\textbf{Lemma \ref{lemma_p2top3}}}}} \textnormal{\text{$\mathbb{P}_{3}$}}\nonumber \\
    &\textnormal{\text{$\mathbb{P}_{3}$}}
    \begin{cases}
    \xlongrightarrow{\small\stackrel~{\textnormal{\textbf{\textcircled{1}~fix $\bm{x}$, $\bm{\gamma}$, $\bm{\varphi}$}}}}\small\textnormal{\text{$\mathbb{P}_{4}$}} \xLongrightarrow{\small\stackrel~{\textnormal{\textbf{Lemma \ref{lemma_p4top5}}}}} \textnormal{\text{$\mathbb{P}_{5}$}}\nonumber\\
    \xlongrightarrow{\small\stackrel~{\textnormal{\textbf{\textcircled{2}~fix $\bm{\phi},\bm{\rho},\bm{\zeta}$, $\bm{\psi}$}}}}\small\textnormal{\text{$\mathbb{P}_{6}$}} \xlongrightarrow{\small\stackrel~{\textnormal{\textbf{relax $\bm{x}$}}}} \textnormal{\text{$\mathbb{P}_{7}$}} \xLongrightarrow{\small\stackrel~{\textnormal{\textbf{Theorem \ref{theorem_p7toconvex}}}}}\textnormal{\text{$\mathbb{P}_{10}$}}\nonumber\\
    \hspace{70pt}\xLongrightarrow{\small\stackrel~{\textnormal{\textbf{Lemma \ref{lemma_DC}}}}} \textnormal{\text{$\mathbb{P}_{11}$}}\xlongrightarrow{\small\stackrel~{\textnormal{\textbf{DC algorithm}}}} \textnormal{\text{$\mathbb{P}_{12}$}}\nonumber
    \end{cases}
\end{talign}
    \caption{Relationships between Problems, Theorems, and Lemmas, where ``A $\Leftarrow$ B'' means that A is implied by B, and ``A $\Rightarrow$ B'' means that A implies B.}
    \label{fig:relat_theorem_lemma}
\end{figure}
Let the objective function value of Problem $\mathbb{P}_{i}$ be $V_{\mathbb{P}_{i}}$. We present the whole procedure of the proposed DAUR algorithm in Algorithm \ref{algorithm} and the relationships between Problems, Theorems, and Lemmas in \mbox{Fig. \ref{fig:relat_theorem_lemma}}. The core theorem, \textbf{Theorem~\ref{theorem_solvep1}}, is supported by a series of Lemmas and Theorems:
\begin{itemize}
    \item Starting from $\mathbb{P}_{1}$, we first reformulate it into $\mathbb{P}_{2}$ and then into $\mathbb{P}_{3}$, through \textbf{Lemma~\ref{lemma_p1top2}} and \textbf{Lemma~\ref{lemma_p2top3}}, respectively.
    \item The problem $\mathbb{P}_{3}$ is further branched into two reformulation pathways:
        \begin{enumerate}
            \item Fixing $\bm{x}$, $\bm{\gamma}$, and $\bm{\varphi}$, we obtain $\mathbb{P}_{4}$, which is transformed to $\mathbb{P}_{5}$ via \textbf{Lemma~\ref{lemma_p4top5}}, and further transformed to a concave form through \textbf{Theorem~\ref{theorem_p4toconcave}}.
            \item Fixing $\bm{\phi},\bm{\rho},\bm{\zeta}$, and $\bm{\psi}$, the problem becomes $\mathbb{P}_{6}$, and is relaxed to $\mathbb{P}_{7}$, then to $\mathbb{P}_{10}$ using \textbf{Theorem~\ref{theorem_p7toconvex}}, which itself depends on \textbf{Lemma~\ref{lemma_gamma}}, \textbf{Lemma~\ref{lemma_p8top9}}, and \textbf{Lemma~\ref{lemma_p9top10}}.
            \end{enumerate}
    \item Further, $\mathbb{P}_{10}$ is transformed into $\mathbb{P}_{11}$ using \textbf{Lemma~\ref{lemma_DC}}, and solved via a DC algorithm, resulting in the final form $\mathbb{P}_{12}$.
\end{itemize}

Here we summarize the overall flow of the optimization algorithm. At the $i$-th iteration, we first initialize $\bm{\alpha}^{(i-1)}, \bm{\vartheta}^{(i-1)}$ with $\bm{x}^{(i-1)}$, $\bm{\varphi}^{(i-1)}$, $\bm{\phi}^{(i-1)}$, $\bm{\rho}^{(i-1)}$, $\bm{\zeta}^{(i-1)}$, $\bm{\psi}^{(i-1)}$. Then, we fix $\bm{\alpha}, \bm{\vartheta}$ as $\bm{\alpha}^{(i-1)}, \bm{\vartheta}^{(i-1)}$ and optimize $\bm{x}$, $\bm{\varphi}$, $\bm{\phi}$, $\bm{\rho}$, $\bm{\zeta}$, $\bm{\psi}$. For the optimization of $\bm{x}$, $\bm{\varphi}$, $\bm{\phi}$, $\bm{\rho}$, $\bm{\zeta}$, $\bm{\psi}$, we use the alternative optimization technique.

In the first step, we fix $\bm{x}$, $\bm{\varphi}$ as $\bm{x}^{(i-1)}$, $\bm{\varphi}^{(i-1)}$ and optimize $\bm{\phi}$, $\bm{\rho}$, $\bm{\zeta}$, $\bm{\psi}$. At this optimization step, we also introduce an auxiliary variable $\upsilon^{(s)}_{n,m}$, where $\upsilon^{(s)}_{n,m} = \frac{1}{2x_{n,m}\rho_np_n\varphi_nd_nr_{n,m}}$ to transform Problem $\mathbb{P}_{4}$ into a solvable concave problem $\mathbb{P}_{5}$. At the $j$-th inner iteration, we initialize $\bm{\upsilon}^{\bm{(s)}(i-1,j-1)}$ with $\bm{x}^{(i-1)}$, $\bm{\varphi}^{(i-1)}$, $\bm{\phi}^{(i-1,j-1)}$, $\bm{\rho}^{(i-1,j-1)}$, $\bm{\zeta}^{(i-1,j-1)}$, $\bm{\psi}^{(i-1,j-1)}$. We fix $\bm{\upsilon}^{\bm{(s)}}$ as $\bm{\upsilon}^{\bm{(s)}(i-1,j-1)}$ and optimize $\bm{\phi}$, $\bm{\rho}$, $\bm{\zeta}$, $\bm{\psi}$. Then we obtain the optimization results $\bm{\phi}^{(i-1,j)}$, $\bm{\rho}^{(i-1,j)}$, $\bm{\zeta}^{(i-1,j)}$, $\bm{\psi}^{(i-1,j)}$ and update $\bm{\upsilon}^{\bm{(s)}(i-1,j)}$ with these results. This optimization cycle is repeated until the difference in the objective function value of Problem $\mathbb{P}_{5}$ between the $j$-th and $(j-1)$-th iterations falls below a predefined threshold. We set the results of this alternative optimization step as $\bm{\phi}^{(i)}$, $\bm{\rho}^{(i)}$, $\bm{\zeta}^{(i)}$, $\bm{\psi}^{(i)}$.

In the second step, we fix the $\bm{\phi}$, $\bm{\rho}$, $\bm{\zeta}$, $\bm{\psi}$ as $\bm{\phi}^{(i)}$, $\bm{\rho}^{(i)}$, $\bm{\zeta}^{(i)}$, $\bm{\psi}^{(i)}$ and optimize $\bm{x}$, $\bm{\varphi}$. Then we first obtain $\bm{\varphi}^{(i)}$ and the continuous solution of $\bm{x}$ by solving Problem $\mathbb{P}_{12}$. Next, we use the rank-1 approximation method to obtain the discrete solution of $\bm{x}$ and denote it as $\bm{x}^{(i)}$.  Until now, we have obtained $\bm{x}^{(i)}$, $\bm{\varphi}^{(i)}$, $\bm{\phi}^{(i)}$, $\bm{\rho}^{(i)}$, $\bm{\zeta}^{(i)}$, $\bm{\psi}^{(i)}$. Update $\bm{\alpha}^{(i)}, \bm{\vartheta}^{(i)}$ with those results.

Repeat these two optimization steps until the difference in the objective function value of Problem $\mathbb{P}_{3}$ between the $i$-th and \mbox{$(i-1)$-th}  iterations falls in a predefined threshold. Then, we set the optimization results as $\bm{x}^\star$, $\bm{\varphi}^\star$, $\bm{\phi}^\star$, $\bm{\rho}^\star$, $\bm{\zeta}^\star$, $\bm{\psi}^\star$.

\begin{figure*}[!htbp] 
\subfigure[Convergence of the FP method.]{\includegraphics[width=.33\textwidth]{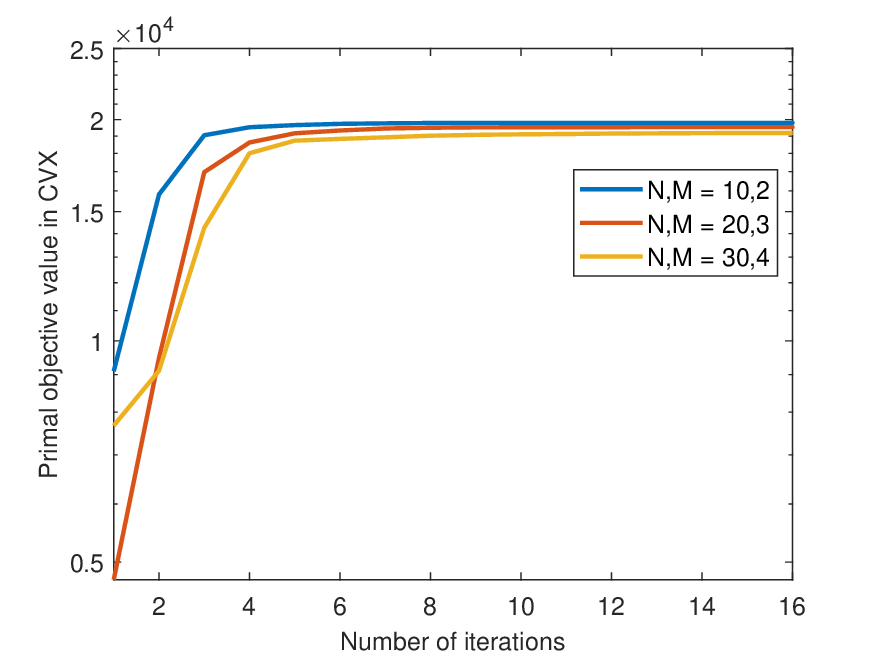}\label{fig.convergence of FP}}
\subfigure[Convergence of the QCQP method.]{\includegraphics[width=.33\textwidth]{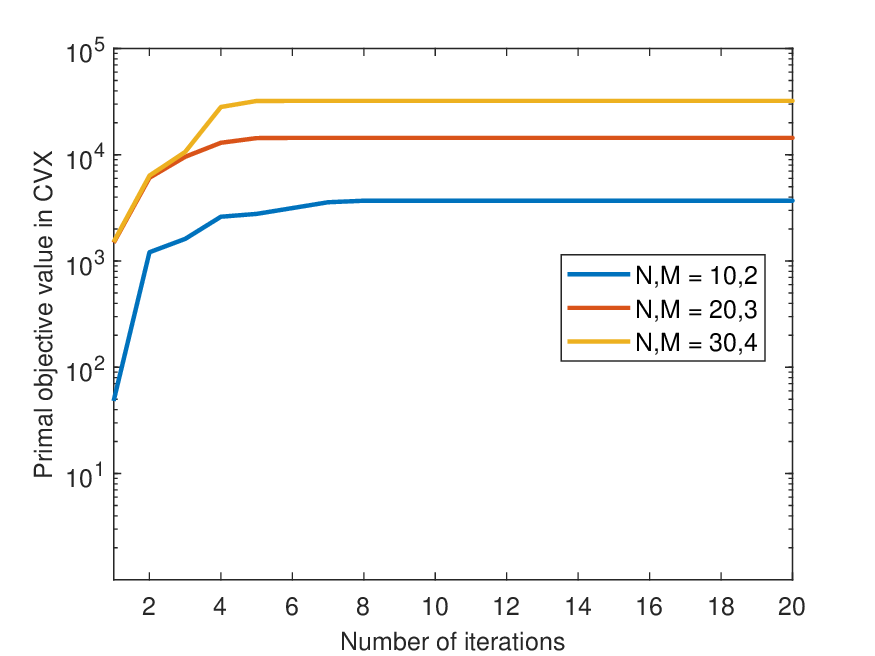}\label{fig.convergence of QCQP}}
\subfigure[Performance comparison between the DAUR algorithm and other baselines.]{\includegraphics[width=.33\textwidth]{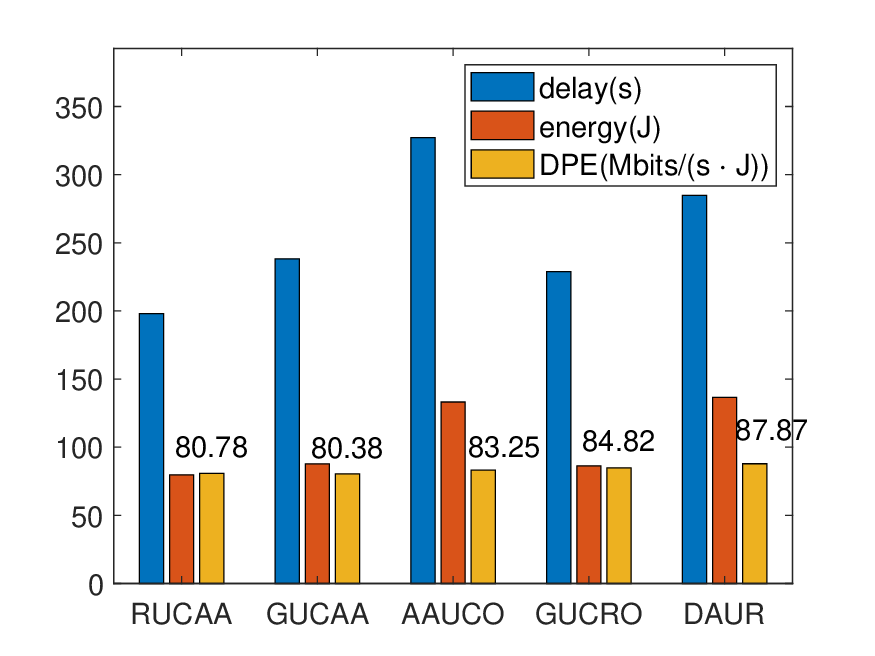}\label{fig.DAUR performance}}  
\caption{Convergence of FP and QCQP methods; Performance comparison of the DAUR algorithm with baselines.} 
\label{fig.convergence_qcqp_fp}
\end{figure*}
\begin{figure}[t] 
\subfigure[Convergence over total rounds.]{\includegraphics[width=.24\textwidth]{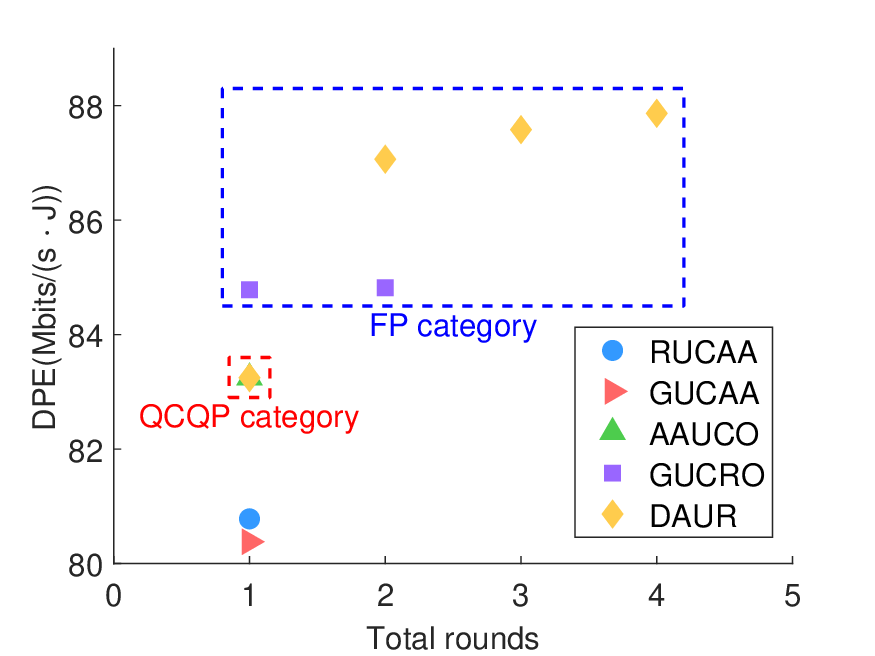}\label{fig.convergence_iteration}}
\subfigure[Convergence over running time.]{\includegraphics[width=.24\textwidth]{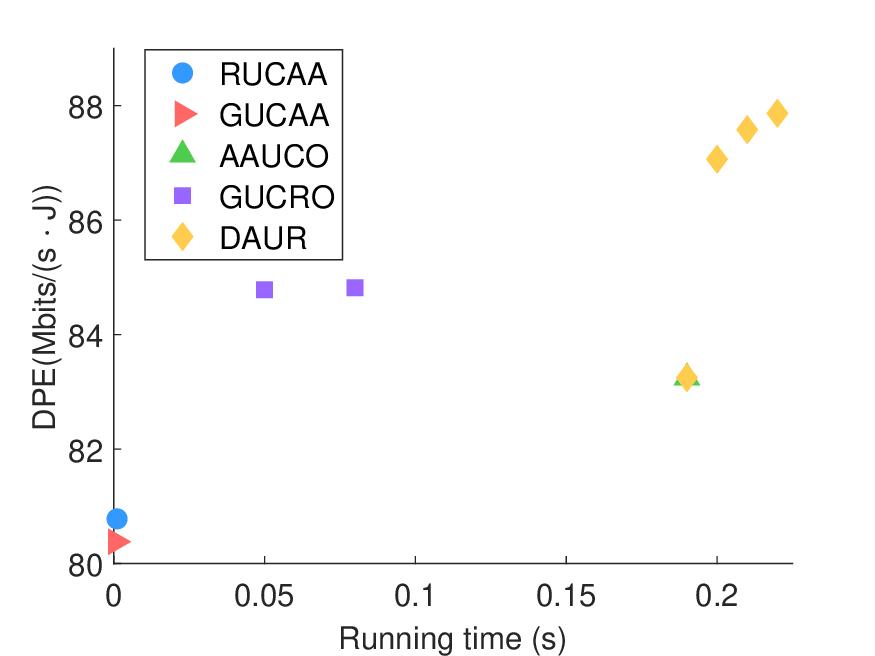}\label{fig.convergence_time}} 
\caption{Comparison of convergence behavior with baselines.} 
\label{fig.convergence_compare}
\end{figure}
\subsection{Novelty of our proposed DAUR algorithm}
In this paper, we address maximizing the combined DPE of users and servers in a blockchain-enhanced Metaverse wireless system using the DAUR algorithm. This algorithm optimizes user-server associations, work offloading ratios, and task-specific computing resource distribution ratios together, as well as jointly optimizes communication and computational resources like bandwidth, transmit power, and computing allocations for both users and servers. \mbox{Unlike} previous methods that treat communication and computational resources separately~\cite{deng2022blockchain,dai2018joint}, our approach integrates them into a unified optimization problem, leading to better solutions than traditional alternating optimization methods. Additionally, the DAUR algorithm's application extends beyond DPE maximization; it's also suitable for solving energy efficiency and various utility-cost problems. For non-concave utility functions, we use successive convex approximation (SCA) to enable DAUR's application in mobile edge computing for user connection and resource allocation.

\subsection{Discussion of practical application cases:}
The proposed DAUR algorithm can be applied in a variety of blockchain-empowered Metaverse scenarios to enhance system efficiency and user experience. For instance, in NFT marketplaces, users frequently engage in the creation, trading, and verification of non-fungible tokens, which require significant computational resources for cryptographic operations and metadata processing. By dynamically optimizing user association and resource allocation, the DAUR algorithm significantly reduces the processing delay and energy consumption, and improves the data processing efficiency, thereby accelerating NFT transactions and extending device battery life.

Similarly, during large-scale virtual events, e.g., concerts, exhibitions, and conferences within the Metaverse, massive volumes of real-time multimedia data must be processed, transmitted, and synchronized across decentralized blockchain nodes. In these scenarios, DAUR enables efficient offloading and load balancing by adapting to dynamic network conditions and user densities, ensuring seamless real-time interactions with minimal latency.

In future work, we aim to further validate the algorithm's effectiveness through real-world system deployments and application-specific case studies.

\section{Complexity Analysis}\label{sec.complexity_analysis}
In this section, the computational complexity of the proposed DAUR algorithm is analyzed by examining its main components. In the optimization phase from Line 8 to Line 12 of Algorithm 1, the problem involves $3N+3NM$ variables and $3N+3NM+2M$ constraints, resulting in a worst-case complexity of $\mathcal{O}((N^{3.5}+M^{3.5}+N^{3.5}M^{3.5})\log(\frac{1}{\epsilon_1}))$ with a given solution accuracy $\epsilon_1 > 0$~\cite{dai2018joint}. Similarly, the optimization phase from Line 18 to Line 22, with $N+NM$ variables and $NM+2N+2M+2$ constraints, also has a worst-case complexity of $\mathcal{O}((N^{3.5}+M^{3.5}+N^{3.5}M^{3.5})\log(\frac{1}{\epsilon_2}))$ with a given solution accuracy $\epsilon_2 > 0$. Additionally, the rank-1 approximation step using singular value decomposition (SVD) incurs a cost of $\mathcal{O}(N^3M^3)$ \cite{li2019tutorial}. Assuming the algorithm runs for $\mathcal{I}$ outer iterations, the total complexity is \mbox{$\mathcal{I}\times\mathcal{O}((N^{3.5}+M^{3.5}+N^{3.5}M^{3.5})\log(\frac{1}{\epsilon_3}))$} with a global solution accuracy $\epsilon_3 > 0$. In simulations, the number of rounds for the outer iteration, the QCQP method, and the FP method are relatively small (the numbers are often in single digits, see Table \ref{tab:DAURConvergence} on the next page). The optimization problem can be efficiently solved by convex solvers, making the proposed DAUR algorithm computationally feasible for practical applications.
\begin{figure*}[t] 
\subfigure[Performance comparisons of different total bandwidth.]{\includegraphics[width=.33\textwidth]{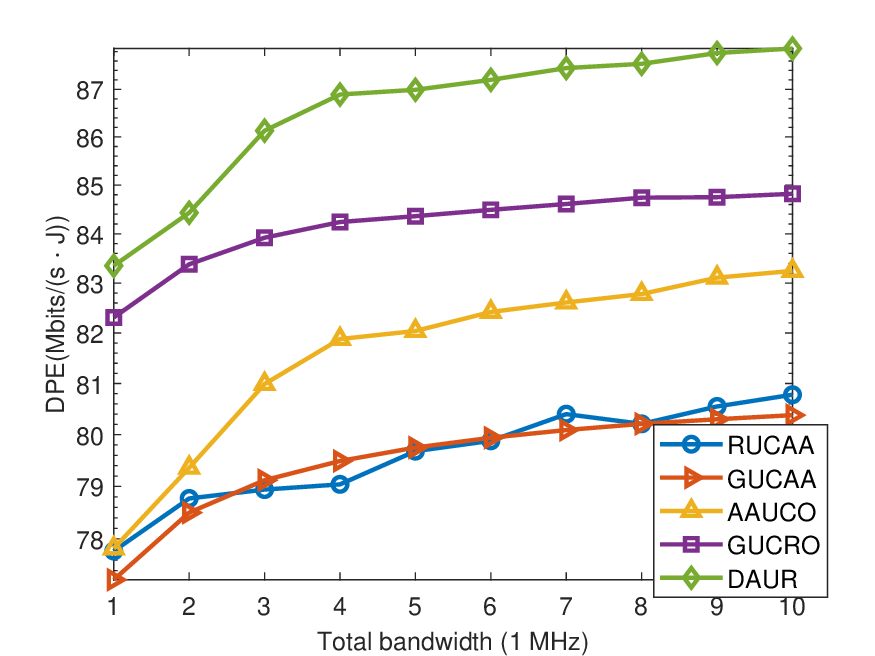}\label{fig.comparison bandwidth}}
\subfigure[Performance comparisons of different server frequency.]{\includegraphics[width=.33\textwidth]{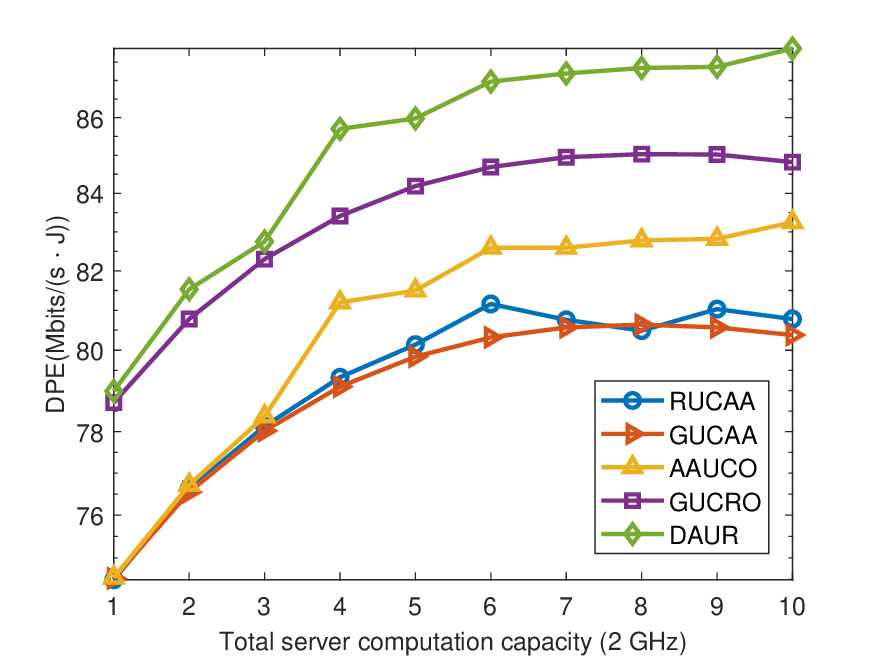}\label{fig.comparison server frequency}}
\subfigure[Performance comparisons of different user frequency.]{\includegraphics[width=.33\textwidth]{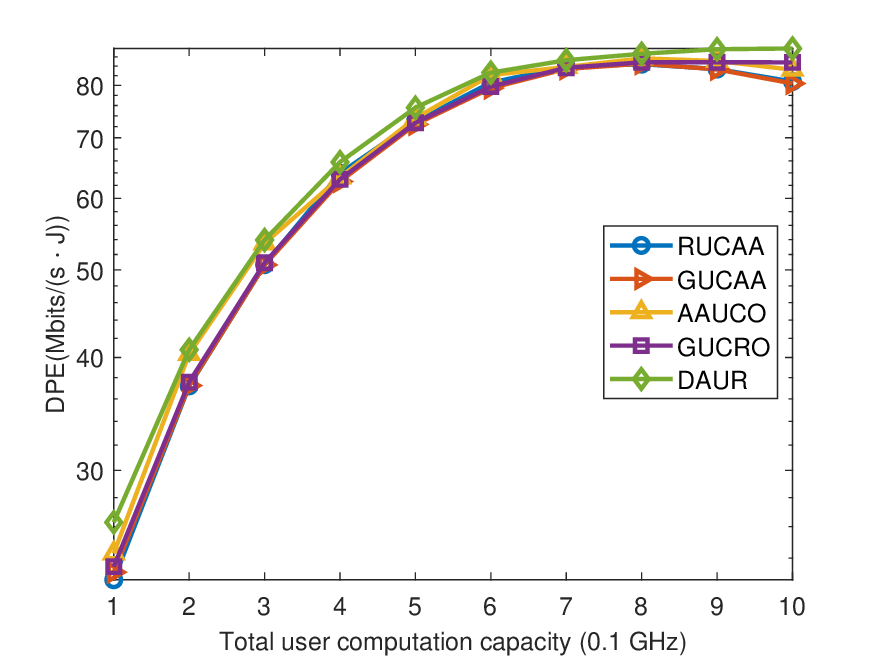}\label{fig.comparison user frequency}}
\subfigure[Performance comparisons of different user transmit power.]{\includegraphics[width=.33\textwidth]{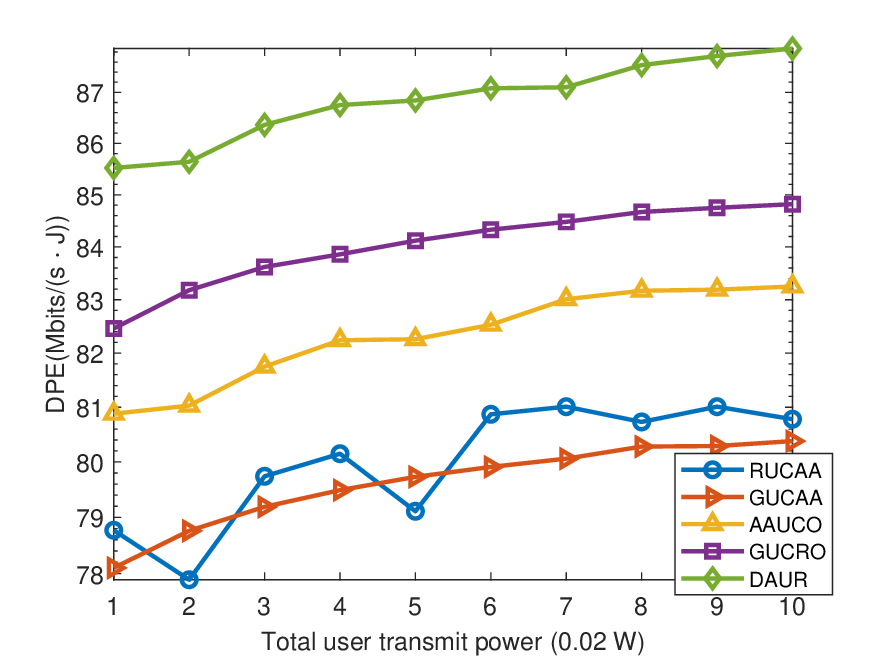}\label{fig.comparison user transmit power}} 
\subfigure[Performance comparisons of different weight ratios.]{\includegraphics[width=.33\textwidth]{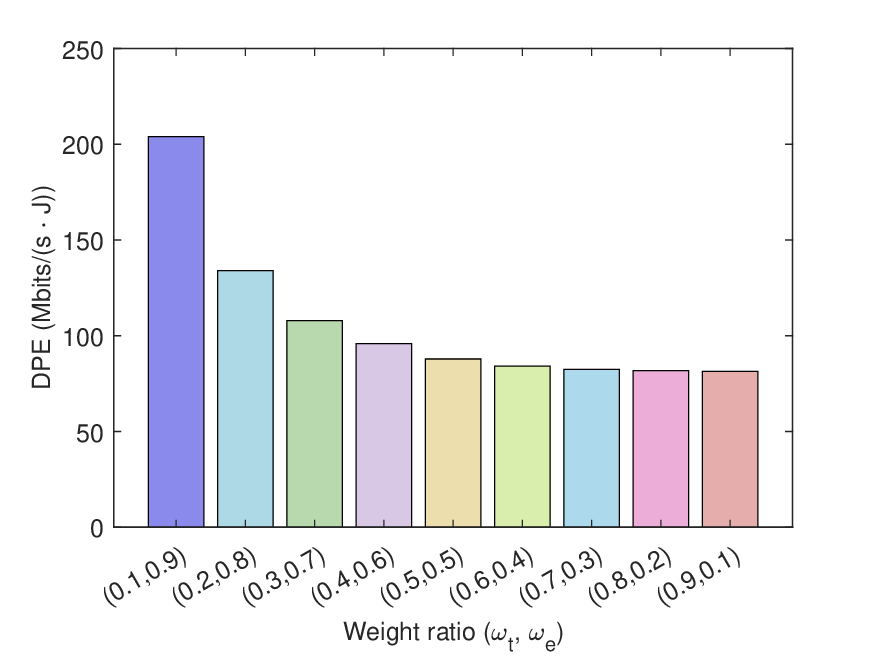}\label{fig.comparison omega}} 
\subfigure[Performance comparisons of different preference weights.]{\includegraphics[width=.33\textwidth]{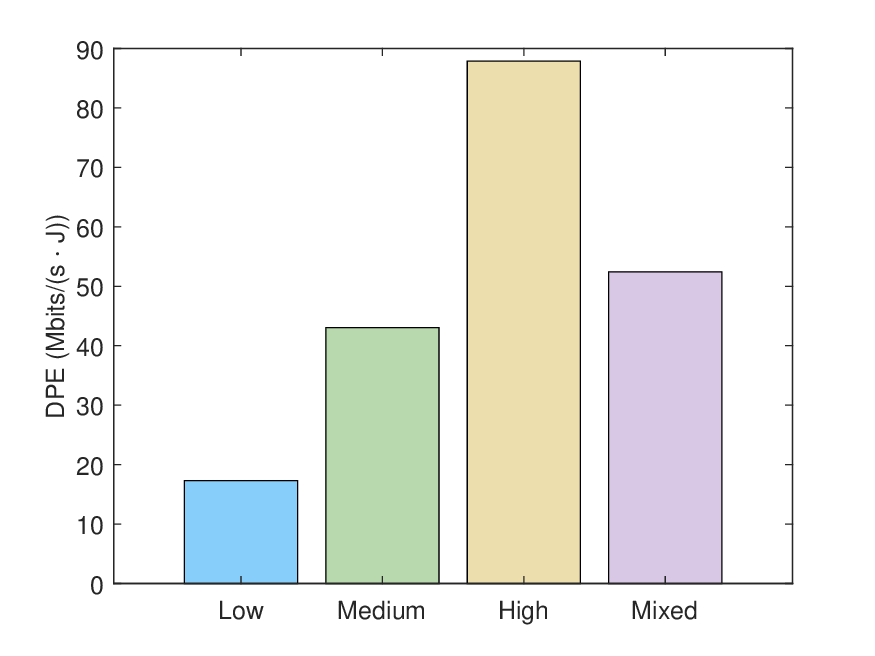}\label{fig.comparison preference}}  
\caption{Performance comparison of different communication and computation resources and heterogeneous settings.}
\end{figure*}
\section{Simulation Results}\label{sec.simulation_results}
In this section, we give the default simulation settings and then present the numerical results and analysis.

\textbf{Default settings.}
We employ a circular network topology with a radius of 1000 meters \cite{yang2015performance}, comprising ten users and two servers randomly positioned. The large-scale fading between user $n$ and server $m$, denoted as $h_{n,m}$, follows the model $128.1+37.6 \log_{10}d_{n,m}$, with $d_{n,m}$ representing the Euclidean distance between them. Rayleigh fading is assumed for small-scale fading. The Gaussian noise power spectral density $\sigma^2$, is set at $-134$dBm, and each server's total bandwidth $b_m$, is 10 MHz. Mobile users have a maximum transmit power $p_{n}$, of 0.2 W, and their maximum computational capacity $f_{n}$, is 1 GHz, while servers' $f_{m}$, have a capacity of 20 GHz. Both mobile users and servers process data at 279.62 CPU cycles per bit ($\eta_n$ and $\eta_m$), and servers require 737.5 CPU cycles per bit for blockchain block generation. The effective switched capacitance for both $\kappa_n$ and $\kappa_m$ is $10^{-27}$. Data sizes for mobile users vary between 500 KB and 2000 KB, determined by scaling uniformly distributed pseudorandom values within this range. The block size $S_b$ is 8 MB, and the data rate for wired server links $R_{m}$ is 15 Mbps. The ratio of data size change between MEC and blockchain tasks $\omega_b$ is fixed at 1. For delay and energy consumption parameters ($\omega_t$ and $\omega_e$), values of 0.5 each are used. DPE preference weights, $c_n$ and $c_{n,m}$ are set at $1 \times \frac{1}{5\times10^{5}}$ to maintain DPE values between 1 and 100. The penalty weight $\varpi$ is set as 175. The Mosek CVX optimization tool is utilized for the simulations.

\begin{table}[t]
\centering
\caption{Convergence of the DAUR Algorithm. This table shows that the number of rounds till convergence is often in single digits.}
\label{tab:DAURConvergence}
\begin{tabular}{@{}lccc@{}}
\toprule
$N$, $M$ & \textbf{10, 2} & \textbf{20, 3} & \textbf{30, 4} \\
\midrule
Outer iter. rounds & 1 & 1 & 1 \\
QCQP rounds & 1 & 1 & 1 \\
FP rounds & 3 (26+25+25 iter.) & 2 (29+28 iter.) & 1 (30 iter.) \\
QCQP iter. & 21 & 25 & 41 \\
QCQP time & 0.19s & 13.88s & 228.78s \\
FP time & 0.03s & 0.10s & 0.08s \\
Total time & 0.22s & 13.98s & 228.86s \\
\bottomrule
\end{tabular}
\end{table}
In Fig. \ref{fig.convergence of FP} and Fig. \ref{fig.convergence of QCQP}, when examining the FP and QCQP methods individually, they demonstrate convergence within just 8 iterations. 
Note that the term ``iterations'' in Fig. \ref{fig.convergence_qcqp_fp} refers to the internal optimization iterations executed by the Mosek solver during each round of QCQP or FP processing. Since CVX does not expose intermediate solution values during its internal optimization process, we use the primal objective value reported by CVX as a surrogate for the DPE value in order to visualize the convergence trend across iterations.

In Table \ref{tab:DAURConvergence}, the DAUR algorithm showcases effective convergence across various user-server configurations in a blockchain-enabled Metaverse wireless communication system. For the (10, 2) user-server setup, it achieves convergence in just one total outer iteration, utilizing one QCQP round and three FP rounds. The QCQP method takes 21 iterations and 0.19 seconds, while the FP method cumulatively takes 76 iterations (computed from 26+25+25) and 0.03 seconds, leading to a total time of 0.22 seconds. In a more complex (20, 3) configuration, the algorithm still maintains its efficiency with one total outer iteration, one QCQP round of 25 iterations (13.88 seconds), and two FP rounds totaling 57 iterations (0.10 seconds), culminating in 13.98 seconds overall. For the largest tested setup of (30, 4), the DAUR algorithm consistently exhibits its robust convergence capability, completing within one outer iteration, one QCQP round of 41 iterations (228.78 seconds), and one FP round of 30 iterations (0.08 seconds), summing up to 228.86 seconds in total.

Compared to the conference version, the journal version achieves an approximate $1.6\%$ improvement in the Problem $\mathbb{P}_1$ objective value with reduced computational complexity. Notably, the number of QCQP iterations required in the journal version is actually reduced across all tested configurations (e.g., from 30–58 iterations in the conference version to 21–41 iterations in the journal version). This reduction reflects the improved efficiency of the refined solver setup, which accelerates convergence while simultaneously achieving better solution quality. Consequently, the total computational time remains comparable or even slightly lower, demonstrating a highly favorable trade-off: better performance, faster convergence, and sustained practical deployability for the Metaverse wireless communication systems.

\textbf{Comparison of convergence behavior with other baselines.} 
In Fig.~\ref{fig.convergence_iteration}, the red box highlights the points corresponding to the QCQP category, while the blue box marks those in the FP category. The AAUCO method reaches convergence within a single QCQP round. GUCRO converges using two FP rounds. In contrast, DAUR undergoes one QCQP round followed by three FP rounds to achieve convergence. In Fig.~\ref{fig.convergence_time}, RUCAA and GUCAA converge instantly within one round, but their achieved DPEs are relatively low (80.78 and 80.38, respectively), with negligible runtimes of less than 0.0012 seconds. AAUCO improves upon this by reaching a DPE of 83.25 in 0.19 seconds but shows no further progress. GUCRO slightly outperforms AAUCO with DPEs up to 84.82 across two rounds, consuming 0.08 seconds in total. DAUR demonstrates the most significant improvement, progressively enhancing DPE from 83.25 to 87.87 over four rounds with a total runtime of 0.22 seconds. These results indicate that DAUR achieves the highest DPE while maintaining a reasonable computational cost.

\textbf{Performance comparison with other baselines.}
We choose the following baselines --- RUCAA: random user connection with average resource allocation; GUCAA: greedy user connection with average resource
allocation; AAUCO: average resource allocation with user connection
optimization; GUCRO: greedy user connection with resource allocation optimization. Note that user connection optimization and resource allocation refer to the QCQP and FP methods, respectively.

The DAUR algorithm outperforms other baseline methods in DPE performance in Fig. \ref{fig.DAUR performance}, achieving 87.87 M bits/(s $\cdot$ J). This surpasses RUCAA (80.78), GUCAA (80.38), AAUCO (83.25), and GUCRO (84.82), demonstrating its superior efficiency in resource allocation and user connection optimization, attributed to its effective integration of QCQP and FP methods.

\textbf{Performance comparison of different communication and computation resources.}
Across the server available bandwidth range of 1 to 10 MHz, the DAUR algorithm consistently outshines other methods in DPE performance in Fig. \ref{fig.comparison bandwidth}, starting at 83.35 and leading a peak of 87.87, notably surpassing other methods in efficiency enhancement. With the increase in communication bandwidth, the DPE of each method shows an increasing trend, which indicates that under the existing configuration, DPE is positively correlated with communication bandwidth.

In the simulation with varying server computation capacities from 2 GHz to 20 GHz in Fig. \ref{fig.comparison server frequency}, the DAUR algorithm exhibits superior DPE performance, steadily increasing from 78.99 to 87.87. This outperformance is consistent across all capacity levels, surpassing the other methods.
In the simulation where user computing capacity varies from 0.1 GHz to 1 GHz in Fig. \ref{fig.comparison user frequency}, the DAUR algorithm consistently demonstrates the highest DPE performance, increasing from 26.29 to 87.87. This trend indicates DAUR's effective handling of increased computing capacities, outperforming other methods at every step.
In summary, the DAUR algorithm outperforms other baselines in varying computing capacities, demonstrating its effective DPE performance and resource management in both server and user capacity simulations.

In Fig. \ref{fig.comparison user transmit power} where the maximum transmit power of mobile users ($p_n$) is adjusted from 0.02 W to 0.2 W, the DAUR algorithm demonstrates superior and consistent DPE performance, starting at 85.52 and peaking at 87.87. 
From Figs. \ref{fig.comparison server frequency}, \ref{fig.comparison user frequency}, and \ref{fig.comparison user transmit power}, we know that DAUR's approach is still better than the baselines using partial optimization, indicating that changes in network communication and computing resources require us to re-optimize user connectivity and resource allocation simultaneously.

\textbf{Performance comparison of heterogeneous settings.}
When adjusting the balance between delay and energy consumption weights ($\omega_t$ and $\omega_e$) in the DAUR algorithm, a notable pattern emerges in DPE values in Fig. \ref{fig.comparison omega}. As the emphasis shifts progressively from delay-centric $(0.1, 0.9)$ to energy-centric $(0.9, 0.1)$, there's an initial steep decrease in DPE from 203.99 to a balanced point of 87.87 at $(0.5, 0.5)$, followed by a more gradual reduction, eventually reaching 81.38. This trend underscores the algorithm's sensitivity to the trade-off between delay and energy efficiency, significantly influencing its performance.

We consider four preference parameter setting cases: 1. low preference: set $c_n$ and $c_{n,m}$ as $0.2/(5\times10^{5})$;
2. medium preference: set $c_n$ and $c_{n,m}$ as $0.5/(5\times10^{5})$;
3. high preference: set $c_n$ and $c_{n,m}$ as $1/(5\times10^{5})$;
4. mixed preference: set $c_n$ and $c_{n,m}$ as $a/(5\times10^{5})$,
where $a$ is a random value uniformly taken from [0, 1]. Fig. \ref{fig.comparison preference} presents the results of the DAUR algorithm under different preference settings for a network with 10 users and 2 servers. In the low preference scenario, the DPE is at its lowest (17.29), indicating minimal efficiency. The medium preference setting improves to a DPE of 43.03, while the high preference achieves the best performance at 87.87, showing the highest efficiency. The mixed preference setting yields a moderate DPE of 52.42, indicating balanced resource distribution. These results show how different preference intensities directly impact the system's data processing efficiency.

\section{Conclusion and Future Direction}\label{sec.conclusion}
In conclusion, this paper presents a potential metric in blockchain-empowered Metaverse wireless communication systems, the concept of data processing efficiency (DPE), and develops a novel DAUR algorithm for optimizing user association and resource allocation. Our approach can transform complex DPE optimization into convex problems, demonstrating superior efficiency over traditional methods through extensive numerical analysis. The potential application of this work is that the Metaverse server helps users conduct non-fungible token (NFT) tasks and considers the maximization of data processing efficiency. In the current work, the algorithm is still a centralized method, which may increase the risk of exposing the users' and servers' private information. As future work, we plan to explore decentralized implementations of the DAUR algorithm to enhance privacy preservation and system robustness. Additionally, we will consider incorporating task utility into the DPE metric, allowing it to better reflect the actual value and importance of the processed tasks. Moreover, we will investigate leveraging polynomial optimization techniques at each reformulation stage to seek global optimality, potentially further improving the solution quality.



\bibliographystyle{IEEEtran}
\bibliography{ref}

\begin{appendices}
\section{Proof of \textbf{Lemma \ref{lemma_p1top2}}}\label{append_lemma_p1top2}
\begin{proof}
Define new auxiliary variables $\vartheta_n^{(u)}$ and $\vartheta_{n,m}^{(s)}$. Let $\vartheta_n^{(u)} \leq \frac{c_n(1-\varphi_n)d_n}{cost^{(u)}_n}$ and $\vartheta_{n,m}^{(s)} \leq \frac{c_{n,m}x_{n,m}\varphi_nd_n}{cost^{(s)}_{n,m}}$. According to Equations (\text{\ref{eq.cost_u}}) and (\text{\ref{eq.cost_s}}), we obtain that
\begin{talign}
    &cost^{(u)}_n = \omega_t T^{(up)}_n + \omega_e E^{(up)}_n, cost^{(u)}_n \leq \frac{c_n(1-\varphi_n)d_n}{\vartheta_n^{(u)}} \nonumber \\
    &\Rightarrow \omega_t T^{(up)}_n + \omega_e E^{(up)}_n \leq \frac{c_n(1-\varphi_n)d_n}{\vartheta_n^{(u)}} \nonumber \\
    &\Rightarrow \omega_t T^{(up)}_n + \omega_e \kappa_n(1-\varphi_n)d_n\eta_n(\psi_nf_n)^2 - \frac{c_n(1-\varphi_n)d_n}{\vartheta_n^{(u)}}\leq 0, \label{eq.cost_u2}
\end{talign}
\begin{talign}
    &cost^{(s)}_{n,m} = \omega_t (T^{(ut)}_{n,m} + T^{(sp)}_{n,m} + T^{(sg)}_{n,m} + T^{(bp)}_{n,m} + T^{(sv)}_{n,m}) \nonumber \\ 
    &+ \omega_e (E^{(ut)}_{n,m} + E^{(sp)}_{n,m} + E^{(sg)}_{n,m}),  cost^{(s)}_{n,m} \leq \frac{c_{n,m}x_{n,m}\varphi_nd_n}{\vartheta_{n,m}^{(s)}} \nonumber \\
    &\Rightarrow \omega_t (T^{(ut)}_{n,m} + T^{(sp)}_{n,m} + T^{(sg)}_{n,m} + T^{(bp)}_{n,m} + T^{(sv)}_{n,m}) \nonumber \\ 
    &+ \omega_e (E^{(ut)}_{n,m} + E^{(sp)}_{n,m} + E^{(sg)}_{n,m})\leq \frac{c_{n,m}x_{n,m}\varphi_nd_n}{\vartheta_{n,m}^{(s)}} \nonumber \\
    &\Rightarrow \omega_t(T^{(ut)}_{n,m} + T^{(sp)}_{n,m} + T^{(sg)}_{n,m} + T^{(bp)}_{n,m} + T^{(sv)}_{n,m}) \nonumber \\ 
    &+ \omega_e \{\frac{x_{n,m}\rho_np_n\varphi_nd_n}{r_{n,m}} + \kappa_mx_{n,m}\varphi_nd_n\eta_m(\gamma_{n,m}\zeta_{n,m}f_m)^2 \nonumber \\ 
    &+\!\! \kappa_mx_{n,m}\varphi_nd_n\eta_m \omega_b [(1\!\!-\!\!\gamma_{n,m})\zeta_{n,m}f_m]^2\} \!\!-\!\! \frac{c_{n,m}x_{n,m}\varphi_nd_n}{\vartheta_{n,m}^{(s)}} \leq 0,\label{eq.cost_s2}
\end{talign}
where ``$\Rightarrow$'' means ``imply''. Next, we leverage two new auxiliary variables $T^{(u)}_n$ and $T^{(s)}_{n,m}$ to replace the delay in Equations (\text{\ref{eq.cost_u2}}) and (\text{\ref{eq.cost_s2}}). With the introduction of these two variables, there would be four more new constraints:
\begin{talign}
    &\omega_t T^{(u)}_n + \omega_e \kappa_n(1-\varphi_n)d_n\eta_n(\psi_nf_n)^2 - \frac{c_n(1-\varphi_n)d_n}{\vartheta_n^{(u)}}\leq 0,\\
    &\omega_t T^{(s)}_{n,m} \!+ \!\omega_e \{\frac{x_{n,m}\rho_np_n\varphi_nd_n}{r_{n,m}} \!+ \!\kappa_mx_{n,m}\varphi_nd_n\eta_m(\gamma_{n,m}\zeta_{n,m}f_m)^2 \nonumber \\ 
    &+ \kappa_mx_{n,m}\varphi_nd_n\eta_m \omega_b [(1-\gamma_{n,m})\zeta_{n,m}f_m]^2\} \nonumber \\
    &- \frac{c_{n,m}x_{n,m}\varphi_nd_n}{\vartheta_{n,m}^{(s)}} \leq 0, \\
    &T^{(up)}_n \leq T^{(u)}_n, \\
    &T^{(ut)}_{n,m} + T^{(sp)}_{n,m} + T^{(sg)}_{n,m} + T^{(bp)}_{n,m} + T^{(sv)}_{n,m} \leq T^{(s)}_{n,m}.
\end{talign}
Let $\bm{T}:=\{\bm{T^{(u)}},\bm{T^{(s)}}\}$ and $\bm{\vartheta}:=\{\bm{\vartheta^{(u)}},\bm{\vartheta^{(s)}}\}$. Then, the Problem $\mathbb{P}_{1}$ can be relaxed into the \mbox{Problem $\mathbb{P}_{2}$}.

Thus, \textbf{Lemma \ref{lemma_p1top2}} is proven.
\end{proof}

\section{Proof of \textbf{Lemma \ref{lemma_p2top3}}}\label{append_lemma_p2top3}
\begin{proof}
Here we analyze part of the KKT condition of Problem $\mathbb{P}_{2}$ to facilitate our subsequent analysis. We introduce non-negative variables $\alpha^{(u)}_n$ and $\alpha^{(s)}_{n,m}$ as the multipliers. Let $\bm{\alpha^{(u)}}:=[\alpha^{(u)}_n]|_{n\in\mathcal{N}}$, $\bm{\alpha^{(s)}}:=[\alpha^{(s)}_{n,m}]|_{n\in\mathcal{N},m\in\mathcal{M}}$, and $\bm{\alpha}:=\{\bm{\alpha^{(u)}}, \bm{\alpha^{(s)}}\}$. The Lagrangian function is given as follows:
\begin{talign}
    &L_{\mathbb{P}_{2}}(\bm{x},\bm{\varphi},\bm{\gamma},\bm{\phi},\bm{\rho},\bm{\zeta},\bm{\psi}, \bm{\vartheta}, \bm{T}, \bm{\alpha}) \nonumber \\
    &= - \sum_{n \in \mathcal{N}} \vartheta_n^{(u)} -\sum_{n \in \mathcal{N}} \sum_{m \in \mathcal{M}} \vartheta_{n,m}^{(s)} \nonumber \\
    &+ \sum_{n \in \mathcal{N}} \alpha^{(u)}_n \cdot [\vartheta_n^{(u)}cost^{(u)}_n - c_n(1 - \varphi_n)d_n] \nonumber \\
    &+ \sum_{n \in \mathcal{N}}\sum_{m \in \mathcal{M}}\alpha^{(s)}_{n,m} \cdot [\vartheta_{n,m}^{(s)}cost^{(s)}_{n,m} - c_{n,m}x_{n,m}\varphi_n d_n] \nonumber \\
    &+ \hat{L}_{\mathbb{P}_{2}},
\end{talign}
where $\hat{L}_{\mathbb{P}_{2}}$ is the remaining Lagrangian terms that we don't care about. Next, we analyze some stationarity and complementary slackness properties of $L_{\mathbb{P}_{2}}$.

\textbf{Stationarity:}
\begin{talign}
    &\frac{\partial L_{\mathbb{P}_{2}}}{\partial \vartheta_n^{(u)}} = -1 + \alpha^{(u)}_n cost^{(u)}_n=0, \forall n \in \mathcal{N},\\
    &\frac{\partial L_{\mathbb{P}_{2}}}{\partial \vartheta_{n,m}^{(s)}} = -1 + \alpha^{(s)}_{n,m} cost^{(s)}_{n,m}=0, \forall n \in \mathcal{N}, \forall m \in \mathcal{M}.
\end{talign}

\textbf{Complementary slackness:}
\begin{talign}
    &\alpha^{(u)}_n \cdot [\vartheta_n^{(u)}cost^{(u)}_n - c_n(1 - \varphi_n)d_n] = 0,\forall n \in \mathcal{N},\\
    &\alpha^{(s)}_{n,m} \cdot (\vartheta_{n,m}^{(s)}cost^{(s)}_{n,m} - c_{n,m}x_{n,m}\varphi_n d_n) = 0, \nonumber \\
    &\forall n \in \mathcal{N}, \forall m \in \mathcal{M}.
\end{talign}
Thus, at KKT points of Problem $\mathbb{P}_{2}$, we can obtain that 
\begin{talign}
    &\alpha^{(u)}_n = \frac{1}{cost^{(u)}_n},\\
    &\alpha^{(s)}_{n,m} = \frac{1}{cost^{(s)}_{n,m}},\\
    &\vartheta_n^{(u)} = \frac{c_n(1 - \varphi_n)d_n}{cost^{(u)}_n},\\
    &\vartheta_{n,m}^{(s)} = \frac{c_{n,m}x_{n,m}\varphi_n d_n}{cost^{(s)}_{n,m}}.
\end{talign}
Based on the above discussion, Problem $\mathbb{P}_{2}$ can be transformed into Problem $\mathbb{P}_{3}$.

\textbf{Lemma \ref{lemma_p2top3}} holds.
\end{proof}

\section{Proof of \textbf{Lemma \ref{lemma_p4top5}}}\label{append_lemma_p4top5}
\begin{proof}
We introduce one auxiliary variable $\upsilon^{(s)}_{n,m}$, where
\begin{talign}
    \upsilon^{(s)}_{n,m} = \frac{1}{2x_{n,m}\rho_np_n\varphi_nd_nr_{n,m}}.
\end{talign}
Then, $cost^{(s)}_{n,m}$ can be rewritten as:
\begin{talign}
    &\widetilde{cost}^{(s)}_{n,m} = \omega_t T^{(s)}_{n,m} + \omega_e \{(x_{n,m}\rho_np_n\varphi_nd_n)^2\upsilon^{(s)}_{n,m} + \frac{1}{4r_{n,m}^2\upsilon^{(s)}_{n,m}}\nonumber \\ 
    &+ \kappa_mx_{n,m}\varphi_nd_n\eta_m(\gamma_{n,m}\zeta_{n,m}f_m)^2 \nonumber \\ 
    &+ \kappa_mx_{n,m}\varphi_nd_n\eta_m \omega_b [(1-\gamma_{n,m})\zeta_{n,m}f_m]^2\}. \label{eq.vartheta_s_new}
\end{talign}
Since $x_{n,m}, \varphi_n, \gamma_{n,m}, \upsilon^{(s)}_{n,m}$ is given, Equation (\text{\ref{eq.vartheta_s_new}}) is a convex function.

Let $\chi(\rho_n) = x_{n,m}\rho_np_n\varphi_nd_n$ and $\varsigma(\phi_{n,m},\rho_n) = r_{n,m}$, where $r_{n,m} = \phi_{n,m}b_m\log_2(1 + \frac{\rho_np_ng_{n,m}}{\sigma^2\phi_{n,m}b_m})$. It's easy to know that $\chi(\rho_n)$ is convex of $\rho_n$ and $\varsigma(\phi_{n,m},\rho_n)$ is jointly concave of $(\phi_{n,m},\rho_n)$. We define two following functions:
\begin{talign}
    &\mathcal{F}(\rho_n,\phi_{n,m},\zeta_{n,m},T^{(s)}_{n,m}) = \omega_t T^{(s)}_{n,m} + \omega_e \{\frac{\chi(\rho_n)}{\varsigma(\phi_{n,m},\rho_n)} \nonumber \\ 
    &+ \kappa_mx_{n,m}\varphi_nd_n\eta_m(\gamma_{n,m}\zeta_{n,m}f_m)^2 \nonumber \\ 
    &+ \kappa_mx_{n,m}\varphi_nd_n\eta_m \omega_b [(1-\gamma_{n,m})\zeta_{n,m}f_m]^2\},
\end{talign}
\begin{talign}
    &\mathcal{G}(\rho_n,\phi_{n,m},\zeta_{n,m},T^{(s)}_{n,m}) = \omega_t T^{(s)}_{n,m} \nonumber \\ 
    &+ \omega_e \{\chi(\rho_n)^2\upsilon^{(s)}_{n,m} + \frac{1}{4\varsigma(\phi_{n,m},\rho_n)^2\upsilon^{(s)}_{n,m}}\nonumber \\ 
    &+ \kappa_mx_{n,m}\varphi_nd_n\eta_m(\gamma_{n,m}\zeta_{n,m}f_m)^2 \nonumber \\ 
    &+ \kappa_mx_{n,m}\varphi_nd_n\eta_m \omega_b [(1-\gamma_{n,m})\zeta_{n,m}f_m]^2\}.
\end{talign}
The partial derivative of $T^{(s)}_{n,m}$ is given by
\begin{talign}
    \frac{\partial\mathcal{F}(\rho_n,\phi_{n,m},\zeta_{n,m},\vartheta^{(s)}_{n,m},T^{(s)}_{n,m})}{\partial T^{(s)}_{n,m}} = \omega_t,\\
    \frac{\partial\mathcal{G}(\rho_n,\phi_{n,m},\zeta_{n,m},\vartheta^{(s)}_{n,m},T^{(s)}_{n,m})}{\partial T^{(s)}_{n,m}} = \omega_t.
\end{talign}
We can easily get that
\begin{talign}
    \frac{\partial\mathcal{F}(\rho_n,\phi_{n,m},\zeta_{n,m},\vartheta^{(s)}_{n,m},T^{(s)}_{n,m})}{\partial T^{(s)}_{n,m}} = \frac{\partial\mathcal{G}(\rho_n,\phi_{n,m},\zeta_{n,m},\vartheta^{(s)}_{n,m},T^{(s)}_{n,m})}{\partial T^{(s)}_{n,m}}.
\end{talign}
The partial derivative of $\zeta_{n,m}$ is
\begin{talign}
    &\frac{\partial \mathcal{F}(\rho_n,\phi_{n,m},\zeta_{n,m},\vartheta^{(s)}_{n,m},T^{(s)}_{n,m})}{\partial \zeta_{n,m}} = 2\kappa_mx_{n,m}\varphi_nd_n\eta_m\zeta_{n,m}(\gamma_{n,m}f_m)^2 \nonumber \\
    &+ 2\kappa_mx_{n,m}\varphi_nd_n\eta_m \omega_b \zeta_{n,m}[(1-\gamma_{n,m})f_m]^2,\\
    &\frac{\partial \mathcal{G}(\rho_n,\phi_{n,m},\zeta_{n,m},\vartheta^{(s)}_{n,m},T^{(s)}_{n,m})}{\partial \zeta_{n,m}} = 2\kappa_mx_{n,m}\varphi_nd_n\eta_m\zeta_{n,m}(\gamma_{n,m}f_m)^2 \nonumber \\
    &+ 2\kappa_mx_{n,m}\varphi_nd_n\eta_m \omega_b \zeta_{n,m}[(1-\gamma_{n,m})f_m]^2.
\end{talign}
We find that
\begin{talign}
    \frac{\partial \mathcal{F}(\rho_n,\phi_{n,m},\zeta_{n,m},\vartheta^{(s)}_{n,m},T^{(s)}_{n,m})}{\partial \zeta_{n,m}} = \frac{\partial \mathcal{G}(\rho_n,\phi_{n,m},\zeta_{n,m},\vartheta^{(s)}_{n,m},T^{(s)}_{n,m})}{\partial \zeta_{n,m}}.
\end{talign}
The partial derivative of $\rho_n$ is shown as follows:
\begin{talign}
    &\frac{\partial \mathcal{F}(\rho_n,\phi_{n,m},\zeta_{n,m},\vartheta^{(s)}_{n,m},T^{(s)}_{n,m})}{\partial \rho_n} \nonumber \\ 
    &= \omega_e\frac{\frac{\partial \chi(\rho_n)}{\partial \rho_n}\varsigma(\phi_{n,m},\rho_n) - \chi(\rho_n) \frac{\partial \varsigma(\phi_{n,m},\rho_n)}{\partial \rho_n}}{\varsigma(\phi_{n,m},\rho_n)^2},\\
    &\frac{\partial \mathcal{G}(\rho_n,\phi_{n,m},\zeta_{n,m},\vartheta^{(s)}_{n,m},T^{(s)}_{n,m})}{\partial \rho_n} \nonumber \\
    &= \omega_e (2\upsilon^{(s)}_{n,m} \chi(\rho_n) \frac{\partial \chi(\rho_n)}{\partial \rho_n} - \frac{1}{2\upsilon^{(s)}_{n,m} \varsigma(\phi_{n,m},\rho_n)^3}\frac{\partial \varsigma(\phi_{n,m},\rho_n)}{\partial \rho_n}).
\end{talign}
When $\upsilon^{(s)}_{n,m} = \frac{1}{2x_{n,m}\rho_np_n\varphi_nd_nr_{n,m}}$, we can get that
\begin{talign}
    \frac{\partial \mathcal{F}(\rho_n,\phi_{n,m},\zeta_{n,m},\vartheta^{(s)}_{n,m},T^{(s)}_{n,m})}{\partial \rho_n} = \frac{\partial \mathcal{G}(\rho_n,\phi_{n,m},\zeta_{n,m},\vartheta^{(s)}_{n,m},T^{(s)}_{n,m})}{\partial \rho_n}.
\end{talign}
The partial derivative of $\phi_{n,m}$ is
\begin{talign}
    &\frac{\partial \mathcal{F}(\rho_n,\phi_{n,m},\zeta_{n,m},\vartheta^{(s)}_{n,m},T^{(s)}_{n,m})}{\partial \phi_{n,m}} = -\frac{\omega_e\chi(\rho_n)}{\varsigma(\phi_{n,m},\rho_n)^2}\frac{\partial \varsigma(\phi_{n,m},\rho_n)}{\partial \phi_{n,m}},\\
    &\frac{\partial \mathcal{G}(\rho_n,\phi_{n,m},\zeta_{n,m},\vartheta^{(s)}_{n,m},T^{(s)}_{n,m})}{\partial \phi_{n,m}} = -\frac{\omega_e}{2\upsilon^{(s)}_{n,m}\varsigma(\phi_{n,m},\rho_n)^3}\frac{\partial \varsigma(\phi_{n,m},\rho_n)}{\partial \phi_{n,m}}.
\end{talign}
When $\upsilon^{(s)}_{n,m} = \frac{1}{2x_{n,m}\rho_np_n\varphi_nd_nr_{n,m}}$, we can obtain that
\begin{talign}
    \frac{\partial \mathcal{F}(\rho_n,\phi_{n,m},\zeta_{n,m},\vartheta^{(s)}_{n,m},T^{(s)}_{n,m})}{\partial \phi_{n,m}} =\frac{\partial \mathcal{G}(\rho_n,\phi_{n,m},\zeta_{n,m},\vartheta^{(s)}_{n,m},T^{(s)}_{n,m})}{\partial \phi_{n,m}}.
\end{talign}
Based on the above discussion, we can obtain that
\begin{talign}
     \frac{\partial \mathcal{F}(\rho_n,\phi_{n,m},\zeta_{n,m},\vartheta^{(s)}_{n,m},T^{(s)}_{n,m})}{\partial (\rho_n,\phi_{n,m},\zeta_{n,m},\vartheta^{(s)}_{n,m},T^{(s)}_{n,m})} =\frac{\partial \mathcal{G}(\rho_n,\phi_{n,m},\zeta_{n,m},\vartheta^{(s)}_{n,m},T^{(s)}_{n,m})}{\partial (\rho_n,\phi_{n,m},\zeta_{n,m},\vartheta^{(s)}_{n,m},T^{(s)}_{n,m})}.
\end{talign}
Besides, it's easy to know that 
\begin{talign}
    &\mathcal{F}(\rho_n,\phi_{n,m},\zeta_{n,m},\vartheta^{(s)}_{n,m},T^{(s)}_{n,m})\nonumber \\ 
    &= \mathcal{G}(\rho_n,\phi_{n,m},\zeta_{n,m},\vartheta^{(s)}_{n,m},T^{(s)}_{n,m}),
\end{talign}
when $\upsilon^{(s)}_{n,m} = \frac{1}{2x_{n,m}\rho_np_n\varphi_nd_nr_{n,m}}$. Therefore, the function of $\mathcal{F}(\cdot)$ is the same as that of $\mathcal{G}(\cdot)$. Let $\bm{\upsilon^{(s)}} := [\upsilon^{(s)}_{n,m}|_{\forall n \in \mathcal{N},\forall m \in \mathcal{M}}]$. Problem $\mathbb{P}_{4}$ can be transformed into Problem $\mathbb{P}_{5}$. In Problem $\mathbb{P}_{5}$, if given $\bm{\upsilon^{(s)}}$, it would be a convex optimization problem.

At the $i$-th iteration, if we first fix $\bm{\upsilon}^{\bm{(s)}(i-1)}$, Problem $\mathbb{P}_{5}$ would be a concave optimization problem. Then we optimize $\bm{\phi}^{(i)},\bm{\rho}^{(i)},\bm{\zeta}^{(i)}$, $\bm{\psi}^{(i)}, \bm{T}^{(i)}$. After we obtain the results of them, we then update $\bm{\upsilon}^{\bm{(s)}(i)}$ according to those results. Because the alternative optimization of the Problem $\mathbb{P}_{5}$ is non-decreasing, as $i\rightarrow\infty$, we can finally obtain the optimal solutions of Problem $\mathbb{P}_{5}$ (i.e., $\bm{\phi}^{(\star)}$, $\bm{\rho}^{(\star)}$, $\bm{\zeta}^{(\star)}$, $\bm{\psi}^{(\star)}$, $\bm{T}^{(\star)}$, $\bm{\upsilon}^{\bm{(s)}(\star)}$). We know that $\upsilon^{(s)}_{n,m} = \frac{1}{2x_{n,m}\rho_np_n\varphi_nd_nr_{n,m}}$. Thus, with $\bm{\upsilon}^{\bm{(s)}(\star)}$, we can find $\bm{\phi}^{(\star)},\bm{\rho}^{(\star)},\bm{\zeta}^{(\star)},\bm{\psi}^{(\star)}, \bm{T}^{(\star)}$, which is a stationary point of Problem $\mathbb{P}_{5}$.

\textbf{Lemma \ref{lemma_p4top5}} is proven.
\end{proof}

\section{Proof of \textbf{Lemma \ref{lemma_gamma}}}\label{append_lemma_gamma}
\begin{proof}
    We first analyze $\vartheta_{n,m}^{(s)}cost_{n,m}^{(s)}$, and write the explicit expression of it:
\begin{talign}
    &\vartheta_{n,m}^{(s)}cost_{n,m}^{(s)} \nonumber \\
    &= \vartheta_{n,m}^{(s)}\omega_t T^{(s)}_{n,m} + \vartheta_{n,m}^{(s)}\omega_e \frac{\rho_n p_n d_n}{r_{n,m}}x_{n,m}\varphi_n \nonumber \\
    &+ \vartheta_{n,m}^{(s)}\omega_e \{\kappa_m d_n \eta_m\zeta_{n,m}^2 f_m^2 [\gamma_{n,m}^2 + \omega_b (1-\gamma_{n,m})^2]x_{n,m}\varphi_n\},
\end{talign}
where $\gamma_{n,m}^2 + \omega_b (1-\gamma_{n,m})^2$ is independent of the others except $T^{(s)}_{n,m}$. when $\gamma_{n,m} = \frac{\omega_b}{1+\omega_b}$, $\gamma_{n,m}^2 + \omega_b (1-\gamma_{n,m})^2$ takes the minimum value. In $T^{(s)}_{n,m}$, the terms $T^{(sp)}_{n,m}$, $T^{(sg)}_{n,m}$, and $T^{(sv)}_{n,m}$ are related to $\gamma_{n,m}$. Since $T^{(sv)}_{n,m}$ is generally much smaller than $T^{(sp)}_{n,m}$ and $T^{(sg)}_{n,m}$, we only focus on $T^{(sp)}_{n,m}$ and $T^{(sg)}_{n,m}$ here. It's easy to know that when $\gamma_{n,m} = \frac{1}{1+\omega_b}$, $T^{(sp)}_{n,m} + T^{(sg)}_{n,m}$ takes the minimum value according to basic inequality. Following are the detailed steps:
\begin{talign}
    T^{(sp)}_{n,m} + T^{(sg)}_{n,m} \geq 2\sqrt{T^{(sp)}_{n,m} T^{(sg)}_{n,m}},
\end{talign}
where if and only if $T^{(sp)}_{n,m} =T^{(sg)}_{n,m}$, ``$=$'' can be obtained.
\begin{talign}
    &\quad \quad T^{(sp)}_{n,m} =T^{(sg)}_{n,m},\nonumber \\
    &\Rightarrow \frac{x_{n,m}\varphi_nd_n\eta_m}{\gamma_{n,m}\zeta_{n,m}f_m} = \frac{x_{n,m}\varphi_nd_n\omega_b\eta_m}{(1-\gamma_{n,m})\zeta_{n,m}f_m}, \nonumber \\
    &\Rightarrow \gamma_{n,m} = \frac{1}{1+\omega_b}.
\end{talign}
We set $\omega_b = 1$ and then $\frac{\omega_b}{1+\omega_b} = \frac{1}{1+\omega_b}$, in which case, $cost_{n,m}^{(s)}$ would take the minimum leading to Problem $\mathbb{P}_{7}$ take the maximum value. Based on the above discussion, we can obtain the optimal value of $\gamma_{n,m}$ that $\gamma_{n,m}^\star = \frac{\omega_b}{1+\omega_b}$ or $\frac{1}{1+\omega_b} = \frac{1}{2}$. 

Thus, \textbf{Lemma \ref{lemma_gamma}} is proven.
\end{proof}

\section{Proof of \textbf{Lemma \ref{lemma_p8top9}}}\label{append_lemma_p8top9}
\begin{proof}
Let $\bm{P}_0:=\bm{I}_{NM+N\times N} \bm{I}_{N\rightarrow NM} \text{diag}(\bm{B})\bm{e}_{N+1,NM+N}$ and we can represent $\sum_{n \in \mathcal{N}}\sum_{m \in \mathcal{M}}B_{n,m} x_{n,m} \varphi_n$ as $\bm{Q}^\intercal \bm{P}_0 \bm{Q}$. Let $\bm{W}_0^\intercal:=\bm{A}^\intercal\bm{e}_{1,N}$ and the term $\sum_{n \in \mathcal{N}}A_n \varphi_n$ can be rewritten as $\bm{W}_0^\intercal \bm{Q}$.  Let $P_{0,n}^{(T_u)}:= - \frac{\alpha^{(u)}_n \vartheta_n^{(u)} \omega_t d_n\eta_n}{\psi_nf_n}$, $P_1^{(T_u)}:=\sum_{n \in \mathcal{N}}\frac{\alpha^{(u)}_n \vartheta_n^{(u)} \omega_t d_n\eta_n}{\psi_nf_n}$, $\bm{P}_0^{(T_u)}:=[P_{0,n}^{(T_u)}]|_{n \in \mathcal{N}}$, and ${\bm{P}_2^{(T_u)}}^\intercal:={\bm{P}_0^{(T_u)}}^\intercal\bm{e}_{1,N}$. Then, the constraint (\ref{Tu_constr1}) can be represented by 
\begin{talign}
    {\bm{P}_2^{(T_u)}}^\intercal \bm{Q} + P_1^{(T_u)} \leq T^{(u)}.
\end{talign}
Let 
\begin{talign}
&P_{0,n,m}^{(T_s)}:=\frac{d_n}{r_{n,m}} + \frac{d_n\eta_m}{\gamma_{n,m}\zeta_{n,m}f_m} + \frac{d_n\omega_b\eta_m}{(1-\gamma_{n,m})\zeta_{n,m}f_m},\\
&\bm{P}_0^{(T_s)}:=[P_{0,n,m}^{(T_s)}]|_{n \in \mathcal{N},m \in \mathcal{M}},\\
&P_1^{(T_s)}:=\sum_{n \in \mathcal{N}}\sum_{m \in \mathcal{M}}\frac{S_b}{R_m} + \text{max}_{m^\prime\in\mathcal{M}\setminus\{m\}}\frac{\eta_v}{(1-\gamma_{n,m^\prime})\zeta_{n,m^\prime}f_m^\prime}.
\end{talign}
Similar to $\bm{P}_0$, let 
\begin{talign}
\bm{P}_2^{(T_s)}:= \bm{I}_{NM+N\times N} \bm{I}_{N\rightarrow NM} \text{diag}(\bm{P}_0^{(T_s)})\bm{e}_{N+1,NM+N}.
\end{talign}
Then, the constraint (\ref{Ts_constr1}) can be transformed into
\begin{talign}
    \bm{Q}^\intercal \bm{P}_0^{(T_s)} \bm{Q} + P_1^{(T_s)} \leq T^{(s)}.
\end{talign}
Let $\bm{\phi}:=(\phi_{1,1},\cdots,\phi_{N,M})^\intercal$ and $\bm{\zeta}:=(\zeta_{1,1},\cdots,\zeta_{N,M})^\intercal$. The constraints \text{(\ref{x_constr1_qcqp})}-\text{(\ref{x_zeta_constr_qcqp})} are easy to obtain, which we won't go into details here.

Therefore, \textbf{Lemma \ref{lemma_p8top9}} holds.
\end{proof}

\section{Proof of \textbf{Lemma \ref{lemma_p9top10}}}\label{append_lemma_p9top10}
\begin{proof}
Here we give the expression of matrices $\bm{P}_1$, $\bm{P}_2$, $\bm{P}_3$, $\bm{P}_4$, $\bm{P}_5$, $\bm{P}_6$, $\bm{P}_7$, and $\bm{P}_8$.
\begin{equation}
\bm{P}_1=
\left(
    \begin{array}{cc}
       \bm{P}_0  & \frac{1}{2}\bm{W}_0 \\
        \frac{1}{2}\bm{W}_0^\intercal &  T^{(u)} + T^{(s)} + C
    \end{array}
\right),
\end{equation}
\begin{equation}
\bm{P}_2=
\left(
    \begin{array}{cc}
      \boldsymbol{e}_{i}^\intercal\boldsymbol{e}_{i}   & -\frac{1}{2}\boldsymbol{e}_{i} \\
       -\frac{1}{2}\boldsymbol{e}_{i}^\intercal  & 0
    \end{array}
\right), \forall i \in \{1,\cdots, NM\}
\end{equation}
\begin{align}
\bm{P}_3=
\left(
    \begin{array}{cc}
    \bm{0}_{NM+N \times NM+N}     & \frac{1}{2}(\boldsymbol{e}_{\overline{1},\overline{M}}\boldsymbol{e}_{N+1,NM+N}^\intercal) \\
     \frac{1}{2}(\boldsymbol{e}_{\overline{1},\overline{M}}\boldsymbol{e}_{N+1,NM+N}^\intercal)^\intercal    & -1
    \end{array}
\right)\nonumber, \\ \forall i \in \{1,\cdots, N\}
\end{align}
\begin{equation}
\bm{P}_4=
\left(
    \begin{array}{cc}
      \bm{0}_{NM+N \times NM+N}   & \frac{1}{2}\boldsymbol{e}_{i} \\
       \frac{1}{2}\boldsymbol{e}_{i}^\intercal  & -1
    \end{array}
\right), \forall i \in \{1,\cdots, N\}
\end{equation}
\begin{equation}
\bm{P}_5=
\left(
    \begin{array}{cc}
    \bm{0}_{NM+N \times NM+N}    & \frac{1}{2}\boldsymbol{\phi}\boldsymbol{e}_{N+1,NM+N} \\
    \frac{1}{2}(\boldsymbol{\phi}\boldsymbol{e}_{N+1,NM+N})^\intercal     & -1
    \end{array}
\right),
\end{equation}
\begin{equation}
\bm{P}_6=
\left(
    \begin{array}{cc}
    \bm{0}_{NM+N \times NM+N}     & \frac{1}{2}\boldsymbol{\zeta}\boldsymbol{e}_{N+1,NM+N} \\
    \frac{1}{2}(\boldsymbol{\zeta}\boldsymbol{e}_{N+1,NM+N})^\intercal     & -1
    \end{array}
\right), 
\end{equation}
\begin{equation}
\bm{P}_7=
\left(
    \begin{array}{cc}
    \bm{0}_{NM+N \times NM+N}     & \frac{1}{2}\bm{P}_2^{(T_u)} \\
    \frac{1}{2}{\bm{P}_2^{(T_u)}}^\intercal     & P_1^{(T_u)}
    \end{array}
\right), 
\end{equation}
\begin{equation}
\bm{P}_8=
\left(
    \begin{array}{cc}
    \bm{P}_0^{(T_s)}     & \bm{0}_{NM+N \times 1}\\
    \bm{0}_{1 \times NM+N}     & P_1^{(T_s)}
    \end{array}
\right). 
\end{equation}

\textbf{Lemma \ref{lemma_p9top10}} is proven.
\end{proof}

\end{appendices}

\begin{IEEEbiography}
[{\includegraphics[width=1in,height=1.25in,clip,keepaspectratio]{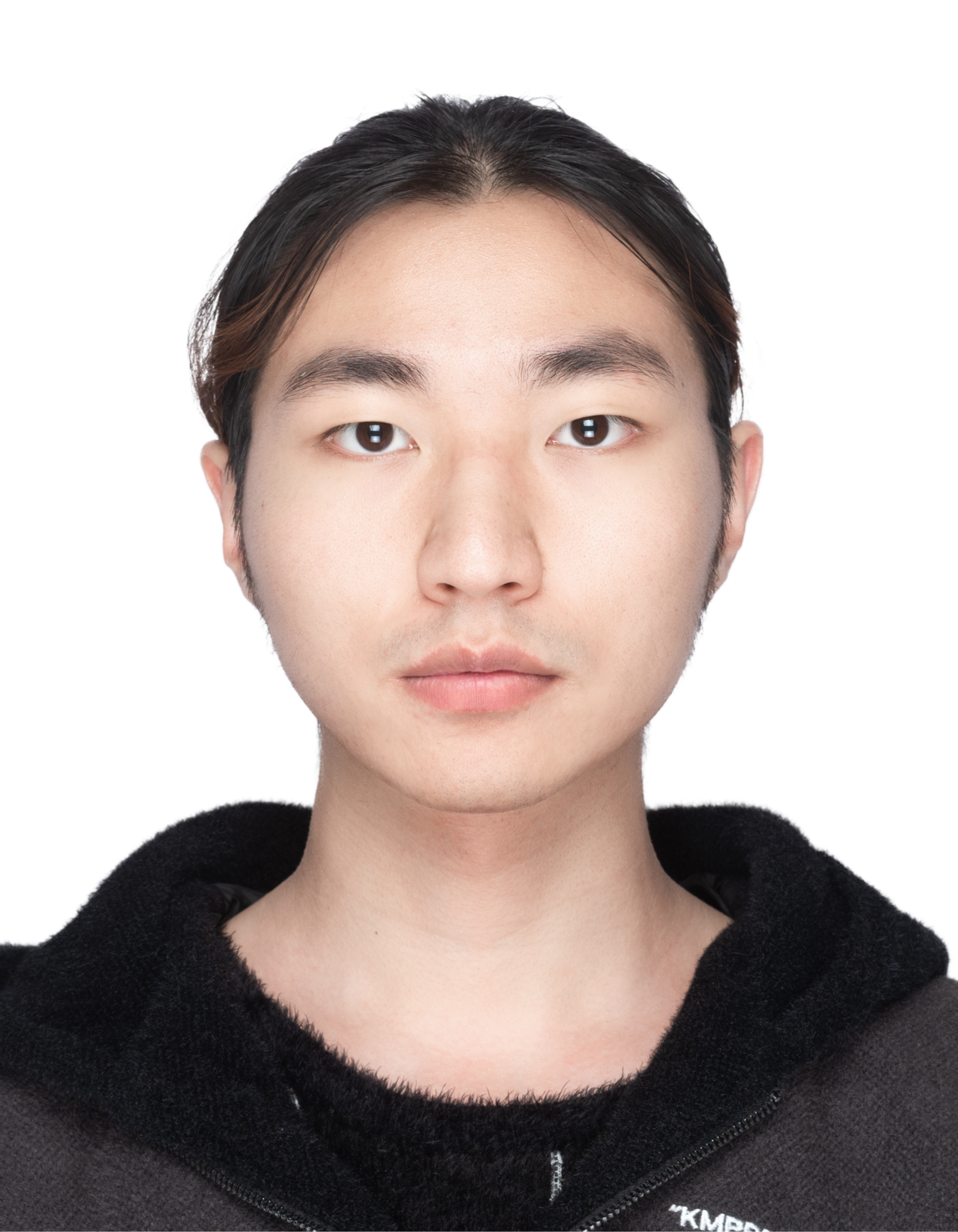}}]
{Liangxin Qian}
(Graduate Student Member, IEEE) received bachelor's and master's degrees in communication engineering from the University of Electronic Science and Technology of China, Chengdu, China, in 2019 and 2022, respectively. He is currently working toward his Ph.D. at the College of Computing and Data Science (CCDS), Nanyang Technological University, Singapore. His research interests include Metaverse, mobile edge computing, and secure communications.
\end{IEEEbiography}

\begin{IEEEbiography}
[{\includegraphics[width=1in,height=1.25in,clip,keepaspectratio]{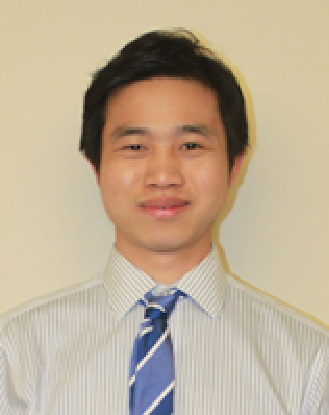}}]
{Jun Zhao} 
(S'10-M'15) is currently an Assistant Professor in the College of Computing and Data Science (CCDS) - formerly known as the School of Computer Science and Engineering (SCSE) at Nanyang Technological University (NTU) in Singapore. He received a PhD degree in May 2015 in Electrical and Computer Engineering from Carnegie Mellon University (CMU) in the USA, affiliating with CMU's renowned CyLab Security \& Privacy Institute, and a bachelor's degree in July 2010 from Shanghai Jiao Tong University in China. Before joining NTU first as a postdoc and then as a faculty member, he was a postdoc at Arizona State University as an Arizona Computing PostDoc Best Practices Fellow. He is currently an Editor of many journals: IEEE Transactions on Information Forensics and Security (TIFS), ACM Distributed Ledger Technologies (DLT), IEEE Internet of Things Journal (IoTJ), and Elsevier Future Generation Computer Systems (FGCS). He has received best editor awards from IEEE journals and best paper awards from conferences/journals. He was also selected as a Best Editor for IEEE Wireless Communications Letters (WCL) and a Best Editor for IEEE Transactions on Cognitive Communications and Networking (TCCN) in 2023.
\end{IEEEbiography}

\end{document}